\crefname{algocf}{algorithm}{algorithms}
\Crefname{algocf}{Algorithm}{Algorithms}
\newtheorem{theorem}{Theorem}[section]
\newtheorem{corollary}[theorem]{Corollary}
\newtheorem{proposition}[theorem]{Proposition}
\newtheorem{assumption}[theorem]{Assumption}
\newtheorem{example}[theorem]{Example}
\newtheorem{definition}[theorem]{Definition}
\newtheoremstyle{named}{}{}{\itshape}{}{\bfseries}{.}{.5em}{\thmnote{#3 }#1}
\theoremstyle{named}
\newtheorem*{namedtheorem}{Assumption}
\def\namedlabel#1#2{\begingroup
   \def\@currentlabel{#2}%
   \label{#1}\endgroup
}
\newcommand{\unif}{\mathsf{Unif}}
\newcommand{\cE}{\mathcal{E}}
\newcommand{\cC}{\mathcal{C}}
\newcommand{\cN}{\mathcal{N}}
\newcommand{\cO}{\mathcal{O}}
\newcommand{\cY}{\mathcal{Y}}
\newcommand{\cT}{\mathcal{T}}
\newcommand{\cI}{\mathcal{I}}
\newcommand{\cP}{\mathcal{P}}
\newcommand{\vX}{\mathbf{X}}
\newcommand{\vZ}{\mathbf{Z}}
\newcommand{\vP}{\mathbf{P}}
\newcommand{\vu}{\mathbf{u}}
\newcommand{\vy}{\mathbf{y}}
\newcommand{\cver}{c_{\text{Ver}}}
\newcommand{\spn}{\mathrm{span}}
\newcommand{\rnk}{\mathrm{rank}}
\newcommand{\vtheta}{\boldsymbol{\theta}}
\newcommand{\vtau}{\boldsymbol{\tau}}
\NewDocumentCommand{\pv}{m e{_} m}{%
  #1\IfValueT{#2}{_{#2}}^{(#3)}%
}
\newcommand{\range}[1]{[\![#1]\!]}
\newcommand{\E}{\mathbb E}
\newcommand{\abs}[1]{\left\lvert#1\right\rvert}
\newcommand{\norm}[1]{\left\lVert#1\right\rVert}
\newcommand{\ie}{{i.e.,~\xspace}}
\newcommand{\eg}{{e.g.,~\xspace}}
\newcommand{\xhdr}[1]{\vspace{1mm} \noindent{\bf #1}}
\title{Incentive-Aware Synthetic Control: Accurate\\ Counterfactual Estimation via Incentivized Exploration}
\author{\name Daniel Ngo\thanks{Equal contribution.} \thanks{Work was done during a visit at CMU.} \email tuandung.ngo@jpmchase.com \\
      \addr J.P. Morgan Chase AI Research
      \AND
      \name Keegan Harris$^*$ \email keegan.harris@berkeley.edu \\
      \addr University of California, Berkeley
      \AND
      \name Anish Agarwal \email aa5194@columbia.edu\\
      \addr Columbia University
      \AND
      \name Vasilis Syrgkanis \email vsyrgk@stanford.edu\\
      \addr Stanford University
      \AND
      \name Zhiwei Steven Wu \email zstevenwu@cmu.edu\\
      \addr Carnegie Mellon University
      }
\begin{document}

\maketitle
\begin{abstract}
    \emph{Synthetic control methods (SCMs)} are a canonical approach used to estimate treatment effects from panel data in the internet economy.
We shed light on a frequently overlooked but ubiquitous assumption made in SCMs of ``overlap'': a treated unit can be written as some combination---typically, convex or linear---of the units that remain under control.
We show that if units select their own interventions, and there is sufficiently large heterogeneity between units that prefer different interventions, overlap will not hold. 
We address this issue by proposing a recommender system which \emph{incentivizes} units with different preferences to take interventions they would not normally consider. 
Specifically, leveraging tools from information design and online learning, we propose an SCM that incentivizes exploration in panel data settings by providing incentive-compatible intervention recommendations to units. 
We establish this estimator obtains valid counterfactual estimates without the need for an \emph{a priori} overlap assumption.
We extend our results to the setting of synthetic interventions, where the goal is to produce counterfactual outcomes under all interventions, not just control. 
Finally, we provide two hypothesis tests for determining whether unit overlap holds for a given panel dataset.\looseness-1 
\end{abstract}

\section{Introduction}\label{sec:intro}
%
%
A ubiquitous task in the internet economy is to estimate counterfactual outcomes for a group of \emph{units} (e.g. people, products, geographic regions) under different \emph{interventions} (e.g. marketing campaigns, discount levels, legal regulations) over \emph{time}. 
Such multi-dimensional data are often referred to as \emph{panel data}, where the different units may be thought of as rows of a matrix and the time-steps as columns.
A prominent framework for counterfactual inference using panel data is that of \emph{synthetic control methods} (SCMs) \cite{abadie2003economic, abadie2010synthetic}, which aim to estimate the counterfactual outcome under \emph{control}, i.e. no treatment, for units that were treated.\looseness-1

In the SCM framework\footnote{See~\Cref{sec:our-setting} for a more in depth description of the SCM framework we consider.}, there is a \emph{pre-intervention} time period, during which all units are under \emph{control} (i.e. under no intervention).
After the pre-intervention period there is a \emph{post-intervention} time period, where each unit is given exactly one intervention from a set of possible interventions (which includes control).

The goal of SCMs is to estimate the counterfactual outcomes under control for a treated unit during the post-intervention period. 
To do this, SCMs broadly follow two steps:
\begin{enumerate}
    \item during the pre-intervention period, linearly regress the outcomes of a treated unit against the outcomes of the units that remain under control, i.e. the \emph{donor} units; 
    \item use the learned linear model and the outcomes of the donor units during the post-intervention period to produce a counterfactual trajectory of the treated unit under control.
\end{enumerate}

While SCMs were originally applied to public policy domains (e.g.~\cite{abadie2010synthetic, freire2018evaluating}), they are now often used to estimate counterfactuals in various internet domains, ranging from surge pricing for ride-sharing applications~\cite{uber.com_2019} to the effectiveness of advertisements across different geographic regions~\cite{EBay_Partner_Network_Help_Center_2022}. 
%
%

%

%
From the description of SCMs laid out above, it is clear that for the counterfactual estimate to be valid, the treated unit's potential outcomes under control should be linearly expressible by the observed outcomes of the donor units.\footnote{See~\Cref{sec:prelims} for a mathematical justification of this fact.}
To provide statistical guarantees on the performance of SCMs, such intuition has traditionally been made formal by making an \emph{overlap} assumption on the relationship between the outcomes of the donor units and the test unit; for instance of the following form:\looseness-1
\begin{namedtheorem}[Unit Overlap]
    \namedlabel{ass:overlap}{Unit Overlap Assumption}
    Denote the potential outcome for unit $i$ under intervention $d$ at time $t$ by $y_{i,t}^{(d)} \in \mathbb{R}$. 
    For a given unit $i$ and intervention $d$, there exists a set of weights $\mathbf{\omega}^{(i,d)} \in \mathbb{R}^{N_d}$ such that 
    \begin{equation*}
    \mathbb{E}[y_{i,t}^{(d)}] = \sum_{j \in \range{N_d}} \mathbf{\omega}^{(i,d)}[j] \cdot \mathbb{E}[y_{j,t}^{(d)}] \text{ for all } t \in \range{T}, 
    \end{equation*}
    where $T$ is the number of time-steps, $N_d$ is the number of donor units who have received intervention $d$. The expectation is taken with respect to the randomness in the unit outcomes.
\end{namedtheorem}
%
%
More broadly, previous works that provide theoretical guarantees for SCMs assume there exists some underlying mapping $\omega^{(i,d)}$ (e.g. linear or convex) through which the outcomes of a treated unit (unit $i$) may be expressed by the outcomes of the $N_d$ donor units.
Since such a condition appears to be necessary in order to make valid counterfactual inference, assumptions of this nature are ubiquitous when proving statistical guarantees about SCMs (e.g. \cite{abadie2010synthetic, amjad2018robust, agarwal2020synthetic, agarwal2023adaptive}).\footnote{We discuss the role of the~\ref{ass:overlap} and related conditions in the literature in~\Cref{sec:related}.}
\looseness-1

However, despite their ubiquity, such overlap assumptions may not hold in all domains in which one would like to apply SCMs. 
For example, consider a streaming service with two service plans (i.e. interventions): a yearly subscription (the treatment) and a pay-as-you-go model (the control). 
Suppose that all streamers (the units in this example) are initially signed up for the pay-as-you-go model, and may choose to switch to the subscription model after a trial period (the pre-intervention period).
Furthermore, suppose that the streaming service wants to determine the effectiveness of its subscription program on user engagement (the unit outcome of interest) over the next year (the post-intervention period).

Under this setting, the subpopulation of streamers who self-select the subscription plan is most likely those who believe they will consume large amounts of content on the platform. 
In contrast, those who pay-as-they-go most likely believe they will consume less content. 
%
%
%
The two subpopulations may have very different experiences under the two business plans (due to their differing viewing habits), leading to two different sets of potential outcomes that may not overlap, i.e., are not linearly expressible in terms of each other. 
This could make it difficult to draw conclusions about the counterfactual user engagement levels of one subpopulation using the realized engagement levels of the other.\looseness-1 

In this work, our goal is to leverage tools from information design to incentivize the exploration of different treatments by non-overlapping unit subpopulations in order to obtain valid counterfactual estimates using SCMs.
%
%
Specifically, we adopt tools and techniques from the literature on incentivizing exploration in multi-armed bandits (e.g.~\cite{mansour2019bayesian, slivkins2021exploration}) to show how the principal---the platform running the SCM---can leverage knowledge gained from previous interactions with similar units to persuade the current unit to take an intervention they would not normally take. 
The principal achieves this by sending a \emph{signal} or \emph{recommendation} to the unit with information about which intervention is best for them. 
The principal's \emph{recommendation policy} is designed to be \emph{incentive-compatible}, i.e. it is in the unit's best interests to follow the intervention recommended to them by the policy. 
In our streaming service example, incentive-compatible signalling may correspond to recommending a service plan for each user based on their usage of the platform in a way that guarantees that units are better off in expectation when purchasing the recommended service plan.
Our procedure ensures that the~\ref{ass:overlap} becomes satisfied over time, which enables the principal to do valid counterfactual inference using off-the-shelf SCMs after they have interacted with sufficiently many units.\looseness-1 
%

%
%
\xhdr{Overview of paper.} 
%
%
%
We introduce an online learning model where the principal observes panel data from a population of units with differing preferences.
We show that the~\ref{ass:overlap} is indeed a necessary condition to obtain valid counterfactual estimates in our setting. 
%
%
To achieve overlap, we introduce a recommender system for incentivizing exploration when there are two interventions: treatment and control. 
At a high level, we adapt the ``hidden exploration'' paradigm from incentivized exploration in bandits to the panel data setting: 
First, randomly divide units into ``exploit'' units and ``explore'' units. 
For all exploit units, recommend the intervention that maximizes their estimated expected utility, given the data seen so far. 
For every explore unit, recommend the intervention for which the principal would like the~\ref{ass:overlap} to be satisfied. 
Units are not aware of their exploit/explore designation by the algorithm, and the explore probability is chosen to be low enough such that the units have an incentive to follow the principal's recommendations. 
After recommendations have been made to sufficiently-many units, we show that with high probability, the~\ref{ass:overlap} is satisfied for all units and either intervention of our choosing. 
This enables the use of existing SCMs to obtain finite sample guarantees for counterfactual estimation for all units under control after running~\Cref{alg:ie_noise}.  
%
%
%

%

%
We extend our methods to the setting of synthetic interventions \cite{agarwal2020synthetic}, in which the principal may wish to estimate counterfactual outcomes under different treatments in addition to control (Appendix~\ref{sec:extension}). 
We also provide two hypothesis tests for checking whether the~\ref{ass:overlap} holds for a given treated unit and set of donor units.
Finally, we empirically evaluate the performance of our recommender system.
We find that it enables consistent counterfactual outcome estimation when the~\ref{ass:overlap} does not hold \emph{a priori} while existing SCMs alone do not.\looseness-1 
\subsection{Related work}\label{sec:related}
\xhdr{Causal inference and synthetic control methods.} 
Popular methods for counterfactual inference using panel data include SCMs~\cite{abadie2003economic, abadie2010synthetic}, difference-in-differences~\cite{harmless_econometrics, ashenfelter1984using, bertrand2004much}, and clustering-based methods~\cite{zhang2019quantile, dwivedi2022counterfactual}. 
Within the literature on synthetic control, our work builds off of the line of work on \emph{robust} synthetic control~\cite{amjad2018robust, amjad2019mrsc, agarwal2020model, agarwal2020synthetic, agarwal2023adaptive}, which assumes outcomes are generated via a \emph{latent factor model} (e.g.~\cite{chamberlain, liang_zeger, arellano}) and leverages \emph{principal component regression} (PCR)~\cite{jolliffe1982note, massy1965principal, agarwal2019robustness, agarwal2020model} to estimate unit counterfactual outcomes. 
%
%
Our work falls in the small-but-growing line of work at the intersection of SCMs and online learning~\cite{chen2023synthetic, farias2022synthetically, agarwal2023adaptive}, although we are the first to consider unit incentives in this setting. 
%
%
Particularly relevant to our work is the model used by \citet{agarwal2023adaptive}, which extends the finite sample guarantees from PCR in the panel data setting to online settings. 
While \citet{harris2022strategyproof} also consider incentives in SCMs (albeit in an offline setting), they only allow for a principal who can \emph{assign} interventions to units (e.g. can force compliance). 
As a result, the strategizing they consider is that of units who modify their pre-intervention outcomes in order to be assigned a more desirable intervention. 
In contrast, we consider a principal who cannot assign interventions to units but instead must \emph{persuade} units to take different interventions by providing them with incentive-compatible recommendations.\looseness-1 

\xhdr{Unit overlap and similar conditions.}
Our notion of unit overlap comes directly from the literature on robust synthetic control (see the previous paragraph for references). 
However, it is important to note that analogous assumptions appear more broadly in the literature on synthetic control. 
For example,~\citet{abadie2010synthetic} requires that the treated unit’s pre-treatment outcomes and covariates lie in the convex hull of the donor units pool. 
Beyond synthetic control, similar assumptions to the~\ref{ass:overlap} are also prevalent in the literature on other \emph{matching}-based estimators such as \emph{difference-in-differences}~\citep{donald2007inference} or \emph{clustering}-based methods~\citep{zhang2019quantile}. 
Finally, it is worth noting that in causal inference, the term ``overlap'' is also commonly used to refer to propensity score overlap, which assumes that the probability a unit receives treatment is bounded in $(0, 1)$. 
This is in contrast to the unit outcome/latent factor overlap that we consider.

\xhdr{Incentivized exploration.} 
%
Our work draws on techniques from the growing literature on incentivizing exploration (IE)~\citep{kremer2013wisdom, mansour2019bayesian, mansour2021bayesian, sellke2022price, immorlica2023incentivizing, sellke2023incentivizing, HuNSW22, NgoSS021}.
In IE, a principal interacts with a sequence of myopic agents over time. 
Each agent takes an action, and the principal can observe the outcome of each action. 
While each individual agent would prefer to take the action which appears to be utility-maximizing (i.e. to exploit), the principal would like to incentivize agents to explore different actions in order to benefit the population in aggregate. 
Motivation for IE includes online rating platforms and recommendation systems which rely on information collected by users to provide informed recommendations about various products and services. 
%
%

%
\section{Preliminaries}\label{sec:background}

\xhdr{Notation.} Subscripts are used to index the unit and time-step, while superscripts are reserved for interventions. 
We use $i$ to index units, $t$ to index time-steps, and $d$ to index interventions.
For $x \in \mathbb{N}$, we use the shorthand $\range{x} := \{1, 2, \ldots, x\}$ and $\range{x}_0 := \{0, 1, \ldots, x-1\}$. 
$\vy[i]$ denotes the $i$-th component of vector $\vy$, where indexing starts at 1.
We sometimes use the shorthand $T_1 := T - T_0$ for $T, T_0 \in \mathbb{N}_{>0}$ such that $T > T_0$.
$\Delta(\mathcal{X})$ denotes the set of possible probability distributions over $\mathcal{X}$. 
$\mathbf{0}_d$ is shorthand for the vector $[0, 0, \ldots, 0] \in \mathbb{R}^d$. 
Finally, we use the notation $a \wedge b := \min\{a,b\}$ and $a \vee b := \max\{a,b\}$.\looseness-1

\subsection{Our panel data setting}\label{sec:our-setting}
\begin{figure}[t]
    \centering
    \noindent \fbox{\parbox{0.95\columnwidth}{
    \textbf{Protocol: Incentivizing Exploration for Synthetic Control}

    For each unit $i \in \range{n}$:
    
    \begin{enumerate}[itemsep=0pt]
        \item Principal observes pre-intervention outcomes $\vy_{i,pre}$
        \item Principal recommends an intervention $\widehat{d}_i \in \{0, 1\}$
        \item Unit $i$ chooses intervention $d_i \in \{0, 1\}$
        \item Principal observes post-intervention outcomes $\vy_{i,post}^{(d_i)}$
    \end{enumerate}
    }}
    \caption{Summary of our setting.}
    \label{fig:our-model}
\end{figure}
We consider an online setting in which the principal interacts with a sequence of $n$ units one-by-one for $T$ time steps each. 
As is standard in the literature on synthetic control, we posit that there is a pre-intervention period of $T_0$ time-steps, for which each unit $i \in \range{n}$ is under the same intervention, i.e., under \emph{control}. 
%
%
After the pre-intervention period, the principal either recommends the \emph{treatment} to the unit or suggests that they remain under control (possibly using knowledge of the outcomes for units $j < i$). 
We denote the principal's \emph{recommended} intervention to unit $i$ by $\widehat{d}_i \in \{0, 1\}$, where $1$ denotes the treatment and $0$ the control. 
After receiving the recommendation, unit $i$ chooses an intervention $d_i \in \{0,1\}$ and remains under intervention $d_i$ for the remaining $T-T_0$ time-steps.\footnote{We envision a scenario where both interventions are available to units and they can choose between them.} 
We use $\vy_{i,pre} := [y_{i,1}^{(0)}, \ldots, y_{i,T_0}^{(0)}]^\top \in \mathbb{R}^{T_0}$ to denote unit $i$'s pre-treatment outcomes under control, and $\pv{\vy}{d}_{i,post} := [y_{i,T_0+1}^{(d)}, \ldots, y_{i,T}^{(d)}]^\top \in \mathbb{R}^{T-T_0}$ to refer to unit $i$'s post-intervention potential outcomes under intervention $d$. 
We denote the set of possible pre-treatment outcomes by $\cY_{pre}$. 
See~\Cref{fig:our-model} for a summary of our setting.

In order to impose structure on how unit outcomes are related for different units, time-steps, and interventions, we assume that outcomes are generated via the following \emph{latent factor model}, a popular assumption in the literature on synthetic control (see references in~\Cref{sec:related}). 
\begin{assumption}[Latent Factor Model]\label{ass:lfm}
Suppose the outcome for unit $i$ at time $t$ under treatment $d \in \{0,1\}$ takes the factorized form $\mathbb{E}[y_{i,t}^{(d)}] = \langle \vu_t^{(d)}, \mathbf{v}_i \rangle$, $y_{i,t}^{(d)} = \mathbb{E}[y_{i,t}^{(d)}] + \varepsilon_{i,t}^{(d)}$, 
where $\vu_t^{(d)} \in \mathbb{R}^r$ is a latent vector which depends only on the time-step $t$ and intervention $d$, $\mathbf{v}_i \in \mathbb{R}^r$ is a latent vector which only depends on unit $i$, and $\varepsilon_{i,t}^{(d)}$ is zero-mean sub-Gaussian random noise with variance at most $\sigma^2$. 
For simplicity, we assume that $|\E[\pv{y}{d}_{i,t}]| \leq 1, \; \forall \; i \in \range{n}, t \in \range{T}, d \in \{0,1\}$.
\end{assumption}
While we assume the existence of such structure in the data, we do \emph{not} assume that the principal knows or gets to observe $\vu_t^{(d)}$, $\mathbf{v}_i$, or $\varepsilon_{i,t}^{(d)}$. 
In the literature on synthetic control, the goal of the principal is to estimate unit-specific counterfactual outcomes under different interventions. 
Specifically, our target causal parameter is the (counterfactual) average expected post-intervention outcome.\footnote{Our results hold in the more general setting where the target causal parameter is \emph{any linear} function of the vector of expected post-intervention outcomes. We focus on the average as it is a common quantity of interest.}\looseness-1 
\begin{definition}(Average expected post-intervention outcome) The average expected post-intervention outcome of unit $i$ under intervention $d$ is $\E[\bar{y}_{i, post}^{(d)}] \coloneqq \frac{1}{T_1} \sum_{t=T_0 + 1}^{T} \E[y_{i,t}^{(d)}]$, 
%
    where the expectation is taken with respect to $(\epsilon_{i,t}^{(d)})_{T_0 < t \leq T}$. 
\end{definition}

\subsection{Recommendations and beliefs}
%
%
The data generated by the interaction between the principal and a unit $i$ may be characterized by the tuple $(\vy_{i, pre}, \widehat{d}_i, d_i, \pv{\vy}{d_i}_{i, post})$, where $\vy_{i,pre}$ are unit $i$'s pre-treatment outcomes, $\widehat{d}_i$ is the intervention recommended to unit $i$, $d_i$ is the intervention taken by unit $i$, and $\pv{\vy}{d_i}_{i, post}$ are unit $i$'s post-intervention outcomes under intervention $d_i$. 
%
\begin{definition}[Interaction History]
    The interaction history at unit $i$ is the sequence of outcomes, recommendations, and interventions for all units $j \in \range{i-1}$. Formally, $H_i := \{(\vy_{j,pre}, \widehat{d}_j, d_{j}, \pv{\vy}{d_{j}}_{j,post})\}_{j=1}^{i-1}$. 
    We denote the set of all possible histories at unit $i$ as $\mathcal{H}_i$.
\end{definition}
We refer to the way the principal assigns interventions to units as a \emph{recommendation policy}. 
%
%
\begin{definition}[Recommendation Policy]\label{def:policy}
    A recommendation policy $\pi_i: \mathcal{H}_i \times \mathcal{Y}_{pre} \rightarrow \Delta(\{0,1\})$ is a (possibly stochastic) mapping from histories and pre-treatment outcomes to interventions. 
\end{definition}
We assume that before the first unit arrives, the principal commits to a method for computing the sequence of recommendation policies $\{\pi_i\}_{i=1}^n$ which is fully known to all units.
Whenever $\pi_i$ is clear from the context, we use the shorthand $\widehat{d}_i = \pi_i(H_i, \vy_{i,pre})$ to denote the recommendation of policy $\pi_i$ to unit $i$.
%
%

In addition to having a corresponding latent factor, each unit has an associated \emph{belief} over the effectiveness of each intervention. 
This is made formal through the following definitions. 
\begin{definition}[Unit Belief]\label{ass:unit-beliefs}
     Unit $i$ has \emph{prior belief} $\cP_{\mathbf{v}_i}$, which is a joint distribution over potential post-intervention outcomes $\{\mathbb{E}[\bar{\vy}_{i,post}^{(0)}], \mathbb{E}[\bar{\vy}_{i,post}^{(1)}]\}$. 
    We use the shorthand $\mu_{v_i}^{(d)} := \E_{\cP_{\mathbf{v}_i}}[\bar{y}_{i, post}^{(d)}]$ to refer to a unit's expected average post-intervention outcome, with respect to their prior $\cP_{\mathbf{v}_i}$. 
\end{definition} 
\begin{definition}[Unit Type]\label{def:type}
    Unit $i$ is of type $\tau \in \{0, 1\}$ if $\tau = \arg\max_{d \in \{0,1\}} \mu_{v_i}^{(d)}$. 
    We denote the set of all units of type $\tau$ as $\cT^{(\tau)}$ and the dimension of the latent subspace spanned by type $\tau$ units as $r_{\tau} := \spn(\mathbf{v}_j : j \in \cT^{(\tau)})$.
\end{definition}
In other words, a unit's type is the intervention they would take according to their prior without any additional information about the effectiveness of each intervention. 
%
%
We consider the setting in which the possible latent factors associated with each type lie in \emph{mutually orthogonal} subspaces (i.e. the~\ref{ass:overlap} is \emph{not} satisfied).
See~\Cref{sec:prelims} for a simple theoretical example of such a data-generating process and~\Cref{sec:simulation} for an empirical example. 
If the principal could get the~\ref{ass:overlap} to be satisfied, they could use existing synthetic control methods off-the-shelf to obtain valid finite sample guarantees.

As is standard in the literature on IE, we assume that units are Bayesian-rational and each unit knows its place in the sequence of $n$ units (i.e., their index $i \in \range{n}$).
Thus given recommendation $\widehat d_i$, unit $i$ selects their intervention $d_i$ such that $d_i \in \arg \max_{d \in \{0,1\}} \; \E_{\cP_{\mathbf{v}_i}}[\bar{y}_{i, post}^{(d)} | \widehat d_i]$, i.e. they select the intervention $d_i$ which maximizes their utility in expectation over their prior $\cP_{\mathbf{v}_i}$, conditioned on receiving recommendation $\widehat d_i$ from the principal.
\begin{definition}[Bayesian Incentive-Compatibility]\label{def:BIC}
    We say that a recommendation $d$ is \emph{Bayesian incentive-compatible} (BIC) for unit $i$ if, conditional on receiving intervention recommendation $\widehat d_i = d$, unit $i$'s average expected post-intervention outcome under intervention $d$ is at least as large as their average expected post-intervention outcome under any other intervention.  
    Formally, $\E_{\cP_{\mathbf{v}_i}}[\bar{y}_{i, post}^{(d)} - \bar{y}_{i, post}^{(d')} | \widehat d_i = d] \geq 0 \quad \text{for every $d' \in \{0,1\}$}$. 
    %
    %
    A recommendation policy is BIC if the above condition holds for every intervention which is recommended with non-zero probability.\looseness-1 
\end{definition}

It is worth clarifying the behavioral interpretation of~\Cref{def:BIC}. 
Formally,~\Cref{def:BIC} assumes that units are Bayesian-rational, understand the recommendation algorithm, and correctly condition on the information conveyed by receiving a particular recommendation. 
This assumption provides a clean and standard benchmark that ensures recommendations are followed and allows us to focus on the statistical question of inducing overlap. 
At the same time, we do not view this assumption as requiring literal Bayesian reasoning by agents in practice. 
In many real-world systems, users may instead follow recommendations with high but imperfect probability, rely on coarse beliefs about the system’s reliability, or exhibit heterogeneous levels of sophistication. 
To this end, we explore a setting in which units follow recommendations imperfectly in~\Cref{appendix:simulations}, and we find that our recommendation schemes are empirically robust to unit non-compliance.
\section{The necessity of overlap}\label{sec:prelims}
%
Suppose that the principal would like to estimate counterfactuals for some unit under intervention $d$. 
We begin by showing that the~\ref{ass:overlap} must be satisfied for this unit under intervention $d$ in order for any algorithm to obtain consistent counterfactual estimates under the setting described in~\Cref{sec:background}. 
\begin{restatable}{theorem}{impossible}\label{thm:impossible}
    Fix a time horizon $T$, pre-intervention time period $T_0$, and number of donor units $n^{(0)}$. 
    For any algorithm used to estimate the average post-intervention outcome under control for a test unit, there exists a problem instance such that the produced estimate has constant error whenever the~\ref{ass:overlap} is not satisfied for the test unit under control, even as $T$, $T_0$, $n^{(0)} \rightarrow \infty$. 
    %
\end{restatable}
The proof of~\Cref{thm:impossible} proceeds by constructing a family of problem instances under which the~\ref{ass:overlap} is not satisfied, before showing that no algorithm can consistently estimate $\E[\Bar{y}_{n^{(0)}+1,post}^{(0)}]$ on all instances in the family. 
\begin{proof}
Consider the setting in which all type $0$ units have latent factor $\mathbf{v}_0 = [0 \; 1]^\top$ and all type $1$ units have latent factor $\mathbf{v}_1 = [1 \; 0]^\top$. 
Suppose that $\vu_t^{(0)} = [1 \; 0]^\top$ if $t \mod 2 = 0$ and $\vu_t^{(0)} = [0 \; 1]^\top$ if $t \mod 2 = 1$ if $t \leq T_0$. 
For all $t > T_0$, let $\vu_t^{(0)} = [H \; 1]^\top$ for some $H$ in range $[-c, c]$, where $c > 0$. 
Furthermore, suppose that there is no noise in the unit outcomes. 
Under this setting, the (expected) outcomes for type $0$ units under control are 
\begin{equation*}
    \E [y_{0,t}^{(0)}] = 
    \begin{cases}
        0 \; \; \text{if} \;\; t \mod 2 = 0, t \leq T_0\\
        1 \; \; \text{if} \;\; t \mod 2 = 1, t \leq T_0\\
        1 \; \; \text{if} \;\; t > T_0
    \end{cases}
\end{equation*}
and the outcomes for type $1$ units under control are 
\begin{equation*}
    \E [y_{1,t}^{(0)}] = 
    \begin{cases}
        1 \; \; \text{if} \;\; t \mod 2 = 0, t \leq T_0\\
        0 \; \; \text{if} \;\; t \mod 2 = 1, t \leq T_0\\
        H \; \; \text{if} \;\; t > T_0.
    \end{cases}
\end{equation*}

Suppose that $H \sim \unif[-c,c]$ and consider a principal who wants to estimate $\E [\Bar{y}_{1,post}^{(0)}]$ using just the set of outcomes $\E [\vy_{0,pre}]$, $\E [\vy_{0,post}^{(0)}]$, and $\E [\vy_{1,pre}]$. 
Since these outcomes do not contain any information about $H$, any estimator $\widehat{\E} [\Bar{y}_{1, post}^{(0)}]$ cannot be a function of $H$ and thus will be at least a constant distance away from the true average post-intervention outcome $\E [\Bar{y}_{1,post}^{(0)}]$ in expectation over $H$. 
That is,
\begin{equation*}
\begin{aligned}
    \E_{H} \abs{\E [\Bar{y}_{1,post}^{(0)}] - \widehat{\E} [\Bar{y}_{1, post}^{(0)}]}  &= \E_{H} \abs{H - \widehat{\E} [\Bar{y}_{1,post}^{(0)}]}\\
    &= \frac{1}{2c}(c^2 + (\widehat{\E} [\Bar{y}_{1, post}^{(0)}])^2)
    \geq \frac{c}{2}.
\end{aligned}
\end{equation*}
Note that our choice of $T$ and $T_0$ was arbitrary, and that estimation does not improve as the number of donor units increases since (1) there is no noise in the outcomes and (2) all type $0$ units have the same latent factor. 
Therefore any estimator for $\E[\Bar{y}_{1,post}^{(0)}]$ is inconsistent in expectation over $H$ as $n^{(0)},T,T_0 \rightarrow \infty$. 
This implies our desired result because if estimation error is constant in expectation over problem instances, there must exist at least one problem instance with constant estimation error. 
\end{proof}
\subsection{On estimating unit-specific counterfactual outcomes}\label{sec:pcr}
To infer the unit outcomes in the post-intervention period from outcomes in the pre-intervention period, we require the following \emph{linear span inclusion} condition to hold on the latent factors associated with the pre- and post-intervention time-steps:
\begin{assumption}[Linear Span Inclusion]\label{ass:span}
    For each intervention $d \in \{0,1\}$ and time $t > T_0$, we assume that $\vu_t^{(d)} \in \spn\{\vu_1^{(0)}, \vu_2^{(0)}, \ldots, \vu_{T_0}^{(0)}\}$. 
    %
\end{assumption}
Much like the~\ref{ass:overlap}, \Cref{ass:span} is ubiquitous in the literature on robust synthetic control (see, e.g. ~\cite{agarwal2020synthetic, agarwal2023adaptive, harris2022strategyproof}). 
In essence,~\Cref{ass:span} requires that the pre-intervention period be ``sufficiently informative'' about the post-intervention period. 
Indeed,~\Cref{ass:span} may be thought of as the ``transpose'' of the~\ref{ass:overlap}, where the ``overlap'' condition is required to hold on the time-steps instead of on the units.\footnote{Note that~\Cref{ass:span} is not a substitute for the~\ref{ass:overlap} by~\Cref{thm:impossible}.} 
To gain intuition for why such an assumption is necessary, consider the limiting case in which $\vu_t^{(0)} = \mathbf{0}_r$ for all $t \leq T_0$. 
Under such a setting, all expected unit outcomes in the pre-intervention time period will be $0$, regardless of the underlying unit latent factors, which makes it impossible for the principal to infer anything non-trivial about a unit's post-intervention outcomes from looking at their pre-intervention outcomes alone.

\paragraph{Vertical vs. horizontal regression.}
The following proposition is an important structural result for our main algorithm, as it will allow us to transform the problem of estimating unit counterfactual outcomes from a regression over donor units (\emph{vertical regression}) to a regression over pre-intervention outcomes (\emph{horizontal regression}). 
While~\Cref{prop:reward} is known in the literature on learning from panel data (\eg \citep{agarwal2023adaptive, harris2022strategyproof}), we state it here for completeness. 
\begin{proposition}\label{prop:reward}
    Under \Cref{ass:lfm} and~\ref{ass:span}, there exists a slope vector $\vtheta^{(d)} \in \mathbb{R}^{T_0}$ such that the average expected post-intervention outcome of unit $i$ under intervention $d$ is given by $\E[\bar{y}_{i, post}^{(d)}] = \frac{1}{T_1} \langle \vtheta^{(d)}, \E[y_{i, pre}] \rangle$. 
    %
    %
    %
    We assume that the principal has knowledge of a valid upper-bound $\Gamma$ on the $\ell_2$-norm of $\pv{\vtheta}{d}$, i.e. $\| \pv{\vtheta}{d} \|_2 \leq \Gamma$ for $d \in \{0, 1\}$.
\end{proposition}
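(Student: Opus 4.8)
The plan is to show that under the latent factor model (\Cref{ass:lfm}) and the linear span inclusion condition (\Cref{ass:span}), the post-intervention latent vectors $\vu_t^{(d)}$ can be written as linear combinations of the pre-intervention vectors $\vu_1^{(0)}, \ldots, \vu_{T_0}^{(0)}$, and then to push this representation through the inner product with $\vv_i$. First I would collect the pre-intervention latent vectors into a matrix $\mathbf{U}_{pre} := [\vu_1^{(0)} \, | \, \cdots \, | \, \vu_{T_0}^{(0)}] \in \mathbb{R}^{r \times T_0}$. By \Cref{ass:span}, for each $t > T_0$ and each $d \in \{0,1\}$ there exists a coefficient vector $\vphi_t^{(d)} \in \mathbb{R}^{T_0}$ with $\vu_t^{(d)} = \mathbf{U}_{pre}\, \vphi_t^{(d)}$ (one can take $\vphi_t^{(d)} = \mathbf{U}_{pre}^{\dagger} \vu_t^{(d)}$ using the pseudoinverse; any choice works since we only need existence).

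Next I would use the factorization $\E[y_{i,t}^{(d)}] = \langle \vu_t^{(d)}, \vv_i \rangle$. For $t > T_0$, substituting the span representation gives
\begin{equation*}
    \E[y_{i,t}^{(d)}] = \langle \mathbf{U}_{pre}\, \vphi_t^{(d)}, \vv_i \rangle = \langle \vphi_t^{(d)}, \mathbf{U}_{pre}^\top \vv_i \rangle = \sum_{s=1}^{T_0} \vphi_t^{(d)}[s] \, \langle \vu_s^{(0)}, \vv_i \rangle = \sum_{s=1}^{T_0} \vphi_t^{(d)}[s] \, \E[y_{i,s}^{(0)}],
\end{equation*}
where the last equality uses the latent factor model again for the pre-intervention outcomes. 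In vector form, $\E[\vy_{i,post}^{(d)}] = \mathbf{\Phi}^{(d)}\, \E[\vy_{i,pre}]$ where $\mathbf{\Phi}^{(d)}$ is the matrix whose rows are the $\vphi_t^{(d)\top}$. Averaging over the $T_1$ post-intervention time-steps, I would set $\vtheta^{(d)} := \sum_{t=T_0+1}^{T} \vphi_t^{(d)} \in \mathbb{R}^{T_0}$, so that
\begin{equation*}
    \E[\bar{y}_{i,post}^{(d)}] = \frac{1}{T_1} \sum_{t=T_0+1}^{T} \E[y_{i,t}^{(d)}] = \frac{1}{T_1} \sum_{t=T_0+1}^{T} \langle \vphi_t^{(d)}, \E[\vy_{i,pre}]\rangle = \frac{1}{T_1} \langle \vtheta^{(d)}, \E[\vy_{i,pre}]\rangle,
\end{equation*}
which is exactly the claimed identity. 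Crucially, $\vtheta^{(d)}$ depends only on the latent time vectors (through the $\vphi_t^{(d)}$), not on the unit $i$, so the same slope vector works for every unit.

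There is no real obstacle here — the result is essentially a change of basis, and the only mild subtlety is that $\vphi_t^{(d)}$ is not unique when $\mathbf{U}_{pre}$ does not have full column rank; I would just remark that existence is all that is needed and fix a canonical choice via the pseudoinverse if desired. The final sentence of the proposition (that the principal knows an upper bound $\Gamma$ on $\|\vtheta^{(d)}\|_2$) is an assumption, not something to be proved, so nothing further is required there. I would close by noting that this reduction is what lets the algorithm replace vertical regression over donor units with horizontal regression over a single unit's pre-intervention outcomes.
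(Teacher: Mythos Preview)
Your argument is correct and is exactly the standard derivation one would give. However, the paper does not actually supply a proof of this proposition: it simply states the result and remarks that it ``is known in the literature on learning from panel data'' and is included ``for completeness.'' So there is no paper proof to compare against; your write-up would serve perfectly well as the omitted proof.
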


%

%
Given historical data of the form $\{\vy_{j,pre}, \vy_{j,post}^{(d_j)}\}_{j=1}^{i-1}$, we can estimate $\vtheta^{(d)}$ as $\widehat{\vtheta}^{(d)}_{i}$ using some estimation procedure, which in turn allows us to estimate $\E[\bar{y}_{i, post}^{(d)}]$ as $\widehat \E[\bar{y}_{i,post}^{(d)}] := \frac{1}{T_1} \langle \widehat{\vtheta}^{(d)}_i, y_{i,pre} \rangle$ for $d \in \{0, 1\}$.
While this is not a synthetic control method in the traditional sense, as it does not construct a ``synthetic control'' trajectory from the trajectories of the donor units, it turns out that the point estimate it produces (and assumptions it requires) are equivalent to that of a vertical regression-based SCM in many settings (including ours). 
See~\citet{shen2022same} for more background on the similarities and differences between using horizontal regression-based approaches (like ours) and vertical regression-based approaches (like traditional SCMs) in panel data settings. 
We choose to use a horizontal regression-based approach because it is easier to work with in our setting, where units arrive sequentially (recall~\Cref{fig:our-model}). 
Importantly, we would still need the~\ref{ass:overlap} to be satisfied for intervention $d$ if we used a vertical regression-based method.\looseness-1

For the reader familiar with bandit algorithms, it may be useful to draw the following connection between our setting and that of linear contextual bandits. 
In particular, it is sometimes helpful to think of $\E[y_{i, pre}]$ as the ``context'' of unit $i$, and $\E[\bar{y}_{i, post}^{(d)}]$ as the principal's expected reward of assigning intervention $d$ given $\E[y_{i, pre}]$. 
Since we observe $y_{i, pre}$ instead of $\E[y_{i, pre}]$ we cannot apply linear contextual bandit algorithms to our panel data setting out-of-the-box. 
However as we will show in~\Cref{sec:main}, one can combine ideas from incentivizing exploration in contextual bandits with principled ways of handling the noise in the pre-intervention outcomes to incentivize exploration in panel data settings. 

\section{Incentivized exploration for synthetic control}\label{sec:main}
\begin{algorithm}[t]
    \SetAlgoNoLine
    \KwIn{First stage length $N_0$, batch size $L$, number of batches $B$, failure probability $\delta$, gap $C \in (0,1)$} 
    Provide no recommendation to first $N_0$ units. \\
    \For{batch $b = 1, 2, \ldots, B$}
    {
        Select an explore index $i_b \in \range{L}$ uniformly at random.\\
        \For{$j = 1, 2, \ldots, L$}{
            \uIf{$j = i_b$}{
                Recommend intervention $\widehat{d}_{N_0 + (b-1)\cdot L + j} = 0$ to unit $N_0 + (b-1)\cdot L + j$.
                }
            \Else{
                \uIf{$\mu^{(0,1, l)} - \widehat{\E}[\Bar{y}_{j, post}^{(1)}] \geq C$}{
                    Recommend intervention $\widehat{d}_{N_0 + (b-1)\cdot L + j} = 0$ to unit $N_0 + (b-1)\cdot L + j$.
                    }
                \Else{
                    Recommend intervention $\widehat{d}_{N_0 + (b-1)\cdot L + j} = 1$ to unit $N_0 + (b-1)\cdot L + j$.
                    }
                    
                }
            }
        }
    \caption{Incentivizing Exploration for Synthetic Control: Type $1$ units}
\label{alg:ie_noise}
\end{algorithm}
In this section, we turn our focus to incentivizing type $1$ units (i.e. units who \emph{a priori} prefer the treatment) to remain under control in the post-intervention period. 
The methods we present may also be applied to incentivize type $0$ units (i.e. units who \emph{a priori} prefer the control) to take the treatment. 
We focus on incentivizing control amongst type $1$ units in order to be in line with the literature on synthetic control, which aims to estimate counterfactual unit outcomes under control. 

The goal of the principal is to design a sequence of recommendation policies that convinces enough units of type $1$ to select the control in the post-treatment period, such that the~\ref{ass:overlap} is satisfied for both interventions for all units of type $1$. 
At a high level, the principal is able to incentivize units to explore because they have more information than any one particular unit (as they view the realized history $H_i$ before making a recommendation), and so they can selectively reveal information in order to persuade units to take interventions which they would not normally take. 
Our algorithm (\Cref{alg:ie_noise}) is inspired by the ``detail free'' algorithm for incentivizing exploration in multi-armed bandits in~\citet{mansour2019bayesian}. 
Furthermore, through our discussion in~\Cref{sec:prelims}, one may view~\Cref{alg:ie_noise} as an algorithm for incentivizing exploration in linear contextual bandit settings with measurement error in the context. 

The recommendation policy in \Cref{alg:ie_noise} is split into two stages. 
In the first stage, the principal provides no recommendations to the first $N_0$ units. 
Left to their own devices, these units take their preferred intervention according to their prior belief: type $0$ units take control, and type $1$ units take the treatment. 
We choose $N_0$ such that it is large enough for the~\ref{ass:overlap} to be satisfied for all units of type $1$ \underline{under the treatment} with high probability.\footnote{The~\ref{ass:overlap} will always be satisfied w.h.p. for type $1$ units under the treatment after seeing sufficiently-many type $1$ units. The hard part is convincing enough type $1$ units to remain under control so that the~\ref{ass:overlap} is also satisfied for control.}\looseness-1

In the second stage, we use the initial data collected during the first stage to construct an estimator of the average expected post-intervention outcome for type $1$ units under treatment using PCR, as described in~\Cref{sec:pcr}. 
By dividing the total number of units in the second stage into phases of $L$ rounds each, the principal can randomly ``hide'' one \emph{explore} recommendation amongst $L-1$ \emph{exploit} recommendations. 
When the principal sends an exploit recommendation to unit $i$, they recommend the intervention which would result in the highest estimated average post-intervention outcome for unit $i$, where unit $i$'s post-intervention outcomes under treatment are estimated using the data collected during the first stage. 
On the other hand, the principal recommends that unit $i$ take the control whenever they send an explore recommendation. 
%

Thus under~\Cref{alg:ie_noise}, if a type $1$ unit receives a recommendation to take the treatment, they will always follow the recommendation since they can infer they must have received an exploit recommendation. 
However if a type $1$ unit receives a recommendation to take the control, they will be unsure if they have received an explore or an exploit recommendation, and can therefore be incentivized to follow the recommendation as long as $L$ is set to be large enough (i.e. the probability of an explore recommendation is sufficiently low). 
This works because the sender's estimate of $\E[\Bar{y}_{i,post}^{(1)}]$ is more informed than unit $i$'s prior belief, due to the history of outcomes $H_i$ that the principal has observed. 

%
%
%

Our algorithm requires that bounds on various \emph{aggregate} statistics about the underlying unit population are common knowledge amongst the principal and all units. 
Such assumptions are standard in the literature on incentivized exploration. 

We are now ready to present our main result: a way to initialize~\Cref{alg:ie_noise} to ensure that the~\ref{ass:overlap} is satisfied with high probability. 
%
%
\begin{theorem}[Informal; detailed version in~\Cref{lem:bic-two-types}]\label{thm:ie-noise}
    %
    %
    Under~\Cref{ass:noise-knowledge}, if the number of initial units $N_0$ and the batch size $L$ are chosen to be sufficiently large, 
    %
    %
    then~\Cref{alg:ie_noise} satisfies the BIC property (\Cref{def:BIC}).\looseness-1 
    %
    
    Moreover, if the number of batches $B$ is chosen to be sufficiently large, then the conditions of the~\ref{ass:overlap} will be satisfied simultaneously for all type $1$ units under control w.h.p. 
\end{theorem}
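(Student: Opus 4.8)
The plan is to prove the two claims separately. For the BIC claim, I would argue case-by-case over the (type, recommendation) pairs, the only substantial case being a type~$1$ unit recommended control; there I would imitate the ``detail free'' analysis of~\citet{mansour2015bayesian}, expressing the unit's posterior after a control recommendation as a mixture of its prior (the explore branch) and an ``exploit-conditional'' distribution, and showing that for $L$ large the exploit component---on which control is genuinely better---dominates. For the overlap claim, I would track the type~$1$ units that end up under control across the $B$ batches and show that their latent factors eventually span the type~$1$ latent subspace, which is exactly what the~\ref{ass:overlap} for $d=0$ requires.

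First I would fix a good event $\cE$. Using \Cref{prop:reward} to rewrite $\E[\bar{y}_{i,post}^{(1)}] = \tfrac{1}{T_1}\langle \vtheta^{(1)}, \E[\vy_{i,pre}]\rangle$ (valid under \Cref{ass:lfm} and \Cref{ass:span}), the second-stage estimator becomes a horizontal PCR fit on the first-stage data, and the finite-sample PCR guarantee from \Cref{sec:pcr} gives that, for $N_0$ large, with probability $\ge 1-\delta_1$ we have $\bigl|\widehat{\E}[\bar{y}_{i,post}^{(1)}] - \E[\bar{y}_{i,post}^{(1)}]\bigr| \le \varepsilon$ for all second-stage units, with $\varepsilon \to 0$ as $N_0 \to \infty$; I would take $N_0$ large enough that also $\varepsilon < C/4$ and (by part~(4) of \Cref{ass:noise-knowledge}) the~\ref{ass:overlap} holds under treatment for type~$1$ units with probability $\ge 1-\delta$. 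On $\cE$, whenever the exploit test $\mu^{(0,1,l)} - \widehat{\E}[\bar{y}_{i,post}^{(1)}] \ge C$ fires we get $\E[\bar{y}_{i,post}^{(1)}] \le \mu^{(0,1,l)} - C + \varepsilon \le \mu_{v_i}^{(0)} - C + \varepsilon$ by part~(2) of \Cref{ass:noise-knowledge}, so the exploit-conditional component has $\E[\bar{y}_{i,post}^{(0)} - \bar{y}_{i,post}^{(1)}] \ge C/2$ (using $|\E[y_{i,t}^{(d)}]| \le 1$ to bound the off-$\cE$ contribution and that the noise is mean zero), while the prior component has $\E[\bar{y}_{i,post}^{(0)} - \bar{y}_{i,post}^{(1)}] \ge \mu^{(0,1,l)} - \mu^{(1,1,h)} =: -M$. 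Writing $\rho_i$ for the probability the exploit test fires, Bayes' rule puts mixture weight $\alpha_i = 1/(1+(L-1)\rho_i)$ on the prior component; part~(3) of \Cref{ass:noise-knowledge} (the ``fighting chance'') together with PCR accuracy gives $\rho_i \ge \rho_{\min}$ for a constant $\rho_{\min} > 0$, so $\alpha_i \le 1/(1 + (L-1)\rho_{\min})$, and then
\begin{equation*}
\E_{\cP_{\mathbf{v}_i}}\!\left[\bar{y}_{i,post}^{(0)} - \bar{y}_{i,post}^{(1)} \,\middle|\, \widehat d_i = 0\right] \ \ge\ (1-\alpha_i)\frac{C}{2} - \alpha_i M \ \ge\ 0
\end{equation*}
once $L \ge 1 + \rho_{\min}^{-1}(1 + 2M/C)$, giving \Cref{def:BIC} in this case. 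A type~$1$ unit recommended treatment got an exploit recommendation that only reinforces its prior; a type~$0$ unit recommended control already prefers control; and the last case (type~$0$ recommended treatment) I would handle the same way using the bounds in part~(2) of \Cref{ass:noise-knowledge}. (The precise constants are collected in \Cref{lem:bic-two-types}.)

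Next, for the overlap claim, I would condition on $\cE$ and on BIC, so that every type~$1$ explore unit in fact takes control. In each batch the explore index is uniform over a block of $L$ consecutive units, hence of type~$1$ with probability at least $p_L$ (part~(1) of \Cref{ass:noise-knowledge}), so a Chernoff bound shows that for $B$ large at least $\tfrac12 p_L B$ of the explore units are type~$1$ units placed under control, except with probability $\delta_2$. Invoking non-degeneracy of the type~$1$ latent-factor generating process (a lower bound on the minimum eigenvalue of the type~$1$ factor second moment, as in the detailed hypotheses), a matrix-concentration argument shows that once $\tfrac12 p_L B$ such factors are collected they span the whole type~$1$ subspace (of dimension $r_1$), except with probability $\delta_3$ for $B$ large. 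On that event every type~$1$ unit's latent factor lies in $\spn\{\mathbf{v}_j : d_j = 0\}$, so by \Cref{ass:lfm} the~\ref{ass:overlap} holds for $d=0$ for all type~$1$ units at once---and it stays satisfied afterwards, since enlarging the donor pool only enlarges the span. A union bound over the three failure events (take $\delta_1 = \delta_2 = \delta_3 = \delta/3$), together with the stated choices of $N_0$, $L$, $B$, finishes the proof.

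The main obstacle is the BIC step. The event the unit conditions on in the exploit branch is defined through $\widehat{\E}[\bar{y}_{i,post}^{(1)}]$, which is correlated with the very outcomes $(\bar{y}_{i,post}^{(0)}, \bar{y}_{i,post}^{(1)})$ the unit reasons about (through $\vy_{i,pre}$ and the shared first-stage noise), so one has to push the PCR error bound through this conditioning; simultaneously, one must convert the ``fighting chance'' condition into a uniform lower bound $\rho_i \ge \rho_{\min}$ so that the explore-branch weight $\alpha_i$ can be made negligible. Balancing these two requirements is exactly what pins down how large $N_0$ and $L$ have to be taken.
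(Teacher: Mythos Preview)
Your proposal is correct and follows essentially the same route as the paper. Both arguments decompose the BIC condition for a type~$1$ unit recommended control into the explore and exploit branches, invoke the PCR finite-sample bound on $\widehat{\E}[\bar{y}_{i,post}^{(1)}]$ to show the exploit-conditional gap is at least $C$ minus slack, use the ``fighting chance'' in part~(3) of \Cref{ass:noise-knowledge} to lower-bound the probability the exploit event occurs, and solve for the threshold on $L$; your mixture-weight formulation and the paper's direct expansion of $\E_{\cP_{\mathbf{v}_i}}[\bar{y}_{i,post}^{(0)}-\bar{y}_{i,post}^{(1)}\mid \widehat d_i=0]\Pr[\widehat d_i=0]$ are algebraically equivalent. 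Two small remarks: (i) the paper's formal statement (\Cref{lem:bic-two-types}) only claims BIC for type~$1$ units, so your case analysis for type~$0$ units is extra (and the ``type~$0$ recommended treatment'' case is not actually symmetric, since the exploit test is calibrated to type~$1$ priors); (ii) for the overlap claim the paper simply posits that $B$ is large enough for the rank condition to hold and cites a Vershynin-style singular-value bound, whereas you spell out the Chernoff-plus-matrix-concentration route---your version is more explicit but otherwise the same idea.
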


\textit{Summary of~\Cref{ass:noise-knowledge}}:~\Cref{ass:noise-knowledge} posits that the principal and units know valid upper- and lower-bounds on various population-level statistics about the underlying unit population (\eg the fraction of units of each type, the prior mean for each type and intervention). 
As these are population-level statistics, in practice, the principal may be able to learn these bounds from samples of the data and broadcast them to the unit population. 
See~\Cref{sec:proof} for more details. 

\begin{proof}[Proof Sketch]
    See \Cref{appendix:two-types-bic-proof} for complete proof details. 
    At a high level, the analysis follows by expressing the compliance condition for type $1$ units as different cases depending on the principal's recommendation. 
    In particular, a type $1$ unit could receive recommendation $\widehat d_i = 0$ for two reasons: 
    %
    (1) Under event $\xi_{C,i}$ when control is indeed the better intervention according to the prior and the observed outcomes of previous units, or 
    (2) when the unit is randomly selected as an explore unit. 
    %
    %
    Using bounds on the probabilities of these two events occurring, we can derive a condition on the minimum phase length $L$ such that a unit's expected gain from exploiting (when the event $\xi_{C,i}$ happens) exceeds the expected (unit) loss from exploring. 
    %
    %
    We then further simplify the condition on the phase length $L$ so that it is computable by the principal by leveraging existing finite sample guarantees for PCR using the samples collected in the first stage when no recommendations are given.\looseness-1
\end{proof}
\Cref{thm:ie-noise} says that after running~\Cref{alg:ie_noise} with optimally-chosen parameters, the~\ref{ass:overlap} will be satisfied for all type $1$ units \textit{under both interventions} with high probability. 
Therefore after running~\Cref{alg:ie_noise} on sufficiently-many units, the principal can use off-the-shelf SCMs (e.g.~\cite{agarwal2020synthetic, agarwal2023adaptive}) to obtain counterfactual estimates with valid confidence intervals for all type $1$ units under control (w.h.p.). 
Note that all parameters required to run~\Cref{alg:ie_noise} are computable by the principal in polynomial time. 
See~\Cref{ex:instantiation} for a concrete instantiation of~\Cref{alg:ie_noise}.\looseness-1

\section{Testing whether the Unit Overlap Assumption holds}\label{sec:hyp}


%
So far, our focus has been on designing algorithms to incentivize subpopulations of units to take interventions for which the~\ref{ass:overlap} does not initially hold. 
In this section we study the complementary problem of designing a procedure for determining whether the~\ref{ass:overlap} holds for a given test unit and set of donor units. 
In particular, we provide two simple hypothesis tests for deciding whether the~\ref{ass:overlap} holds, by leveraging existing finite sample guarantees for PCR. 
Our first hypothesis test is non-asymptotic, and thus is valid under any amount of data. 
While our second hypothesis test is asymptotic and therefore valid only in the limit, we anticipate that it may be of more practical use. 
\subsection{A non-asymptotic hypothesis test}\label{sec:non-ass}

Our first hypothesis test proceeds as follows: 
Using the PCR techniques described in~\Cref{app:pcr}, we learn a linear relationship between the first $T_0/2$ time-steps in the pre-intervention time period and the average outcome in the second half of the pre-intervention time period using data from the donor units. 
Using this learned relationship, we then compare our estimate for the test unit with their \emph{true} average outcome in the second half of the pre-intervention time period. 
If the difference between the two is larger than the confidence interval given by PCR (plus an additional term to account for the noise), we can conclude that the~\ref{ass:overlap} is not satisfied with high probability. 

In order for our hypothesis test to be valid we require the following assumption, which is analogous to~\Cref{ass:span}. 
\begin{assumption}[Linear Span Inclusion, revisited]\label{ass:lsi2}
    For every time-step $t$ such that $T_0 / 2 < t \leq T_0$, we assume that $\vu_t^{(0)} \in \spn\{\vu_1^{(0)}, \ldots, \vu_{T_0 / 2}^{(0)}\}$. 
    %
\end{assumption}
While~\Cref{ass:span} requires that the post-intervention latent factors for any intervention fall within the linear span of the pre-intervention latent factors,~\Cref{ass:lsi2} requires that the second half of the pre-intervention latent factors fall within the linear span of the first half. 
This assumption allows us to obtain valid confidence bounds when regressing the second half of the pre-intervention outcomes on the first half. 
Let 
%
%
$y_{i,pre'} := [y_{i,1}^{(0)}, \ldots, y_{i,T_0/2}^{(0)}]$, $y_{i,pre''} := [y_{i,T_0/2 + 1}^{(0)}, \ldots, y_{i,T_0}^{(0)}]$, and $\Bar{y}_{i,pre''} := \frac{2}{T_0} \sum_{t=T_0/2 + 1}^{T_0} y_{i,t}^{(0)}$. 
Under~\Cref{ass:lsi2} we can estimate $\mathbb{E} [\Bar{y}_{i,pre''}]$ as $\widehat{\mathbb{E}} [\Bar{y}_{i,pre''}] := \langle \widehat{\vtheta}_{i}, y_{i,pre'} \rangle$, where $\widehat{\vtheta}_{i}$ is computed using the PCR techniques in~\Cref{app:pcr}. 

We are now ready to introduce our hypothesis test. 
In what follows, we will leverage the high-probability confidence interval of~\citet{agarwal2023adaptive} (\Cref{thm:adaptive-pcr}). 
Under~\Cref{ass:lsi2}, we can apply~\Cref{thm:adaptive-pcr} out-of-the-box to estimate $\mathbb{E} [\Bar{y}_{i,pre''}]$ using the first $T_0/2$ time-steps as the pre-intervention period and the next $T_0/2$ time-steps as the post-intervention period.

\xhdr{Hypothesis test.}
Consider the hypothesis that~\ref{ass:overlap} is not satisfied for test unit $n$.\looseness-1 
\begin{enumerate}[labelwidth=1.5em, labelsep=0.5em, leftmargin=1.5em, itemsep=0pt]
    \item Using donor units $\cI$ and test unit $n$, compute $\widehat{\E}[\Bar{y}_{n,pre''}]$ via the PCR method of~\Cref{app:pcr}.
    \item Let $\alpha(\delta)$ be the confidence bound given in~\Cref{thm:pcr-concentration} s.t. $|\mathbb{E} [\Bar{y}_{i,pre''}] - \widehat{\E}[\Bar{y}_{n,pre''}]| \leq \alpha(\delta)$\looseness-1 
    %
    %
    with probability at least $1 - \delta$. 
    Specifically, compute $\alpha(\delta) + 2\sigma \sqrt{\frac{\log(1/\delta)}{T_0}}$ for confidence level $\delta$ by applying~\Cref{thm:adaptive-pcr} with the first $T_0/2$ time-steps as the pre-intervention period and the next $T_0/2$ time-steps as the post-intervention period. 
    \item If $|\widehat{\E}[\Bar{y}_{n,pre''}] - \Bar{y}_{n,pre''}| > \alpha(\delta) + 2\sigma \sqrt{\frac{\log(1/\delta)}{T_0}}$ then accept the hypothesis. 
    Otherwise, reject. 
\end{enumerate}
\begin{restatable}{theorem}{test}\label{thm:test}
    Under~\Cref{ass:lsi2}, if the~\ref{ass:overlap} is satisfied for unit $n$, then $|\widehat{\E}[\Bar{y}_{n,pre''}] - \Bar{y}_{n,pre''}| \leq \alpha(\delta) + 2\sigma \sqrt{\frac{\log(1/\delta)}{T_0}}$
    %
    %
    with probability $1 - \mathcal{O}(\delta)$, where $\alpha(\delta)$ is the high-probability confidence interval which is defined in~\Cref{thm:adaptive-pcr} when using the first $T_0/2$ time-steps as the pre-intervention period and the next $T_0/2$ time-steps as the post-intervention period. 
\end{restatable}
Therefore we can conclude that if $|\widehat{\E} [\Bar{y}_{i,pre''}] - \Bar{y}_{i,pre''}| > \alpha(\delta) + 2\sigma \sqrt{\frac{\log(1/\delta)}{T_0}}$, then the~\ref{ass:overlap} will not be satisfied for unit $i$ w.h.p.
\subsection{An asymptotic hypothesis test}\label{sec:ass}
Next we present a hypothesis test which, while only valid in the limit, may be of more practical use when compared to the non-asymptotic hypothesis test of~\Cref{sec:non-ass}. 
At a high level, our asymptotic hypothesis test leverages~\Cref{ass:lsi2} and the guarantees for synthetic interventions (SI) in~\citet{agarwal2020synthetic} to determine whether the~\ref{ass:overlap} holds for a given test unit $n$. 
In particular, their results imply that under~\Cref{ass:lsi2} and the~\ref{ass:overlap}, a rescaled version of $\widehat{\E} [\Bar{y}_{n,pre''}] - \E [\Bar{y}_{n,pre''}]$ converges in distribution to the standard normal distribution as $|\cI|, T_0, T_1 \rightarrow \infty$. 
While the rescaling amount depends on quantities which are not computable, a relatively simple application of the continuous mapping theorem and Slutsky's theorem imply that we can compute a valid test statistic which converges in distribution to the standard normal.\looseness-1 

\xhdr{Notation.}
In what follows, let $\omega^{(n,0)} \in \mathbb{R}^{|\cI|}$ be the linear relationship between the test unit $n$ and donor units $\cI$ which is known to exist under the~\ref{ass:overlap}, i.e. $\E[y_{n,t}^{(0)}] = \sum_{i \in \cI} \omega^{(n,0)}[i] \cdot \E[y_{i,t}^{(0)}]$.  
%
%
Let $\widetilde{\omega}^{(n,0)}$ be the projection of $\omega^{(n,0)}$ onto the subspace spanned by the latent factors of the $\cI$ test units. 
Let $\widehat{\omega}^{(n,0)}$ be the estimate of $\widetilde{\omega}^{(n,0)}$ and $\widehat{\sigma}$ be the estimate of the standard deviation of the noise terms $\{\epsilon_{i,t} \; : \; i \in \cI, t \in \range{T}\}$ given by SC. 
Finally, let $z_{\alpha}$ be the z score at significance level $\alpha \in (0,1)$, i.e. $\alpha = \mathbb{P}(x \leq z_{\alpha})$, where $x \sim \mathcal{N}(0,1)$.

\xhdr{Hypothesis test.}
Consider the hypothesis that the~\ref{ass:overlap} is not satisfied for a test unit $n$. 
\begin{enumerate}[labelwidth=1.5em, labelsep=0.5em, leftmargin=1.5em, itemsep=0pt]
    \item Using donor units $\cI$ and test unit $n$, compute $\widehat{\E} [\Bar{y}_{n,pre''}]$ and $\widehat{\omega}^{(n,0)}$ using SI.
    \item Accept the hypothesis only if $\frac{\sqrt{T_1}}{\widehat{\sigma} \|\widehat{\omega}^{(n,0)}\|_2}|\widehat{\E} [\Bar{y}_{n,pre''}] - \Bar{y}_{n,pre''}| > z_{0.95}$. 
    %
    %
\end{enumerate}
\begin{theorem}[Informal; detailed version in~\Cref{thm:test-formal}]
    Under~\Cref{ass:lsi2}, if $\|\widetilde{\omega}^{(n,0)}\|_2$ is sufficiently large and $\widehat{\omega}^{(n,0)} \rightarrow \widetilde{\omega}^{(n,0)}$ at a sufficiently fast rate, then as $|\cI|, T_0, T_1 \rightarrow \infty$ the Asymptotic Hypothesis Test falsely accepts the hypothesis with probability at most $5\%$. 
    \label{thm:test-informal}
\end{theorem}
\section{Numerical simulations}\label{sec:simulation}
In this section, we complement our theoretical results with a numerical comparison to standard SCMs which do not take incentives into consideration. 
%
%

%
%
%
%
\begin{figure}[t]
    \centering
    \begin{subfigure}[b]{0.48\columnwidth}
        \centering
        \includegraphics[width=\textwidth]{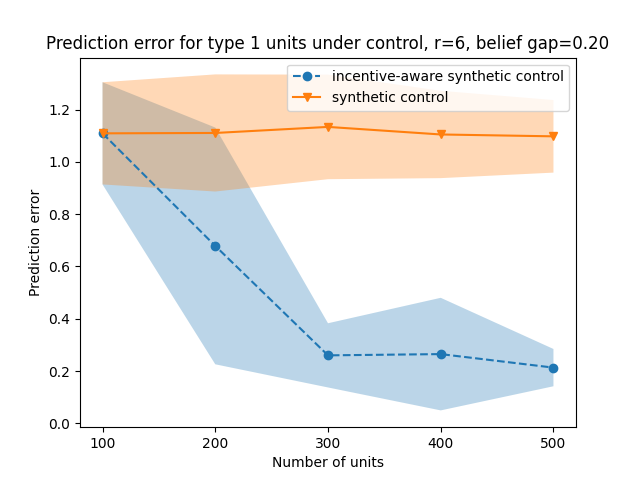}
        \caption{Prior mean gap $0.2$}
    \end{subfigure}
    \hfill
    \begin{subfigure}[b]{0.48\columnwidth}
         \centering
        \includegraphics[width=\textwidth]{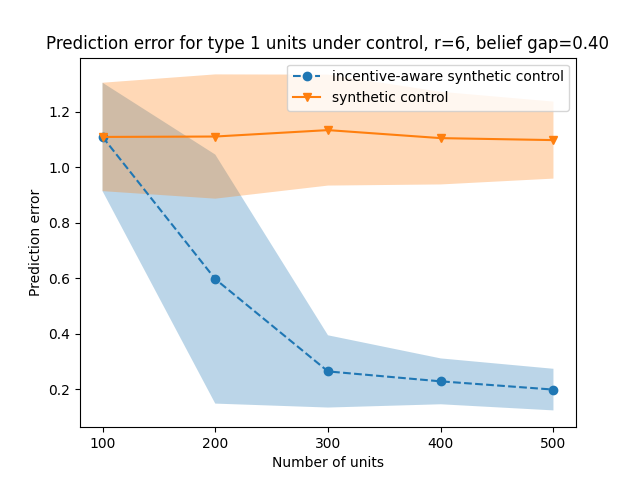}
        \caption{Prior mean gap $0.4$}
    \end{subfigure}
    \hfill
    \begin{subfigure}[b]{0.48\columnwidth}
         \centering
        \includegraphics[width=\textwidth]{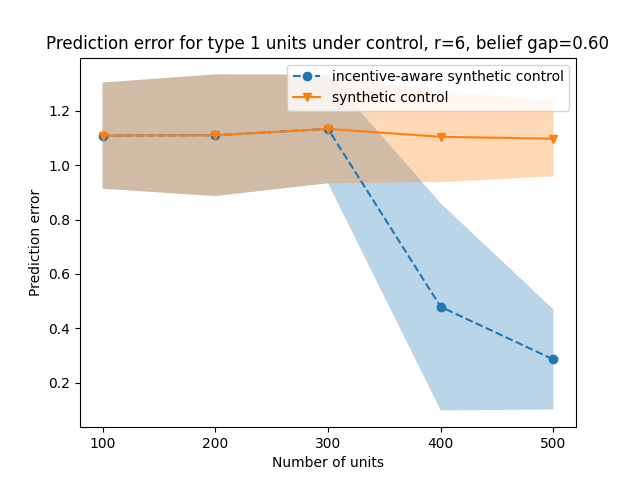}
        \caption{Prior mean gap $0.6$}
    \end{subfigure}
    \hfill
    \begin{subfigure}[b]{0.48\columnwidth}
         \centering
        \includegraphics[width=\textwidth]{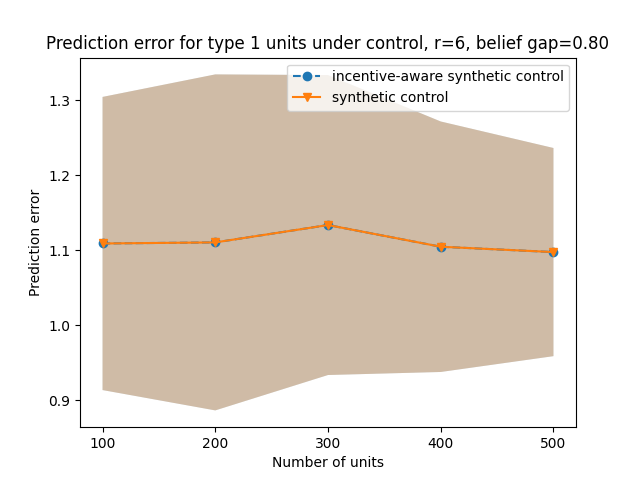}
        \caption{Prior mean gap $0.8$}
    \end{subfigure}
    \caption{Counterfactual estimation error for units of type $1$ under control using \Cref{alg:ie_noise} (blue) and synthetic control without incentives (orange) with different gaps between the prior mean reward of control and treatment.
    Results are averaged over $50$ runs, with the shaded regions representing one standard deviation. }
    \label{fig:prediction_error_2_types_varying_belief}
\end{figure}

\xhdr{Experiment Setup.} 
We consider the setting of~\Cref{sec:main}: There are two interventions and two unit types. 
Type $1$ units prefer the treatment and type $0$ units prefer the control. 
Our goal is to incentivize type $1$ units to take the control so we can obtain accurate counterfactual estimates of all units under control.
See~\Cref{appendix:simulations} for details about our data-generating process.

We apply the asymptotic hypothesis test of~\Cref{sec:ass} using all type $0$ units as the donor units and a single type $1$ unit as the test unit. 
We accept the null hypothesis, as the test statistic $\frac{\sqrt{T_1}}{\widehat{\sigma} \|\widehat{\omega}^{(n,0)}\|_2}|\widehat{\E} [\Bar{y}_{n,pre''}] - \Bar{y}_{n,pre''}| \approx 2.24$, which is much larger than the value of $z_{0.95} \approx 1.96$ needed for acceptance. 
That is, the asymptotic test (correctly) returns that the~\ref{ass:overlap} is not satisfied for a test unit of type $1$ under control.  
%

%
%
%
%
%
%
%
Using \Cref{lem:bic-two-types} (the formal version of~\Cref{thm:ie-noise}), we can calculate a lower bound on the phase length $L$ of \Cref{alg:ie_noise} such that the BIC condition (\Cref{def:BIC}) is satisfied for units of type $1$ who are recommended the control. 
We run three different sets of simulations and compare the estimation error $|\widehat{\mathbb{E}} \Bar{y}_{n,post}^{(0)} - \mathbb{E} \Bar{y}_{n,post}^{(0)}|$ for a new type $1$ unit $n$ under control when incentivizing exploration using~\Cref{alg:ie_noise} (blue) to the estimation error without incentivizing exploration (orange). 
For both our method and the incentive-unaware ablation, we use the adaptive PCR method of~\citet{agarwal2023adaptive} to estimate counterfactuals. 
All experiments are repeated $50$ times and we report both the average prediction error and the standard deviation for estimating the post-treatment outcome of type $1$ units under control.\looseness-1 

\xhdr{Varying Strength of Belief.} 
In~\Cref{fig:prediction_error_2_types_varying_belief}, we explore the effects of changing the units' ``strength of beliefs''. 
Specifically, we vary the gap between the prior mean outcome under control and treatment for type $1$ units by setting $\mu_{i}^{(1)} - \mu_{i}^{(0)} \in \{ 0.2, 0.4, 0.6, 0.8 \}$. 
As the gap decreases it becomes easier to persuade units to explore, and therefore our counterfactual estimates improve. 

\xhdr{Varying Latent Factor Size.} 
In~\Cref{fig:prediction_error_2_types_varying_dimensions}, we vary $r \in \{2, 4, 6, 8\}$ while keeping all other parameters of the problem instance constant. 
Increasing $r$ increases the batch size $L$, so we see a modest decrease in performance as $r$ increases. 

\subsection{Hypothesis test}
 In order to test the validity of the asymptotic hypothesis test in \Cref{sec:ass}, we use synthetic data for which we know the underlying data-generating process.

Specifically, we create two types of synthetic datasets: one where the~\ref{ass:overlap} holds (“UOA data”), and one where it does not (“non-UOA data”). For simplicity, we consider a setting where there are two interventions and two unit types.

For non-UOA data, we use the same data-generating process as described above, where if unit $i$ is of type $1$  (resp. type $0$), we generate a latent vector $\mathbf{v}_i = [0 \; \cdots \; 0 \; v_i[1] \; \cdots \; v_i[r/2]]$ (resp. $\mathbf{v}_i = [v_i[1] \; \cdots \; v_i[r/2] \; 0 \; \cdots \; 0]$), where $\forall j \in [r/2]: v_i[j] \sim \unif(0, 1)$. This dataset then contains $500$ units of each type. On the other hand, UOA data contains $1000$ Type $0$ units only, where the latent vector is generated the same way as Type 0 units in non-UOA data. The pre-treatment and post-treatment latent factors are generated in the same manner as above.

Using this data-generating process, we generate 500 datasets of each type (for a total of 1000 datasets), and run the asymptotic hypothesis test in \Cref{sec:ass} for each dataset. If we accept the null hypothesis, \ie the~\ref{ass:overlap} is predicted not to be satisfied according to our test, then we say the 'predicted' label of the dataset is $1$; otherwise, the 'predicted' label is $0$. We say that a dataset has (true) label $1$ if the~\ref{ass:overlap} is not satisfied, and label $0$ if it is.

We report the true positive rate (TPR) and false positive rate (FPR) for this experiment. The resulting confusion matrix is: 
\begin{equation*}
    \begin{bmatrix}
    475 & 25 \\
    22 & 478
    \end{bmatrix}
\end{equation*}
with the $TPR = 0.956$ and $FPR = 0.05$. This experimental result supports~\Cref{thm:test-informal}, where the FPR is at most 0.05. 
\section{Conclusion}\label{sec:conc}
We study the problem of non-compliance when estimating counterfactual outcomes using panel data. 
Our focus is on synthetic control methods, which canonically require an overlap assumption on the unit outcomes in order to provide valid finite sample guarantees. 
We shed light on this often overlooked assumption, and provide an algorithm for incentivizing units to explore different interventions using tools from information design and online learning. 
After running our algorithm, the~\ref{ass:overlap} will be satisfied with high probability, which allows for the principal to obtain valid finite sample guarantees for all units when using off-the-shelf SCMs to estimate counterfactual outcomes under control. 
We also extend our algorithm to satisfy the~\ref{ass:overlap} for all units and all interventions when there are more than two interventions, and we provide two hypothesis tests for determining if the~\ref{ass:overlap} hold for a given test unit and set of donor units. 
Finally, we complement our theoretical findings with numerical simulations, and observe that our procedure for estimating counterfactual outcomes produces significantly more accurate estimates when compared to existing methods which do not take unit incentives into consideration. 

There are several exciting directions for future research.
In some multi-armed bandit settings, lower bounds are known on the sample complexity of incentivizing exploration (\citet{sellke2022price, sellke2023incentivizing}). 
It would be interesting to prove analogous bounds on the number of units required to incentivize sufficient exploration for the~\ref{ass:overlap} to be satisfied for all units and all interventions in our setting. 
Another avenue for future research is to design difference-in-differences or clustering-based algorithms for incentivizing exploration in other causal inference settings. 
Finally it would be interesting to relax other assumptions typically needed for synthetic control, such as the linear span inclusion assumption on the time-step latent factors. 

\section*{Acknowledgments}
We would like to thank the anonymous reviewers for their helpful feedback.
For part of this work, DN was a Ph.D. student at the University of Minnesota and KH was a Ph.D. student at Carnegie Mellon University. 
VS was supported by NSF Award IIS-2337916 and ZSW was supported by an NSF CAREER Award. 

\bibliography{refs}

@article{agarwal2020synthetic,
  title={Synthetic interventions},
  author={Agarwal, Anish and Shah, Devavrat and Shen, Dennis},
  journal={arXiv preprint arXiv:2006.07691},
  year={2020}
}

@inproceedings{harris2022strategyproof,
  author       = {Keegan Harris and
                  Anish Agarwal and
                  Chara Podimata and
                  Zhiwei Steven Wu},
  editor       = {Michele Garetto and
                  Andrea Marin and
                  Florin Ciucu and
                  Giulia Fanti and
                  Rhonda Righter},
  title        = {Strategyproof Decision-Making in Panel Data Settings and Beyond},
  booktitle    = {Abstracts of the 2024 {ACM} {SIGMETRICS/IFIP} {PERFORMANCE} Joint
                  International Conference on Measurement and Modeling of Computer Systems,
                  {SIGMETRICS/PERFORMANCE} 2024, Venice, Italy, June 10-14, 2024},
  pages        = {69--70},
  publisher    = {{ACM}},
  year         = {2024},
  url          = {https://doi.org/10.1145/3652963.3655083},
  doi          = {10.1145/3652963.3655083},
  timestamp    = {Tue, 18 Jun 2024 09:24:07 +0200},
  biburl       = {https://dblp.org/rec/conf/sigmetrics/HarrisAPW24.bib},
  bibsource    = {dblp computer science bibliography, https://dblp.org}
}

@book{vershynin2018high,
  title={High-Dimensional Probability: An Introduction with Applications in Data Science},
  author={Vershynin, R.},
  isbn={9781108415194},
  lccn={2018016910},
  series={Cambridge Series in Statistical and Probabilistic Mathematics},
  url={https://books.google.com/books?id=J-VjswEACAAJ},
  year={2018},
  publisher={Cambridge University Press}
}

@inproceedings{agarwal2023adaptive,
  author       = {Anish Agarwal and
                  Keegan Harris and
                  Justin Whitehouse and
                  Zhiwei Steven Wu},
  editor       = {Alice Oh and
                  Tristan Naumann and
                  Amir Globerson and
                  Kate Saenko and
                  Moritz Hardt and
                  Sergey Levine},
  title        = {Adaptive Principal Component Regression with Applications to Panel
                  Data},
  booktitle    = {Advances in Neural Information Processing Systems 36: Annual Conference
                  on Neural Information Processing Systems 2023, NeurIPS 2023, New Orleans,
                  LA, USA, December 10 - 16, 2023},
  year         = {2023},
  url          = {http://papers.nips.cc/paper\_files/paper/2023/hash/f37265d7493377170a3b4ba91823119a-Abstract-Conference.html},
  timestamp    = {Wed, 24 Jul 2024 21:43:15 +0200},
  biburl       = {https://dblp.org/rec/conf/nips/AgarwalHWW23.bib},
  bibsource    = {dblp computer science bibliography, https://dblp.org}
}

@article{agarwal2019robustness,
  author       = {Anish Agarwal and
                  Devavrat Shah and
                  Dennis Shen and
                  Dogyoon Song},
  title        = {Model Agnostic High-Dimensional Error-in-Variable Regression},
  journal      = {CoRR},
  volume       = {abs/1902.10920},
  year         = {2019},
  url          = {http://arxiv.org/abs/1902.10920},
  eprinttype    = {arXiv},
  eprint       = {1902.10920},
  timestamp    = {Tue, 21 May 2019 18:03:39 +0200},
  biburl       = {https://dblp.org/rec/journals/corr/abs-1902-10920.bib},
  bibsource    = {dblp computer science bibliography, https://dblp.org}
}

@article{mansour2019bayesian,
  author       = {Yishay Mansour and
                  Aleksandrs Slivkins and
                  Vasilis Syrgkanis},
  title        = {Bayesian Incentive-Compatible Bandit Exploration},
  journal      = {Oper. Res.},
  volume       = {68},
  number       = {4},
  pages        = {1132--1161},
  year         = {2020},
  url          = {https://doi.org/10.1287/opre.2019.1949},
  doi          = {10.1287/OPRE.2019.1949},
  timestamp    = {Tue, 23 Mar 2021 14:13:29 +0100},
  biburl       = {https://dblp.org/rec/journals/ior/MansourSS20.bib},
  bibsource    = {dblp computer science bibliography, https://dblp.org}
}

@inproceedings{kremer2013wisdom,
  author       = {Ilan Kremer and
                  Yishay Mansour and
                  Motty Perry},
  editor       = {Michael J. Kearns and
                  R. Preston McAfee and
                  {\'{E}}va Tardos},
  title        = {Implementing the "Wisdom of the Crowd"},
  booktitle    = {Proceedings of the fourteenth {ACM} Conference on Electronic Commerce,
                  {EC} 2013, Philadelphia, PA, USA, June 16-20, 2013},
  pages        = {605--606},
  publisher    = {{ACM}},
  year         = {2013},
  url          = {https://doi.org/10.1145/2492002.2482542},
  doi          = {10.1145/2492002.2482542},
  timestamp    = {Tue, 14 Oct 2025 19:39:13 +0200},
  biburl       = {https://dblp.org/rec/conf/sigecom/KremerMP13.bib},
  bibsource    = {dblp computer science bibliography, https://dblp.org}
}

@article{mansour2021bayesian,
  author       = {Yishay Mansour and
                  Alex Slivkins and
                  Vasilis Syrgkanis and
                  Zhiwei Steven Wu},
  title        = {Bayesian Exploration: Incentivizing Exploration in Bayesian Games},
  journal      = {Oper. Res.},
  volume       = {70},
  number       = {2},
  pages        = {1105--1127},
  year         = {2022},
  url          = {https://doi.org/10.1287/opre.2021.2205},
  doi          = {10.1287/OPRE.2021.2205},
  timestamp    = {Wed, 27 Apr 2022 20:11:57 +0200},
  biburl       = {https://dblp.org/rec/journals/ior/MansourSSW22.bib},
  bibsource    = {dblp computer science bibliography, https://dblp.org}
}

@article{sellke2022price,
  author       = {Mark Sellke and
                  Aleksandrs Slivkins},
  title        = {The Price of Incentivizing Exploration: {A} Characterization via Thompson
                  Sampling and Sample Complexity},
  journal      = {Oper. Res.},
  volume       = {71},
  number       = {5},
  pages        = {1706--1732},
  year         = {2023},
  url          = {https://doi.org/10.1287/opre.2022.2401},
  doi          = {10.1287/OPRE.2022.2401},
  timestamp    = {Sat, 06 Jul 2024 18:16:49 +0200},
  biburl       = {https://dblp.org/rec/journals/ior/SellkeS23.bib},
  bibsource    = {dblp computer science bibliography, https://dblp.org}
}

@inproceedings{immorlica2023incentivizing,
  author       = {Nicole Immorlica and
                  Jieming Mao and
                  Aleksandrs Slivkins and
                  Zhiwei Steven Wu},
  editor       = {P{\'{e}}ter Bir{\'{o}} and
                  Jason D. Hartline and
                  Michael Ostrovsky and
                  Ariel D. Procaccia},
  title        = {Incentivizing Exploration with Selective Data Disclosure},
  booktitle    = {{EC} '20: The 21st {ACM} Conference on Economics and Computation,
                  Virtual Event, Hungary, July 13-17, 2020},
  pages        = {647--648},
  publisher    = {{ACM}},
  year         = {2020},
  url          = {https://doi.org/10.1145/3391403.3399487},
  doi          = {10.1145/3391403.3399487},
  timestamp    = {Fri, 09 Feb 2024 20:36:00 +0100},
  biburl       = {https://dblp.org/rec/conf/sigecom/ImmorlicaMSW20.bib},
  bibsource    = {dblp computer science bibliography, https://dblp.org}
}

@inproceedings{sellke2023incentivizing,
  author       = {Mark Sellke},
  editor       = {Andreas Krause and
                  Emma Brunskill and
                  Kyunghyun Cho and
                  Barbara Engelhardt and
                  Sivan Sabato and
                  Jonathan Scarlett},
  title        = {Incentivizing Exploration with Linear Contexts and Combinatorial Actions},
  booktitle    = {International Conference on Machine Learning, {ICML} 2023, 23-29 July
                  2023, Honolulu, Hawaii, {USA}},
  series       = {Proceedings of Machine Learning Research},
  volume       = {202},
  pages        = {30570--30583},
  publisher    = {{PMLR}},
  year         = {2023},
  url          = {https://proceedings.mlr.press/v202/sellke23a.html},
  timestamp    = {Mon, 28 Aug 2023 17:23:08 +0200},
  biburl       = {https://dblp.org/rec/conf/icml/Sellke23.bib},
  bibsource    = {dblp computer science bibliography, https://dblp.org}
}

@article{freire2018evaluating,
  title={Evaluating the effect of homicide prevention strategies in S{\~a}o Paulo, Brazil: A synthetic control approach},
  author={Freire, Danilo},
  journal={Latin American Research Review},
  volume={53},
  number={2},
  pages={231--249},
  year={2018},
  publisher={Cambridge University Press}
}

@misc{uber.com_2019, author={Uber}, url={https://www.uber.com/blog/causal-inference-at-uber/}, journal={uber.com}, publisher={Uber}, year={2019}, month={Jun}}

@misc{EBay_Partner_Network_Help_Center_2022, author={EBay}, url={https://partnerhelp.ebay.com/helpcenter/s/article/Incrementality-Testing-by-Location-IP?language=en_US}, journal={EBay Partner Network Help Center}, publisher={EBay}, year={2022}, month={Jun}}

@article{abadie2003economic,
  title={The economic costs of conflict: A case study of the Basque Country},
  author={Abadie, Alberto and Gardeazabal, Javier},
  journal={American economic review},
  volume={93},
  number={1},
  pages={113--132},
  year={2003},
  publisher={American Economic Association}
}

@article{abadie2010synthetic,
  title={Synthetic control methods for comparative case studies: Estimating the effect of California’s tobacco control program},
  author={Abadie, Alberto and Diamond, Alexis and Hainmueller, Jens},
  journal={Journal of the American statistical Association},
  volume={105},
  number={490},
  pages={493--505},
  year={2010},
  publisher={Taylor \& Francis}
}

@article{amjad2018robust,
  author       = {Muhammad J. Amjad and
                  Devavrat Shah and
                  Dennis Shen},
  title        = {Robust Synthetic Control},
  journal      = {J. Mach. Learn. Res.},
  volume       = {19},
  pages        = {22:1--22:51},
  year         = {2018},
  url          = {https://jmlr.org/papers/v19/17-777.html},
  timestamp    = {Wed, 11 Sep 2024 14:41:27 +0200},
  biburl       = {https://dblp.org/rec/journals/jmlr/AmjadSS18.bib},
  bibsource    = {dblp computer science bibliography, https://dblp.org}
}

@article{donald2007inference,
  title={Inference with difference-in-differences and other panel data},
  author={Donald, Stephen G and Lang, Kevin},
  journal={The review of Economics and Statistics},
  volume={89},
  number={2},
  pages={221--233},
  year={2007},
  publisher={The MIT Press}
}

@article{zhang2019quantile,
  title={Quantile-regression-based clustering for panel data},
  author={Zhang, Yingying and Wang, Huixia Judy and Zhu, Zhongyi},
  journal={Journal of Econometrics},
  volume={213},
  number={1},
  pages={54--67},
  year={2019},
  publisher={Elsevier}
}

@article{slivkins2021exploration,
  author       = {Aleksandrs Slivkins},
  title        = {Exploration and Persuasion},
  journal      = {CoRR},
  volume       = {abs/2410.17086},
  year         = {2024},
  url          = {https://doi.org/10.48550/arXiv.2410.17086},
  doi          = {10.48550/ARXIV.2410.17086},
  eprinttype    = {arXiv},
  eprint       = {2410.17086},
  timestamp    = {Wed, 27 Nov 2024 21:02:55 +0100},
  biburl       = {https://dblp.org/rec/journals/corr/abs-2410-17086.bib},
  bibsource    = {dblp computer science bibliography, https://dblp.org}
}

@article{amjad2019mrsc,
  author       = {Muhammad J. Amjad and
                  Vishal Misra and
                  Devavrat Shah and
                  Dennis Shen},
  title        = {mRSC: Multi-dimensional Robust Synthetic Control},
  journal      = {Proc. {ACM} Meas. Anal. Comput. Syst.},
  volume       = {3},
  number       = {2},
  pages        = {37:1--37:27},
  year         = {2019},
  url          = {https://doi.org/10.1145/3341617.3326152},
  doi          = {10.1145/3341617.3326152},
  timestamp    = {Thu, 09 Jul 2020 22:58:54 +0200},
  biburl       = {https://dblp.org/rec/journals/pomacs/AmjadMSS19.bib},
  bibsource    = {dblp computer science bibliography, https://dblp.org}
}

@article{agarwal2020model,
  title={On Model Identification and Out-of-Sample Prediction of Principal Component Regression: Applications to Synthetic Controls},
  author={Agarwal, Anish and Shah, Devavrat and Shen, Dennis},
  journal={arXiv preprint arXiv:2010.14449},
  year={2020}
}

@article{jolliffe1982note,
  title={A note on the use of principal components in regression},
  author={Jolliffe, Ian T},
  journal={Journal of the Royal Statistical Society Series C: Applied Statistics},
  volume={31},
  number={3},
  pages={300--303},
  year={1982},
  publisher={Oxford University Press}
}

@article{massy1965principal,
  title={Principal components regression in exploratory statistical research},
  author={Massy, William F},
  journal={Journal of the American Statistical Association},
  volume={60},
  number={309},
  pages={234--256},
  year={1965},
  publisher={Taylor \& Francis}
}

@article{chen2023synthetic,
  title={Synthetic control as online linear regression},
  author={Chen, Jiafeng},
  journal={Econometrica},
  volume={91},
  number={2},
  pages={465--491},
  year={2023},
  publisher={Wiley Online Library}
}

@article{farias2022synthetically,
  author       = {Vivek F. Farias and
                  Ciamac C. Moallemi and
                  Tianyi Peng and
                  Andrew Zheng},
  title        = {Synthetically Controlled Bandits},
  journal      = {CoRR},
  volume       = {abs/2202.07079},
  year         = {2022},
  url          = {https://arxiv.org/abs/2202.07079},
  eprinttype    = {arXiv},
  eprint       = {2202.07079},
  timestamp    = {Mon, 06 Oct 2025 14:28:11 +0200},
  biburl       = {https://dblp.org/rec/journals/corr/abs-2202-07079.bib},
  bibsource    = {dblp computer science bibliography, https://dblp.org}
}

@INCOLLECTION{chamberlain,
title = {Panel data},
author = {Chamberlain, Gary},
year = {1984},
chapter = {22},
pages = {1247-1318},
booktitle = {Handbook of Econometrics},
editor = {Griliches†, Z. and Intriligator, M. D.},
volume = {2},
edition = {1},
publisher = {Elsevier},
url = {https://EconPapers.repec.org/RePEc:eee:ecochp:2-22}
}

@article{liang_zeger,
    author = {Liang, Kung-Yee and Zeger, Scott L.},
    title = "{Longitudinal data analysis using generalized linear models}",
    journal = {Biometrika},
    volume = {73},
    number = {1},
    pages = {13-22},
    year = {1986},
    month = {04},
    abstract = "{This paper proposes an extension of generalized linear models to the analysis of longitudinal data. We introduce a class of estimating equations that give consistent estimates of the regression parameters and of their variance under mild assumptions about the time dependence. The estimating equations are derived without specifying the joint distribution of a subject's observations yet they reduce to the score equations for niultivariate Gaussian outcomes. Asymptotic theory is presented for the general class of estimators. Specific cases in which we assume independence, m-dependence and exchangeable correlation structures from each subject are discussed. Efficiency of the pioposecl estimators in two simple situations is considered. The approach is closely related to quasi-likelihood.}",
    issn = {0006-3444},
    doi = {10.1093/biomet/73.1.13},
    url = {https://doi.org/10.1093/biomet/73.1.13},
    eprint = {https://academic.oup.com/biomet/article-pdf/73/1/13/679793/73-1-13.pdf},
}

@article{arellano,
author = {Arellano, Manuel and Honore, Bo},
year = {2000},
month = {02},
pages = {},
title = {Panel Data Models: Some Recent Developments},
journal = {Handbook of Econometrics}
}

@article{shen2022same,
  title={Same Root Different Leaves: Time Series and Cross-Sectional Methods in Panel Data},
  author={Shen, Dennis and Ding, Peng and Sekhon, Jasjeet and Yu, Bin},
  journal={arXivorg},
  year={2022}
}

@book{harmless_econometrics,
 ISBN = {9780691120348},
 author = {Joshua D. Angrist and Jörn-Steffen Pischke},
 publisher = {Princeton University Press},
 title = {Mostly Harmless Econometrics: An Empiricist's Companion},
 year = {2009}
}

@misc{ashenfelter1984using,
  title={Using the longitudinal structure of earnings to estimate the effect of training programs},
  author={Ashenfelter, Orley C and Card, David},
  year={1984},
  publisher={National Bureau of Economic Research Cambridge, Mass., USA}
}

@article{bertrand2004much,
  title={How much should we trust differences-in-differences estimates?},
  author={Bertrand, Marianne and Duflo, Esther and Mullainathan, Sendhil},
  journal={The Quarterly journal of economics},
  volume={119},
  number={1},
  pages={249--275},
  year={2004},
  publisher={MIT Press}
}

@article{dwivedi2022counterfactual,
  title={Counterfactual inference for sequential experiments},
  author={Dwivedi, Raaz and Tian, Katherine and Tomkins, Sabina and Klasnja, Predrag and Murphy, Susan and Shah, Devavrat},
  journal={arXiv preprint arXiv:2202.06891},
  year={2022}
}

@article{luedtke2016optimal,
  title={Optimal individualized treatments in resource-limited settings},
  author={Luedtke, Alexander R and van der Laan, Mark J},
  journal={The international journal of biostatistics},
  volume={12},
  number={1},
  pages={283--303},
  year={2016},
  publisher={De Gruyter}
}

@article{qiu2021optimal,
  title={Optimal individualized decision rules using instrumental variable methods},
  author={Qiu, Hongxiang and Carone, Marco and Sadikova, Ekaterina and Petukhova, Maria and Kessler, Ronald C and Luedtke, Alex},
  journal={Journal of the American Statistical Association},
  volume={116},
  number={533},
  pages={174--191},
  year={2021},
  publisher={Taylor \& Francis}
}

@article{qiu2022individualized,
  title={Individualized treatment rules under stochastic treatment cost constraints},
  author={Qiu, Hongxiang and Carone, Marco and Luedtke, Alex},
  journal={Journal of Causal Inference},
  volume={10},
  number={1},
  pages={480--493},
  year={2022},
  publisher={De Gruyter}
}

@article{sun2021treatment,
  title={Treatment allocation under uncertain costs},
  author={Sun, Hao and Munro, Evan and Kalashnov, Georgy and Du, Shuyang and Wager, Stefan},
  journal={arXiv preprint arXiv:2103.11066},
  year={2021}
}

@article{li2022incentivizing,
  author       = {Yingkai Li and
                  Aleksandrs Slivkins},
  title        = {Incentivizing Participation in Clinical Trials},
  journal      = {CoRR},
  volume       = {abs/2202.06191},
  year         = {2022},
  url          = {https://arxiv.org/abs/2202.06191},
  eprinttype    = {arXiv},
  eprint       = {2202.06191},
  timestamp    = {Fri, 18 Feb 2022 12:23:53 +0100},
  biburl       = {https://dblp.org/rec/journals/corr/abs-2202-06191.bib},
  bibsource    = {dblp computer science bibliography, https://dblp.org}
}

@inproceedings{HuNSW22,
  author       = {Xinyan Hu and
                  Dung Daniel T. Ngo and
                  Aleksandrs Slivkins and
                  Zhiwei Steven Wu},
  editor       = {Sanmi Koyejo and
                  S. Mohamed and
                  A. Agarwal and
                  Danielle Belgrave and
                  K. Cho and
                  A. Oh},
  title        = {Incentivizing Combinatorial Bandit Exploration},
  booktitle    = {Advances in Neural Information Processing Systems 35: Annual Conference
                  on Neural Information Processing Systems 2022, NeurIPS 2022, New Orleans,
                  LA, USA, November 28 - December 9, 2022},
  year         = {2022},
  url          = {http://papers.nips.cc/paper\_files/paper/2022/hash/f0d7b528c31bc3f9a0d5bab515ed6ed5-Abstract-Conference.html},
  timestamp    = {Mon, 08 Jan 2024 16:31:32 +0100},
  biburl       = {https://dblp.org/rec/conf/nips/HuNSW22.bib},
  bibsource    = {dblp computer science bibliography, https://dblp.org}
}

@inproceedings{NgoSS021,
  author       = {Dung Daniel T. Ngo and
                  Logan Stapleton and
                  Vasilis Syrgkanis and
                  Steven Wu},
  editor       = {Marina Meila and
                  Tong Zhang},
  title        = {Incentivizing Compliance with Algorithmic Instruments},
  booktitle    = {Proceedings of the 38th International Conference on Machine Learning,
                  {ICML} 2021, 18-24 July 2021, Virtual Event},
  series       = {Proceedings of Machine Learning Research},
  volume       = {139},
  pages        = {8045--8055},
  publisher    = {{PMLR}},
  year         = {2021},
  url          = {http://proceedings.mlr.press/v139/ngo21a.html},
  timestamp    = {Thu, 23 Jun 2022 19:58:20 +0200},
  biburl       = {https://dblp.org/rec/conf/icml/NgoSS021.bib},
  bibsource    = {dblp computer science bibliography, https://dblp.org}
}

@inproceedings{lattimore2016causal,
  author       = {Finnian Lattimore and
                  Tor Lattimore and
                  Mark D. Reid},
  editor       = {Daniel D. Lee and
                  Masashi Sugiyama and
                  Ulrike von Luxburg and
                  Isabelle Guyon and
                  Roman Garnett},
  title        = {Causal Bandits: Learning Good Interventions via Causal Inference},
  booktitle    = {Advances in Neural Information Processing Systems 29: Annual Conference
                  on Neural Information Processing Systems 2016, December 5-10, 2016,
                  Barcelona, Spain},
  pages        = {1181--1189},
  year         = {2016},
  url          = {https://proceedings.neurips.cc/paper/2016/hash/b4288d9c0ec0a1841b3b3728321e7088-Abstract.html},
  timestamp    = {Mon, 16 May 2022 15:41:51 +0200},
  biburl       = {https://dblp.org/rec/conf/nips/LattimoreLR16.bib},
  bibsource    = {dblp computer science bibliography, https://dblp.org}
}

@inproceedings{yan2024linear,
  author       = {Zirui Yan and
                  Ali Tajer},
  editor       = {Amir Globersons and
                  Lester Mackey and
                  Danielle Belgrave and
                  Angela Fan and
                  Ulrich Paquet and
                  Jakub M. Tomczak and
                  Cheng Zhang},
  title        = {Linear Causal Bandits: Unknown Graph and Soft Interventions},
  booktitle    = {Advances in Neural Information Processing Systems 38: Annual Conference
                  on Neural Information Processing Systems 2024, NeurIPS 2024, Vancouver,
                  BC, Canada, December 10 - 15, 2024},
  year         = {2024},
  url          = {http://papers.nips.cc/paper\_files/paper/2024/hash/2aba6ec20299931d46ddeefd5ddcb442-Abstract-Conference.html},
  timestamp    = {Thu, 13 Feb 2025 16:56:43 +0100},
  biburl       = {https://dblp.org/rec/conf/nips/YanT24.bib},
  bibsource    = {dblp computer science bibliography, https://dblp.org}
}

@inproceedings{yan2024improved,
  author       = {Zirui Yan and
                  Arpan Mukherjee and
                  Burak Varici and
                  Ali Tajer},
  title        = {Improved Bound for Robust Causal Bandits with Linear Models},
  booktitle    = {{IEEE} International Symposium on Information Theory, {ISIT} 2024,
                  Athens, Greece, July 7-12, 2024},
  pages        = {2074--2079},
  publisher    = {{IEEE}},
  year         = {2024},
  url          = {https://doi.org/10.1109/ISIT57864.2024.10619195},
  doi          = {10.1109/ISIT57864.2024.10619195},
  timestamp    = {Fri, 11 Oct 2024 13:50:05 +0200},
  biburl       = {https://dblp.org/rec/conf/isit/YanMVT24.bib},
  bibsource    = {dblp computer science bibliography, https://dblp.org}
}

@article{varici2023causal,
  author       = {Burak Varici and
                  Karthikeyan Shanmugam and
                  Prasanna Sattigeri and
                  Ali Tajer},
  title        = {Causal Bandits for Linear Structural Equation Models},
  journal      = {J. Mach. Learn. Res.},
  volume       = {24},
  pages        = {297:1--297:59},
  year         = {2023},
  url          = {http://jmlr.org/papers/v24/22-0969.html},
  timestamp    = {Sat, 15 Nov 2025 13:50:15 +0100},
  biburl       = {https://dblp.org/rec/journals/jmlr/VariciSST23.bib},
  bibsource    = {dblp computer science bibliography, https://dblp.org}
}
\bibliographystyle{tmlr-style-file-main/tmlr}

\appendix
\section{Additional related work}\label{sec:additional}

There has been recent interest in causal inference on characterizing optimal treatment policies which must satisfy some additional constraint(s). 
For example,~\citet{luedtke2016optimal} study a setting in which the treatment supply is limited. 
Much like us,~\citet{qiu2021optimal} consider a setting in which intervening on the treatment is not possible, but encouraging treatment is feasible. 
However their focus is on the setting in which the treatment is a limited resource, so their goal is to encourage treatment to those who would benefit from it the most. 
Since they do not consider panel data and place different behavioral assumptions on the individuals under intervention, the tools and techniques we use to persuade units in our setting differ significantly from theirs. 
Finally,~\citet{qiu2022individualized, sun2021treatment} consider settings where there is some uncertain \emph{cost} associated with treatment. 

Within the literature on IE, the work most related to ours on a conceptual level is that of \citet{li2022incentivizing}, who consider the problem of incentivizing exploration in clinical trial settings. 
Like us,~\citet{li2022incentivizing} study a setting in which the principal would like to incentivize agents to explore different treatments. 
However, their goal is to estimate \emph{population-level} statistics about each intervention, while we are interested in estimating \emph{unit-specific} counterfactuals under different interventions using panel data. 
As a result, while our high-level motivations are somewhat similar, the tools and techniques we use to obtain our results differ significantly. 
On a technical level, our mechanisms are somewhat similar to the initial exploration phase in \citet{mansour2019bayesian}, although we consider a more general setting where unit outcomes may vary over time, in contrast to the simpler multi-armed bandit setting they consider.\looseness-1

Finally, our work is conceptually related to the literature on \emph{causal bandits}~\citep{lattimore2016causal}, which is a sequential decision-making setting where actions correspond to interventions on a causal graph.
As we mention in~\Cref{sec:prelims}, our setting may be viewed as a linear contextual bandit setting, where interventions are actions and the principal observes a noisy version of the context. 
Within the literature on causal bandits, papers on linear causal bandits (where the underlying causal system is assumed to be linear) are perhaps the most similar to our work (e.g.~\cite{yan2024linear, yan2024improved, varici2023causal}). 
However unlike this line of work, we do not study regret minimization under a causal graph, but instead focus on the problem of identification in the panel data setting. 
Moreover, our focus is on the regime in which compliance cannot be enforced and units are free to self-select their interventions. 
\section{Background on principal component regression}\label{app:pcr}

%
The specific algorithm we use to estimate $\vtheta^{(d)}$ is called \emph{principal component regression} (PCR). 
At a high level, PCR ``de-noises'' the matrix of covariates (in our case, the set of all pre-treatment outcomes for all donor units seen so far who underwent intervention $d$) using hard singular value thresholding, before regressing the (now-denoised) covariates on the outcomes of interest (in our case, the average post-intervention outcome under intervention $d$). 
%
%

Let $\pv{\cI}{d}_i$ be the set of units who have received intervention $d$ before unit $i$ arrives, and let $n_i^{(d)}$ be the number of such units. 
We use $$\pv{Y}{d}_{pre, i} := [\vy_{j,pre}^\top \; : \; j \in \pv{\cI}{d}_i] \in \mathbb{R}^{\pv{n}{d}_i \times T_0}$$ to denote the matrix of pre-treatment outcomes corresponding to the subset of units who have undergone intervention $d$ before unit $i$ arrives, and $$Y_{post,i}^{(d)} \coloneqq \left[\sum_{t=T_0+1}^T y_{j, t}^{(d)} : j \in \cI_i^{(d)} \right] \in \mathbb{R}^{n_i^{(d)} \times 1}$$ be the vector of the sum of post-intervention outcomes for the subset of units who have undergone intervention $d$ before unit $i$ arrives. 
We denote the singular value decomposition of $\pv{Y}{d}_{pre, i}$ as 
\begin{equation*}
    \pv{Y}{d}_{pre, i} = \sum_{\ell = 1}^{\pv{n}{d}_i \wedge T_0} \pv{s}{d}_\ell \pv{\widehat \vu}{d}_\ell (\pv{\widehat{\mathbf{v}}}{d}_\ell)^\top,
\end{equation*}
where $\{s_\ell^{(d)} \}_{\ell = 1}^{n_i^{(d)} \wedge T_0}$ are the singular values of $\pv{Y}{d}_{pre, i}$, and $\pv{\widehat \vu}{d}_\ell$ and $\pv{\widehat{\mathbf{v}}}{d}_\ell$ are orthonormal column vectors. 
We assume that the singular values are ordered such that $s_1(\pv{Y}{d}_{pre,i}) \geq \cdots \geq s_{n_i^{(d)} \wedge  T_0}(\pv{Y}{d}_{pre,i}) \geq 0$. 
For some threshold value $r$, we use 
\begin{equation*}
\begin{aligned}
    \pv{\widehat Y}{d}_{pre, i} &:= \sum_{\ell = 1}^{r} \pv{s}{d}_\ell \pv{\widehat \vu}{d}_\ell (\pv{\widehat{\mathbf{v}}}{d}_\ell)^\top\\
\end{aligned}
\end{equation*}
%
to refer to the truncation of $\pv{Y}{d}_{pre, i}$ to its top $r$ singular values. 
We define the projection matrix onto the subspace spanned by the top $r$ right singular vectors as $\widehat{\vP}_{i,r}^{(d)} \in \mathbb{R}^{r \times r}$, given by $\widehat{\vP}_{i,r}^{(d)} \coloneqq \sum_{\ell = 1}^r \widehat{\mathbf{v}}_\ell^{(d)} \left(\widehat{\mathbf{v}}_\ell^{(d)} \right)^\top$. 
Equipped with this notation, we are now ready to define the procedure for estimating $\vtheta^{(d)}$ using (regularized) principal component regression. 
\begin{definition}[Regularized Principal Component Regression] 
    Given regularization parameter $\rho \geq 0$ and truncation level $r \in \mathbb{N}$, for $d \in \{0,1\}$ and $i \geq 1$, let $\mathcal{V}_i^{(d)} \coloneqq \left(\widehat{Y}_{pre, i}^{(d)}\right)^\top \widehat{Y}_{pre, i}^{(d)} + \rho \widehat{\vP}_{i,r}^{(d)}$. 
    Then, regularized PCR estimates $\vtheta^{(d)}$ as
    \begin{equation*}
        \widehat{\vtheta}_i^{(d)} \coloneqq \left( \mathcal{V}_i^{(d)} \right)^{-1}  \pv{\widehat Y}{d}_{pre, i} Y_{i,post}^{(d)},
    \end{equation*}
    where $\vtheta^{(d)}$ is defined as in~\Cref{prop:reward}. 
    The average post-intervention outcome for unit $i$ under intervention $d$ may then be estimated as $\widehat \E[\bar{y}_{i,post}^{(d)}] := \frac{1}{T_1} \langle \widehat{\vtheta}^{(d)}_i, y_{i,pre} \rangle$. 
\end{definition}
Observe that if $\rho = 0$, then $\mathcal{V}_i^{(d)} \coloneqq \left(\widehat{Y}_{pre, i}^{(d)}\right)^\top \widehat{Y}_{pre, i}^{(d)}$ and we recover the non-regularized version of PCR. 
Under the~\ref{ass:overlap} and~\Cref{ass:span}, work on robust synthetic control (see~\Cref{sec:related}) uses (regularized) PCR to obtain consistent estimates for counterfactual post-intervention outcomes under different interventions.
Intuitively, the~\ref{ass:overlap} is required for consistent estimation because there needs to be sufficient information about the test unit contained in the data from donor units in order to accurately make predictions about the test unit using a model learned from the donor units' data.

\section{Appendix for Section~\ref{sec:main}: Incentivized exploration for synthetic control}
\label{appendix:two-types-bic-proof}
\subsection{Causal parameter recovery}
\begin{theorem}[Theorem G.3 of \citet{agarwal2023adaptive}] 
\label{thm:pcr-concentration}
Let $\delta \in (0,1)$ be an arbitrary confidence parameter and $\rho > 0$ be chosen to be sufficiently small. Further, assume that \Cref{ass:lfm} and \ref{ass:overlap} are satisfied, there is some $i_0 \geq 1$ such that $\mathrm{rank}(\vX_{i_0} ](d)) = r$, and $\mathrm{snr}_i(d) \geq 2$ for all $i \geq i_0$. Then, with probability at least $1 - \cO(k\delta)$, simultaneously for all interventions $d \in [k]_0$,
\begin{align*}
    \abs{\widehat{\E}[\bar{y}_{i, post}^{(d)}] - \bar{\E}[y_{i, post}^{(d)}]} &\leq \alpha(\delta),\\
    \text{where } \alpha(\delta) &:= \frac{3 \sqrt{T_0}}{ \widehat{\mathrm{snr}}_i(d)} \left( \frac{L (\sqrt{74} + 12 \sqrt{6} \kappa (\vZ_i(d)))}{(T - T_0) \cdot \widehat{\mathrm{snr}}_i(d)} + \frac{\sqrt{2\mathrm{err}_i(d)}}{\sqrt{T - T_0}\cdot \sigma_r(\vZ_i(d))}\right) \\
    &+ \frac{2L \sqrt{24T_0}}{(T - T_0) \cdot\widehat{\mathrm{snr}}_i(d) } + \frac{12 L\kappa (\vZ_i(d))\sqrt{3T_0}}{(T - T_0) \cdot \widehat{\mathrm{snr}}_i(d)} + \frac{2\sqrt{\mathrm{err}_i(d)}}{\sqrt{T - T_0} \cdot \sigma_r (\vZ_i(d))}\\
    &+ \frac{L \sigma \sqrt{2\log(k/\delta)}}{T - T_0} + \frac{L \sigma \sqrt{148 \log(k/ \delta)}}{\widehat{\mathrm{snr}}_i(d) (T - T_0)} + \frac{24 \sigma \kappa (\vZ_i(d)) \sqrt{6 \log(k/\delta)}}{\widehat{\mathrm{snr}}_i(d) (T - T_0)}\\
    &+ \frac{2\sigma\sqrt{\mathrm{err}_i(d) \log(k / \delta)}}{\sigma_r(\vZ_i(d)) \sqrt{T - T_0}}
\end{align*}
where $\widehat{\E}[\bar{y}_{i, post}^{(d)}] \coloneqq \frac{1}{T - T_0} \cdot \langle \widehat{\theta}_i(d) , y_{i,pre} \rangle$ is the estimated average post-intervention outcome for unit $i$ under intervention $d$ and $\norm{\theta_i^{(d)}} \leq L$.
\label{thm:adaptive-pcr}
\end{theorem}

\begin{corollary} Given a gap $\epsilon$ and the same assumptions as in \Cref{thm:pcr-concentration}, the probability that $\abs{\widehat{\E}[\bar{y}_{i, post}^{(d)}] - \E[\bar{y}_{i, post}^{(d)}]} \leq \epsilon$ is at least $1 - \delta$, where 
\begin{align*}
    \delta \leq \frac{\log(n^{(d)}) \vee k}{\exp \left( \left( \sqrt{\frac{\sigma_{r_d}(\vZ_i(d))\epsilon - (A + F)(\sqrt{n^{(d)}} + \sqrt{T_0})}{D} + \frac{\alpha^2}{4D^2}} - \frac{\alpha}{2D} \right)^2 \right)}
\end{align*}
    with
    \begin{itemize}
        \item $A = 3\sqrt{T_0} \left( \frac{\norm{\theta_i^{(d)}} (\sqrt{74} + 12\sqrt{6} \kappa(\vZ_i(d)))}{T - T_0}  + \frac{\sqrt{2}}{\sqrt{T - T_0}}\right)$,
        \item $F = \frac{2 \norm{\theta_i^{(d)}} \sqrt{24T_0}}{T - T_0} + \frac{12 \norm{\theta_i^{(d)}} \kappa(\vZ_i(d)) \sqrt{3T_0}}{T - T_0} + \frac{2}{\sqrt{T - T_0}}$,
        \item $D = \frac{\norm{\theta_i^{(d)}} \sigma \sqrt{148}}{T - T_0} + \frac{24 \sigma \kappa(\vZ_i(d)) \sqrt{6}}{T - T_0} + \frac{2\sigma}{\sqrt{T - T_0}}$,
        \item $E = \frac{\sqrt{2} \norm{\theta_i^{(d)}} \sigma}{T - T_0}$, 
        \item $\alpha = A + F + D(\sqrt{n^{(d)}} + \sqrt{T_0}) + \sigma_{r_d}(\vZ_i(d)) E$
    \end{itemize} 
\label{cor:high-probability-inverse}
\end{corollary}
\begin{proof}
    We begin by setting the right-hand side of \Cref{thm:pcr-concentration} to be $\epsilon$. The goal is to write the failure probability $\delta$ as a function of $\epsilon$. 
    Then, using the notations above, we can write 
    \begin{align*}
        \epsilon = \frac{A}{(\widehat{\mathrm{snr}}_i(d))^2} + \frac{F}{\widehat{\mathrm{snr}}_i(d)} + \frac{D\sqrt{\log(k/\delta)}}{\widehat{\mathrm{snr}}_i(d)} + E\sqrt{\log(k/\delta)}
    \end{align*}

    First, we take a look at the signal-to-noise ratio $\widehat{\mathrm{snr}}_i(d)$. By definition, we have:
    \begin{align*}
        \widehat{\mathrm{snr}}_i(d) &= \frac{\sigma_r(\vZ_i(d))}{U_i}\\
        &= \frac{\sigma_r(\vZ_i(d))}{\sqrt{n^{(d)}} + \sqrt{T_0} + \sqrt{\log(\log(n^{(d)})/\delta)}}
    \end{align*}
    Hence, 
    \begin{equation*}
        \frac{1}{\widehat{\mathrm{snr}}_i(d)} = \frac{\sqrt{n^{(d)}} + \sqrt{T_0} + \sqrt{\log(\log(n^{(d)})/\delta)}}{\sigma_r(\vZ_i(d))}
    \end{equation*}
    Observe that since $\widehat{\mathrm{snr}}_i(d) \geq 2$, we have $\frac{1}{(\widehat{\mathrm{snr}}_i(d))^2} \leq \frac{1}{\widehat{\mathrm{snr}}_i(d)}$. Hence, we can write an upper bound on $\epsilon$ as:
    \begin{align*}
        \epsilon &\leq \frac{A + F}{\widehat{\mathrm{snr}}_i(d)} + \frac{D\sqrt{\log(k/\delta)}}{\widehat{\mathrm{snr}}_i(d)} + E\sqrt{\log(k/\delta)}\\
        &\leq \frac{(A + F)(\sqrt{n^{(d)}} + \sqrt{T_0})}{\sigma_{r_d}(\vZ_i(d))} + \frac{(A + F) \sqrt{\log(\log(n^{(d)})/\delta)}}{\sigma_{r_d}(\vZ_i(d))} \\
        &\quad + \frac{D \left(\sqrt{n} + \sqrt{d} + \sqrt{\log(\log(n^{(d)})/\delta)} \right) \sqrt{\log(k/\delta)}}{\sigma_r(\vZ_i(d))} + E\sqrt{\log(k/\delta)}
    \end{align*}
    Since $\log(x)$ is a strictly increasing function for $x > 0$, we can simplify the above expression as:
    \begin{equation*}
    \begin{aligned}
        \epsilon &\leq \frac{(A + F)(\sqrt{n^{(d)}} + \sqrt{T_0})}{\sigma_{r_d}(\vZ_i(d))} + \frac{(A + F) \sqrt{\log(\nicefrac{\log(n^{(d)}) \vee k}{\delta}})}{\sigma_{r_d}(\vZ_i(d))} \\
        &\quad + \frac{D (\sqrt{n} + \sqrt{d})\sqrt{\log(\nicefrac{\log(n^{(d)}) \vee k}{\delta})}}{\sigma_{r_d}(\vZ_i(d))} + \frac{ D \log(\nicefrac{\log(n^{(d)}) \vee k}{\delta})}{\sigma_{r_d}(\vZ_i(d))} + E\sqrt{\log(\nicefrac{\log(n^{(d)}) \vee k}{\delta})}
    \end{aligned}
    \end{equation*}
    Subtracting the first term from both sides and multiplying by $\sigma_r(\vZ_i(d))$, we have:
    \begin{align*}
        &\quad \sigma_{r_d}(\vZ_i(d))\epsilon - (A + F)(\sqrt{n^{(d)}} + \sqrt{T_0}) \\
        &\leq  (A + F) \sqrt{\log(\nicefrac{\log(n^{(d)}) \vee k}{\delta}}) + D (\sqrt{n} + \sqrt{d})\sqrt{\log(\nicefrac{\log(n^{(d)}) \vee k}{\delta})}
          + D \log(\nicefrac{\log(n^{(d)}) \vee k}{\delta}) \\
          &\quad + E\sigma_{r_d}(\vZ_i(d))\sqrt{\log(\nicefrac{\log(n^{(d)}) \vee k}{\delta})}\\
        &= (A + F + D (\sqrt{n} + \sqrt{d})+ E\sigma_{r_d}(\vZ_i(d)))\sqrt{\log(\nicefrac{\log(n^{(d)}) \vee k}{\delta})} +  D \log(\nicefrac{\log(n^{(d)}) \vee k}{\delta})
    \end{align*}
    Let $\alpha = A + F + D (\sqrt{n} + \sqrt{d})+ E\sigma_{r_d}(\vZ_i(d))$, we can rewrite the inequality above as:
    \begin{align*}
        \sigma_r(\vZ_i(d))\epsilon - (A + F)(\sqrt{n^{(d)}} + \sqrt{T_0}) \leq \alpha\sqrt{\log(\nicefrac{\log(n^{(d)}) \vee k}{\delta})} + D \log(\nicefrac{\log(n^{(d)}) \vee k}{\delta})  
    \end{align*}
    Then, we can complete the square and obtain:
    \begin{align*}
        &\log(\nicefrac{\log(n^{(d)}) \vee k}{\delta})  + \frac{\alpha}{D}\sqrt{\log(\nicefrac{\log(n^{(d)}) \vee k}{\delta})} + \frac{\alpha^2}{4D^2} \geq \frac{\sigma_{r_d}(\vZ_i(d))\epsilon - (A + F)(\sqrt{n^{(d)}} + \sqrt{T_0})}{D} + \frac{\alpha^2}{4D^2} \\
        &\iff \left( \sqrt{\log(\nicefrac{\log(n^{(d)}) \vee k}{\delta})} + \frac{\alpha}{2D} \right)^2 \geq \frac{\sigma_{r_d}(\vZ_i(d))\epsilon - (A + F)(\sqrt{n^{(d)}} + \sqrt{T_0})}{D} + \frac{\alpha^2}{4D^2}\\
        &\iff \sqrt{\log(\nicefrac{\log(n^{(d)}) \vee k}{\delta})} + \frac{\alpha}{2D} \geq \sqrt{\frac{\sigma_{r_d}(\vZ_i(d))\epsilon - (A + F)(\sqrt{n^{(d)}} + \sqrt{T_0})}{D} + \frac{\alpha^2}{4D^2}}\\
        &\iff \sqrt{\log(\nicefrac{\log(n^{(d)}) \vee k}{\delta})} \geq \sqrt{\frac{\sigma_{r_d}(\vZ_i(d))\epsilon - (A + F)(\sqrt{n^{(d)}} + \sqrt{T_0})}{D} + \frac{\alpha^2}{4C^2}} - \frac{\alpha}{2D}\\
        &\iff \log(\nicefrac{\log(n^{(d)}) \vee k}{\delta}) \geq \left( \sqrt{\frac{\sigma_{r_d}(\vZ_i(d))\epsilon - (A + F)(\sqrt{n^{(d)}} + \sqrt{T_0})}{D} + \frac{\alpha^2}{4D^2}} - \frac{\alpha}{2D} \right)^2\\
        &\iff \frac{\log(n^{(d)}) \vee k}{\delta} \geq \exp \left( \left( \sqrt{\frac{\sigma_{r_d}(\vZ_i(d))\epsilon - (A + F)(\sqrt{n^{(d)}} + \sqrt{T_0})}{D} + \frac{\alpha^2}{4D^2}} - \frac{\alpha}{2D} \right)^2 \right)\\
        &\iff \delta \leq \frac{\log(n^{(d)}) \vee k}{\exp \left( \left( \sqrt{\frac{\sigma_{r_d}(\vZ_i(d))\epsilon - (A + F)(\sqrt{n^{(d)}} + \sqrt{T_0})}{D} + \frac{\alpha^2}{4D^2}} - \frac{\alpha}{2D} \right)^2 \right)}
    \end{align*}
\end{proof}

\subsection{Proof of~\Cref{thm:ie-noise}}\label{sec:proof}

\begin{assumption}\label{ass:knowledge} [Unit Knowledge for Synthetic Control] We assume that the following are common knowledge among all units and the principal:
\begin{enumerate}[labelwidth=1.5em, labelsep=0.5em, leftmargin=1.5em]
    \item Valid upper- and lower-bounds on the fraction of type $1$ units in the population, i.e. if the true proportion of type $1$ units is $p_1 \in (0,1)$, the principal and units know $p_L, p_H \in (0, 1)$ such that $p_L \leq p_1 \leq p_H$. 
    \item Valid upper- and lower-bounds on the prior mean for each receiver type and intervention, i.e. values $\mu^{(d,\tau, h)}, \mu^{(d,\tau, l)} \in \mathbb{R}$ such that $\mu^{(d,\tau, l)} \leq \mu_{v_i}^{(d)} \leq \mu^{(d, \tau, h)}$ for all units $i \in \range{N}$, receiver types $\tau \in \{0,1\}$, and interventions $d \in \{0, 1\}$. 
    \item For some $C \in (0, 1)$, a lower bound on the smallest probability of the event $\xi_{C,i}$ over the priors and latent factors of type $1$ units, denoted by $\min_{i \in \cT^{(1)}} \Pr_{\cP_{i}}[\xi_{C,i}] \geq \zeta_C > 0$, where $\xi_{C,i} := \left \{ \mu_{i}^{(0)} \geq \Bar{y}_{i, post}^{(1)} + C  \right \}$ 
    %
    %
    and $\Bar{y}_{i, post}^{(1)}$ is the average post-intervention outcome for unit $i$ under intervention $1$. 
    \item A sufficient number of observations $N_0$ needed such that the~\ref{ass:overlap} is satisfied for type $1$ units \underline{under the treatment} with probability at least $1 - \delta$, for some $\delta \in (0, 1)$.
\end{enumerate}    
\label{ass:noise-knowledge}
\end{assumption}
At a high level, part (3) requires that (i) there is a non-zero probability under each unit $i$'s prior that $\Bar{y}_{i,post}^{(0)} \geq \Bar{y}_{i,post}^{(1)}$ and (ii) the principal will be able to infer a lower bound on this probability given the set of observed outcomes for units in the first phase.\footnote{Such ``fighting chance'' assumptions are very common (and are necessary) in the literature on incentivized exploration in bandits.} 

\paragraph{Validity of \Cref{ass:noise-knowledge}.3}
First, we note that in the first stage of \Cref{alg:ie_noise}, the principal does not provide any recommendation to the units and instead lets them pick their preferred intervention. The goal of this first stage is to ensure the linear span inclusion assumption (\ref{ass:overlap}) is satisfied for type $1$ units and intervention $1$. This condition is equivalent to having enough samples of type $1$ units such that the set of latent vectors $\{v_i\}_{i \in \cI^{(1)}}$ spans the latent vector space $S_1$. We invoke the following theorem from \citet{vershynin2018high} that shows $\textrm{span}(\{v_i\}_{i \in \cI^{(1)}}) = S_1$ with high probability:
\begin{theorem}[Theorem 4.6.1 of \citet{vershynin2018high}] Let $A$ be an $m \times n$ matrix whose rows $A_i$ are independent, mean zero, sub-gaussian isotropic random vectors in $\mathbb{R}^n$. Then for any $t \geq 0$ we have with probability at least $1 - 2 \exp(-t^2)$:
\begin{equation}
    \sqrt{m} - \cver K^2 (\sqrt{n} + t ) \leq s_n(A) \leq s_1(A) \leq \sqrt{m} + \cver K^2 (\sqrt{n} + t)
\end{equation} 
where $K = \max_i \norm{A_i}_{\psi_2}$ and $\cver$ is an absolute constant.
\label{thm:two-sided-bound-subgaussian-matrix}
\end{theorem}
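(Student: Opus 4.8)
The plan is to prove this the standard way for extreme singular values of a matrix with independent sub‑gaussian isotropic rows, namely by controlling the operator norm $\norm{\tfrac1m A^\top A - I_n}$ via a net‑plus‑concentration argument. The reduction I would use at the start: it suffices to exhibit an event of probability at least $1 - 2\exp(-t^2)$ on which $\norm{\tfrac1m A^\top A - I_n} \le \max(\delta,\delta^2)$, where $\delta := \cver K^2(\sqrt n + t)/\sqrt m$. Indeed, this forces every eigenvalue $z$ of the PSD matrix $\tfrac1m A^\top A$ into $[1-\max(\delta,\delta^2),\, 1+\max(\delta,\delta^2)]$, and since $\abs{\sqrt z - 1} \le \abs{z-1}$ for $z \ge 0$, every singular value of $A/\sqrt m$ lies in $[1-\delta,1+\delta]$ whenever $\delta \le 1$ — which is exactly the claimed two‑sided inequality after multiplying by $\sqrt m$ — while for $\delta > 1$ the lower bound $\sqrt m - \cver K^2(\sqrt n + t)$ is already nonpositive and the upper bound still follows from $\sqrt{1+\delta^2}\le 1+\delta$.

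\textbf{Step 1 (discretization).} Fix a $\tfrac14$‑net $\cN$ of the unit sphere $S^{n-1}$ with $\abs{\cN} \le 9^n$. A routine approximation lemma gives $\norm{\tfrac1m A^\top A - I_n} \le 2\max_{x \in \cN}\abs{\tfrac1m\norm{Ax}_2^2 - 1}$, so it is enough to control $\tfrac1m\norm{Ax}_2^2 - 1$ uniformly over the net.

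\textbf{Step 2 (concentration in a fixed direction).} For fixed $x \in S^{n-1}$ write $\norm{Ax}_2^2 = \sum_{i=1}^m \langle A_i,x\rangle^2$. Each $\langle A_i,x\rangle$ is sub‑gaussian with $\norm{\langle A_i,x\rangle}_{\psi_2}\le K$, and isotropy gives $\E\langle A_i,x\rangle^2 = 1$; hence $\langle A_i,x\rangle^2 - 1$ is mean‑zero sub‑exponential with $\psi_1$‑norm at most an absolute constant times $K^2$, and these summands are independent over $i$. Bernstein's inequality then yields $\bP\!\left(\abs{\tfrac1m\sum_i\langle A_i,x\rangle^2 - 1}\ge s\right) \le 2\exp\!\left(-c\,m\min(s^2/K^4,\, s/K^2)\right)$ for an absolute constant $c$.

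\textbf{Step 3 (union bound and bookkeeping).} Applying Step 2 with $s = \tfrac12\max(\delta,\delta^2)$ and $\cver$ chosen large enough makes the Bernstein exponent $c\,m\min(s^2/K^4, s/K^2)$ at least $n\ln 9 + t^2$ — the $\min(\cdot,\cdot)$ in Bernstein is precisely what makes the two regimes $\delta\le 1$ and $\delta>1$ interpolate, which is where the additive $\sqrt n + t$ shape comes from — so a union bound over the $\le 9^n$ points of $\cN$ gives $\max_{x\in\cN}\abs{\tfrac1m\norm{Ax}_2^2-1}\le \tfrac12\max(\delta,\delta^2)$ with probability at least $1-2\exp(-t^2)$. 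Combining with Step 1 yields $\norm{\tfrac1m A^\top A - I_n}\le \max(\delta,\delta^2)$ on that event, and the reduction above finishes the proof after renaming absolute constants.

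The main obstacle is Steps 2–3: one has to correctly identify $\langle A_i,x\rangle^2$ as sub‑exponential with norm of order $K^2$, apply Bernstein in its two‑regime $\min(s^2,s)$ form, and then calibrate the deviation level $s$ so that the exponent simultaneously dominates $n\ln 9$ (from the cardinality of the net) and $t^2$ (the desired failure probability) — this calibration is exactly what forces the $\max(\delta,\delta^2)$ and hence the $\sqrt n + t$ term. By comparison, the net approximation in Step 1 and the scalar inequality $\abs{\sqrt z - 1}\le\abs{z-1}$ used in the reduction are routine.
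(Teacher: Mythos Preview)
Your proof is correct and is essentially the standard $\varepsilon$-net plus Bernstein argument that Vershynin gives for this theorem in his book. Note, however, that the paper itself does not prove this statement at all: it is quoted verbatim as Theorem~4.6.1 of \citet{vershynin2018high} and invoked as a black box, so there is no ``paper's own proof'' to compare against beyond the citation. Your write-up matches the source proof in Vershynin's text in all essential respects (the reduction to $\norm{\tfrac1m A^\top A - I_n}\le \max(\delta,\delta^2)$, the net of cardinality $9^n$, the sub-exponential Bernstein bound on $\langle A_i,x\rangle^2-1$, and the calibration that produces the $\sqrt n + t$ shape).
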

Hence, after observing $N_0^{(1)}$ samples of type $1$ units taking intervention $1$, the linear span inclusion assumption is satisfied with probability at least $1 - 2\exp \left( - \left(\frac{\sqrt{N_0^{(1)}}}{\cver K^2} - \sqrt{r_{1}} \right)^2 \right)$.


%
\begin{theorem}
    Suppose the there are two interventions, and assume that~\Cref{ass:noise-knowledge} holds for some constant gap $C \in (0, 1)$. 
    %
    %
    If $N_0$ is chosen to be large enough such that ~\ref{ass:overlap} is satisfied for all units of type $1$ under treatment with probability at least $1 - \delta_0$, then
    \Cref{alg:ie_noise} with parameters $\delta, B, C$ is BIC for all units of type $1$ if 
    \begin{equation*}
    L \geq 1 + \max_{i \in \cT^{(1)}} \left\{ \frac{\mu_{v_i}^{(1)} - \mu_{v_i}^{(0)}}{\left( C - \alpha(\delta_{PCR}) - \sigma \sqrt{\frac{2\log(1/\delta_{\epsilon_i})}{T_1}} \right) \Pr \left[ \mu_{v_i}^{(0)} - \bar{y}_{i,post}^{(1)} \geq C - \alpha(\delta_{PCR}) - \sigma \sqrt{\frac{2 \log(1/\delta_{\epsilon_i})}{T_1}} \right] - 2 \delta}\right\}
    \end{equation*}
    where $\alpha(\delta_{PCR})$ is the high-probability confidence bound defined in \Cref{thm:pcr-concentration},  $\delta = \delta_{\epsilon_i} + \delta_0 + \delta_{PCR}$ and $\delta_{\epsilon_i} \in (0,1)$ is the failure probability of the Chernoff-Hoeffding bound on the average of sub-Gaussian random noise $\{\epsilon_{i,t}^{(1)}\}_{t=T_0+1}^T$, and
    \begin{align*}
        \delta_{PCR} \leq \frac{\log(N_0^{(1)}) \vee k}{\exp \left( \left( \sqrt{\frac{\sigma_r(Y_{pre, N_0}^{(1)})\left(C/2 \right) - (A + F)(\sqrt{N_0^{(1)}} + \sqrt{T_0})}{D} + \frac{\alpha^2}{4D^2}} - \frac{\alpha}{2D} \right)^2 \right)} 
    \end{align*}
    and $\kappa\left(Y_{pre, N_0}^{(1)} \right) = \nicefrac{\sigma_1 \left(Y_{pre, N_0}^{(1)} \right)}{\sigma_{r_1} \left(Y_{pre, N_0}^{(1)} \right) }$ is the condition number of the matrix of observed pre-intervention outcomes for the subset of the first $N_0$ units who have taken the treatment. 
    The remaining variables in $\delta_{PCR}$ are defined as
    $A = 3\sqrt{T_0} \left( \frac{\Gamma (\sqrt{74} + 12\sqrt{6} \kappa(Y_{pre, N_0}^{(1)})}{T - T_0}  + \frac{\sqrt{2}}{\sqrt{T - T_0}}\right)$, 
    $F = \frac{2 \Gamma \sqrt{24T_0}}{T - T_0} + \frac{12 \Gamma \kappa(Y_{pre, N_0}^{(1)}) \sqrt{3T_0}}{T - T_0} + \frac{2}{\sqrt{T - T_0}}$, 
    $D = \frac{\Gamma \sigma \sqrt{148}}{T - T_0} + \frac{24 \sigma \kappa(Y_{pre, N_0}^{(1)}) \sqrt{6}}{T - T_0} + \frac{2\sigma}{\sqrt{T - T_0}}$, 
    $E = \frac{ \sqrt{2} \Gamma \sigma}{T - T_0}$, and $\alpha = A + F + D(\sqrt{N_0^{(1)}} + \sqrt{T_0}) + \sigma_{r_1}(Y_{pre, N_0}^{(1)}) E$. 
    Moreover if the number of batches $B$ is chosen to be large enough such that with probability at least $1 - \delta$, $\rnk([\E[\vy_{i,pre}]^\top : \widehat d_i = 0, i \in \cT^{(1)}]_{i=1}^{N_0 + B \cdot L}) = \rnk([\E[\vy_{j,pre}]^\top]_{j \in \cT^{(1)}})$, then the~\ref{ass:overlap} will be satisfied for all type $1$ units under control with probability at least $1 - 2 \delta$.
\label{lem:bic-two-types}
\end{theorem}
\begin{proof}
At a particular time step $t$, unit $i$ of type $1$ can be convinced to pick control if $\E_{v_i}[\bar{y}_{i,post}^{(0)} - \bar{y}_{i,post}^{(1)} | \widehat{d} = 0] \Pr[\widehat{d} = 0] \geq 0$. There are two possible disjoint events under which unit $i$ is recommended intervention $0$: either intervention $0$ is better empirically, i.e. $\mu_{v_i}^{(0)} \geq \widehat{\E}[\bar{y}_{i,post}^{(1)}] + C$,  or intervention $1$ is better and unit $i$ is chosen for exploration. Hence, we have
\begin{align*}
    &\E_{v_i}[\bar{y}_{i,post}^{(0)} - \bar{y}_{i,post}^{(1)} | \widehat{d} = 0] \Pr[\widehat{d} = 0]\\
    &= \E[\bar{y}_{i,post}^{(0)} - \bar{y}_{i,post}^{(1)} | \mu_{v_i}^{(0)} \geq \widehat{\E}[\bar{y}_{i,post}^{(1)}] + C] \Pr[\mu_{v_i}^{(0)} \geq \widehat{\E}[\bar{y}_{i,post}^{(1)}] + C ] \left( 1 - \frac{1}{L} \right) + \frac{1}{L}\E_{v_i}[\bar{y}_{i,post}^{(0)} - \bar{y}_{i,post}^{(1)}]\\
    &= \left( \mu_{v_i}^{0} - \E_{v_i}[\bar{y}_{i,post}^{(1)} | \mu_{v_i}^{(0)} \geq \widehat{\E}[\bar{y}_{i,post}^{(1)}] + C] \right)\Pr[\mu_{v_i}^{(0)} \geq \widehat{\E}[\bar{y}_{i,post}^{(1)}] + C] \left(1 - \frac{1}{L} \right) + \frac{1}{L} (\mu_{v_i}^{(0)} - \mu_{v_i}^{(1)})
\end{align*}
Rearranging the terms and taking the maximum over all units of type $1$ gives the lower bound on phase length $L$:
\begin{equation*}
    L \geq 1 + \frac{\mu_{v_i}^{(1)} - \mu_{v_i}^{(0)}}{(\mu_{v_i}^{(0)} - \E_{v_i}[\widehat{\E}[\bar{y}_{i,post}^{(1)}] | \widehat{\xi}_{C, i}]) \Pr_{v_i}[\widehat{\xi}_{C, i}])}
\end{equation*}
To complete the analysis, we want to find a lower bound for the terms in the denominator. That is, we want to lower bound
\begin{equation*}
    \left( \mu_{v_i}^{(0)} - \E_{v_i}[\bar{y}_{i,post}^{(1)} | \mu_{v_i}^{(0)} \geq \widehat{\E}[\bar{y}_{i,post}^{(1)}] + C] \right) \Pr_{v_i}[\mu_{v_i}^{(0)} \geq \widehat{\E}[\bar{y}_{i,post}^{(1)}] + C]
\end{equation*}
Let $\xi_0$ denote the event that ~\ref{ass:overlap} is satisfied for type $1$ units under treatment. From \Cref{ass:knowledge}, we know that event $\xi_0$ occurs with probability at least $1-\delta_0$ for some $\delta_0 \in (0,1)$. Let $\xi_{PCR}$ denote the event that the high-probability concentration bound holds for units of type $1$ and intervention $1$. Finally, let $\xi_{\epsilon_i}$ denote the event where  the Chernoff-Hoeffding bound holds on the average of the sub-Gaussian noise $\epsilon_{i,t}^{(1)}$, that is with probability at least $1 - \delta_{\epsilon_i}$, we have:
\begin{align*}
    \abs{\frac{1}{T_1} \sum_{t=T_0 + 1}^T \epsilon_{i,t}^{(1)}} \leq \sigma \sqrt{\frac{2 \log(1/\delta_{\epsilon_i})}{T_1}} 
\end{align*}

Define another clean event $\xi$ where the events $\xi_0$, $\xi_{\epsilon_i}$ and $\xi_{PCR}$ happen simultaneously with probability at least $1 - \delta$, where $\delta = \delta_0 + \delta_{PCR} + \delta_{\epsilon_i}$. 
Then, we have:
\begin{align*}
    &\quad (\mu_{v_i}^{(0)} - \E_{v_i}[\bar{y}_{i,post}^{(1)} | \widehat{\xi}_{C, i}]) \Pr[\widehat{\xi}_{C, i}]\\
    &=  (\mu_{v_i}^{(0)} - \E_{v_i}[\bar{y}_{i,post}^{(1)} | \widehat{\xi}_{C, i}, \xi]) \Pr[\widehat{\xi}_{C, i}, \xi] +  (\mu_{v_i}^{(0)} - \E_{v_i}[\bar{y}_{i,post}^{(1)} | \widehat{\xi}_{C, i}, \neg \xi]) \Pr[\widehat{\xi}_{C, i}, \neg \xi]\\
    &\geq (\mu_{v_i}^{(0)} - \E_{v_i}[\bar{y}_{i,post}^{(1)} | \widehat{\xi}_{C, i}, \xi]) \Pr[\widehat{\xi}_{C, i}, \xi] - 2\delta \tag{since $\Pr[\neg \xi] < \delta$ and $\mu_{v_i}^{(0)} - \E_{v_i}[\bar{y}_{i,post}^{(1)}] \geq -2$ }
\end{align*}
We proceed to lower bound the expression above as follows:
\begin{align*}
    &\quad \mu_{v_i}^{(0)} - \E_{v_i}[\bar{y}_{i,post}^{(1)} | \widehat{\xi}_{C, i}, \xi] \\
    &= \mu_{v_i}^{(0)} - \E_{v_i}\left[\bar{y}_{i,post}^{(1)} | \mu_{v_i}^{(0)} \geq \widehat{\E}[\bar{y}_{i,post}^{(1)} + C], \abs{\widehat{\E}[\bar{y}_{i,post}^{(1)}] - \E[\bar{y}_{i,post}^{(1)}]} \leq \alpha(\delta_{PCR}),  \abs{\frac{1}{T_1} \sum_{t=T_0 + 1}^T \epsilon_{i,t}^{(1)}} \leq \sigma \sqrt{\frac{2 \log(1/\delta_{\epsilon_i})}{T_1}} \right] \\
    &= \mu_{v_i}^{(0)} - \E_{v_i} \left[\bar{y}_{i,post}^{(1)} | \mu_{v_i}^{(0)} - \E[\bar{y}_{i,post}^{(1)}] \geq C - \alpha(\delta_{PCR}),  \abs{\frac{1}{T_1} \sum_{t=T_0 + 1}^T \epsilon_{i,t}^{(1)}} \leq \sigma \sqrt{\frac{2 \log(1/\delta_{\epsilon_i})}{T_1}} \right] \\
    &= \mu_{v_i}^{(0)} - \E[v_i] \left[ \bar{y}_{i,post}^{(1)} | \mu_{v_i}^{(0)} - \bar{y}_{i,post}^{(1)} \geq C - \alpha(\delta_{PCR}) -  \sigma \sqrt{\frac{2 \log(1/\delta_{\epsilon_i})}{T_1}} \right] \\
    &\geq C - \alpha(\delta_{PCR}) - \sigma \sqrt{\frac{2 \log(1/\delta_{\epsilon_i})}{T_1}} 
\end{align*}
Furthermore, we can write the probability of the joint event $\widehat{\xi}_{C, i}, \xi$ as:
\begin{align*}
    &\quad \Pr[\widehat{\xi}_{C, i}, \xi]\\
    &= \Pr \left[\mu_{v_i}^{(0)} \geq \widehat{\E}[\bar{y}_{i,post}^{(1)} + C], \abs{\widehat{\E}[\bar{y}_{i,post}^{(1)}] - \E[\bar{y}_{i,post}^{(1)}]} \leq \alpha(\delta_{PCR}),  \abs{\frac{1}{T_1} \sum_{t=T_0 + 1}^T \epsilon_{i,t}^{(1)}} \leq \sigma \sqrt{\frac{2 \log(1/\delta_{\epsilon_i})}{T_1}} \right]\\
    &= \Pr \left[ \mu_{v_i}^{(0)} - \bar{y}_{i,post}^{(1)} \geq C - \alpha(\delta_{PCR}) - \sigma \sqrt{\frac{2 \log(1/\delta_{\epsilon_i})}{T_1}} \right]
\end{align*}
Hence, we can derive the following lower bound on the denominator of $L$:
\begin{align*}
    &\quad \mu_{v_i}^{(0)} - \E_{v_i}[\bar{y}_{i,post}^{(1)} | \widehat{\xi}_{C, i}] \\
    &\geq \left( C - \alpha(\delta_{PCR}) - \sigma \sqrt{\frac{2\log(1/\delta_{\epsilon_i})}{T_1}} \right) \Pr \left[ \mu_{v_i}^{(0)} - \bar{y}_{i,post}^{(1)} \geq C - \alpha(\delta_{PCR}) - \sigma \sqrt{\frac{2 \log(1/\delta_{\epsilon_i})}{T_1}} \right] - 2 \delta
\end{align*}

Applying this lower bound to the expression of $L$ and taking the maximum over type $1$ units, we have:
\begin{equation*}
    L \geq 1 + \max_{i \in \cI^{(1)}} \left\{ \frac{\mu_{v_i}^{(1)} - \mu_{v_i}^{(0)}}{\left( C - \alpha(\delta_{PCR}) - \sigma \sqrt{\frac{2\log(1/\delta_{\epsilon_i})}{T_1}} \right) \Pr \left[ \mu_{v_i}^{(0)} - \bar{y}_{i,post}^{(1)} \geq C - \alpha(\delta_{PCR}) - \sigma \sqrt{\frac{2 \log(1/\delta_{\epsilon_i})}{T_1}} \right] - 2 \delta }\right\}
\end{equation*}

\end{proof}

\begin{example}\label{ex:instantiation}
    Consider the following data-generating process: 
    Suppose there are two receiver types, each with equal probability; type $1$ units prefer the treatment and type $0$ units prefer control. 
    Let $v_i \sim \unif[0.25, 0.75]$.
    If unit $i$ is a type 1 unit, they have latent factor $\mathbf{v}_i = [v_i \; 0]$. 
    Otherwise they have latent factor $\mathbf{v}_i = [0 \; v_i]$. 
    Suppose that the pre-intervention period is of length $T_0=2$ and the post-intervention period is of length $T_1 = 1$. 
    We leave $\vu_1^{(0)}, \vu_2^{(0)}, \vu_3^{(0)}, \vu_3^{(1)}$ unspecified. 

    Since $r_1 = 1$, we only need to incentivize a single type $1$ unit to take the control in order for the~\ref{ass:overlap} to be satisfied for type 1 units under control. 
    Therefore, for a given confidence level $\delta$, it suffices to set $N_0 = B = \frac{\log(1/\delta)}{\log(2)}$. 
    After the first $N_0$ time-steps, the principal will know all of the parameters necessary to compute the batch size $L$. 

    Since $T_1 = 1$, unit $i$'s prior is over $\{\E[y_{i,3}^{(0)}], \E[y_{i,3}^{(1)}]\}$. 
    For simplicity, suppose that (1) all type $1$ units believe that $\E[y_{i,3}^{(0)}] \sim \unif[0, 0.5]$ and $\E[y_{i,3}^{(1)}] \sim \unif[0, 1]$ and (2) there is no noise in the post-intervention outcome. 
    Under this setting, the event $\xi_{C,i}$ simplifies to $\xi_{C,i} = \left \{ 0.25 \leq \E[y_{i, 3}^{(1)}] + C  \right \}$ for any unit $i$ and constant $C \in (0, 1)$. 
    Therefore
    \begin{equation*}
        \zeta_C = \Pr_{\cP_i}[\E[y_{i, 3}^{(1)}] \leq 0.25 - C] = \max\{0, 0.25-C\}.
    \end{equation*}
\end{example}
\section{Extension to synthetic interventions}\label{sec:extension}
\begin{algorithm}[t]
    \SetAlgoNoLine
    \KwIn{Group size $L$, number of batches $B$, failure probability $\delta$, fixed gap $C \in (0,1)$} 
    Provide no recommendation to first $N_0$ units.\\
    \tcp{Loop through explore interventions}
    \For{$\ell = \boldsymbol{\tau}[k], \boldsymbol{\tau}[k-1], \ldots, \boldsymbol{\tau}[2]$}
    {
        \For{batch $b = 1, 2, \cdots, B$}
        {
            Select an explore index $i_b \in [L]$ uniformly at random.\\
            \For{$j = 1, 2, \cdots, L$}
            {
                \uIf{$j = i_b$}
                {
                    Recommend intervention $\widehat{d}_{N_0 + (b-1)\cdot L + j} = \ell$
                }
                \Else
                {
                    \uIf{$\widehat{\E}[\bar{y}_{i,post}^{(\boldsymbol{\tau}[1])}] + C < \widehat{\E}[\bar{y}_{i, post}^{(d)}] < \mu^{(\ell, \boldsymbol{\tau}, l)} - C $ for every intervention $d \in \{\ell-1, \ldots, \boldsymbol{\tau}[k]\}$}
                    {
                        Recommend intervention $\widehat{d}_{N_0 + (b-1)\cdot L + j} = \ell$
                    }
                    \Else
                    {
                        Recommend intervention $\widehat{d}_{N_0 + (b-1)\cdot L + j} = \boldsymbol{\tau}[1]$
                    }
                }
            }
        }
    }
    \caption{Incentivizing Exploration for Synthetic Interventions: sub-type $\boldsymbol{\tau}$ units}
\label{alg:ie-k-types}
\end{algorithm}

%
%
Our focus so far has been to incentivize units who \emph{a priori} prefer the (single) treatment to take the control in order to obtain accurate counterfactual estimates for all units under control. 
In this section, we extend~\Cref{alg:ie_noise} to the setting where there are \emph{multiple} treatments. 
Now our goal is to incentivize enough exploration across all interventions such that the~\ref{ass:overlap} is satisfied \emph{for every intervention} and thus our counterfactual estimates are valid simultaneously for all units in the population under every intervention. 
In order to do so, we use tools from the literature on \emph{synthetic interventions}, which is a generalization of the synthetic control framework to allows for counterfactual estimation under different treatments, in addition to control \cite{agarwal2020synthetic,agarwal2023adaptive}. 

Our setup is the same as in~\Cref{sec:main}, with the only difference being that each unit may choose one of $k \geq 2$ interventions after the pre-treatment period. We assume that~\Cref{ass:lfm},~\Cref{def:policy},~\Cref{ass:unit-beliefs}, and~\Cref{ass:span} are all extended to hold under $k$ interventions.
Under this setting, a unit's beliefs induce a preference ordering over different interventions. 
We capture this through the notion of a unit \emph{sub-type}, which is a generalization of our definition of unit type to the setting with $k$ interventions.\looseness-1 
\begin{definition}[Unit Sub-type]
    A unit sub-type $\boldsymbol{\tau} \in [k]_0^k$ is a preference ordering over interventions. 
    Unit $i$ is of sub-type $\boldsymbol{\tau}$ if for every $\kappa \leq k$, 
    \begin{equation*}
        \boldsymbol{\tau}[\kappa] = \arg\max_{d \in \range{k}_0 \backslash \{\boldsymbol{\tau}[\kappa']\}_{\kappa' < \kappa}} \pv{\mu}{d}_{v_i}.
    \end{equation*}
    Unit $i$'s type is $\boldsymbol{\tau}[1]$. 
    We denote the set of all units of sub-type $\vtau$ as $\cT^{(\vtau)}$ and the dimension of the latent subspace spanned by sub-type $\vtau$ units as $r_{\vtau} := \spn(\mathbf{v}_j : j \in \cT^{(\vtau)})$. 
\end{definition}
Intuitively, a unit sub-type $\vtau$ is just the preference ordering over interventions such that intervention $\vtau[a]$ is preferred to intervention $\vtau[b]$ for $a < b$.\footnote{Note that it is without loss of generality to assume that units have a preference ordering over the interventions, as long as any tie-breaking is done in a known and deterministic way.}
Since different unit sub-types have different preference orderings over interventions, we design our algorithm (\Cref{alg:ie-k-types}) to be BIC for all units of a given sub-type (in contrast to~\Cref{alg:ie_noise}, which is BIC for all units of a given type). 

While there can be up to $k!$ possible sub-types, recall that the unit latent space has dimension $r$. 
Therefore, there can be at most $r$ linearly independent subspaces that matter for overlap. 
As a result, if we want to satisfy overlap for all $k!$ subtypes, we only need to run~\Cref{alg:ie-k-types} for enough unit subtypes to span this $r$ dimensional space. 
\footnote{This is because if the~\ref{ass:overlap} is satisfied for a specific unit sub-type whose latent factors lie within a particular subspace, it is also satisfied for all other units whose latent factors lie within the same subspace, and there are of course at most $r$ different subspaces by~\Cref{ass:lfm}.}

For a given sub-type $\boldsymbol{\tau}$, the goal of~\Cref{alg:ie-k-types} is to incentivize sufficient exploration for the~\ref{ass:overlap} to be satisfied for all interventions with high probability. 
As was the case in~\Cref{sec:main}, the algorithm will still be split into two phases, and units will not be given any recommendation in the first phase.
%
%
%
The main difference compared to~\Cref{alg:ie_noise} is how the principal chooses the exploit intervention in the second phase. 
When there are only two interventions, the principal can compare their estimate of the average counterfactual outcome under treatment to the lower bound on the prior-mean average counterfactual outcome for control for type $1$ units (and vice versa for type $0$ units). 
However, a more complicated procedure is required to maintain Bayesian incentive-compatibility when there are more than two interventions. 
%
%
%
%

Instead, in~\Cref{alg:ie-k-types}, the principal will set the exploit intervention as follows: 
For receiver sub-type $\boldsymbol{\tau}$,~\Cref{alg:ie-k-types} sets intervention $\ell$ to be the exploit intervention for unit $i$ only if the sample average of \emph{all} interventions $d \in \{\ell - 1, \ldots, \boldsymbol{\tau}[k]\}$ are both (1) larger than the sample average of intervention $\boldsymbol{\tau}[1]$ by some constant gap $C$ and (2) less than the lower bound on the prior-mean average counterfactual outcome $\mu^{(\ell)}_{i}$ by $C$.
%
%
If no such intervention satisfies both conditions, intervention $\boldsymbol{\tau}[1]$ is chosen to be the exploit intervention. 

We require the following ``common knowledge'' assumption for~\Cref{alg:ie-k-types}, which is analogous to~\Cref{ass:noise-knowledge}:
\begin{assumption} We assume that the following are common knowledge among all units and the principal:
\begin{enumerate}
    \item Valid upper- and lower-bounds on the fraction of sub-type $\boldsymbol{\tau}$ units in the population, i.e. if the true proportion of sub-type $\boldsymbol{\tau}$ units is $p_{\boldsymbol{\tau}} \in (0,1)$, the principal knows $p_{\boldsymbol{\tau},L}, p_{\boldsymbol{\tau},H} \in (0, 1)$ such that $$p_{\boldsymbol{\tau},L} \leq p_{\boldsymbol{\tau}} \leq p_{\boldsymbol{\tau},H}.$$
    \item Valid upper- and lower-bounds on the prior mean for each receiver sub-type and intervention, i.e. values $\mu^{(d,\vtau, h)}, \mu^{(d,\vtau, l)} \in \mathbb{R}$ such that $$\mu^{(d,\vtau, l)} \leq \mu_{i}^{(d)} \leq \mu^{(d, \vtau, h)}$$ for all receiver sub-types $\vtau$, interventions $d \in \range{k}_0$ and all units $i \in \cT^{(\vtau)}$. 
    \item 
    For some $C \in (0,1)$, a lower bound on the smallest probability of the event $\cE_{C,i}$ over the prior and latent factors of sub-type $\vtau$ units, denoted by $\min_{i \in \cT^{(\vtau)}} \Pr_{\cP_{i}}[\cE_{C,i}] \geq \zeta_C > 0$, where 
    \begin{equation*}
        \cE_{C, i} \coloneqq \{ \forall j \in \range{k}_0 \backslash \{\vtau[1], \vtau[k]\}: \Bar{y}_{i, post}^{(\vtau[1])} + C \leq \Bar{y}_{i,post}^{(j)} \leq \mu_{i}^{(\vtau[k])} - C\}
    \end{equation*}
    and $\bar{y}_{i,post}^{(\ell)}$ is the average post-intervention outcome for unit $i$ under intervention $\ell \in \range{k}_0$.
    \item A sufficient number of observations $N_0$ needed such that the~\ref{ass:overlap} is satisfied for sub-type $\vtau$ units under intervention $\vtau[1]$ with probability at least $1 - \delta$, for some $\delta \in (0, 1)$.
\end{enumerate}
\label{ass:k-interventions-knowledge}
\end{assumption}

\begin{theorem}[Informal; detailed version in~\Cref{thm:k-types}]\label{thm:k-types-informal}
    Under~\Cref{ass:k-interventions-knowledge}, 
    if the number of initial units $N_0$ and batch size $L$ are chosen to be sufficiently large, then~\Cref{alg:ie-k-types} satisfies the BIC property (\Cref{def:BIC}). 

    Moreover, if the number of batches $B$ is chosen to be sufficiently large, then the~\ref{ass:overlap} will be satisfied for all units of sub-type $\vtau$ under all interventions with high probability.
\end{theorem}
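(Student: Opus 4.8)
The plan is to mirror the proof of \Cref{thm:ie-noise}, analyzing \Cref{alg:ie-k-types} one outer iteration $\ell \in \{\vtau[k],\vtau[k-1],\ldots,\vtau[2]\}$ at a time: within each iteration I would establish the BIC property, and then the overlap conclusion follows from a counting argument over iterations together with a union bound. Fix a sub-type $\vtau$ and a unit $i \in \cT^{(\vtau)}$. Since every recommendation issued inside iteration $\ell$ lies in $\{\vtau[1],\ell\}$, only two recommendations need be checked at unit $i$ during that iteration. For a recommendation of $\vtau[1]$: because $\vtau[1]$ is unit $i$'s prior-optimal intervention, I would argue that conditioning on the event ``the exploit branch selected $\vtau[1]$'' cannot make any other intervention look better in expectation --- this event is exactly the complement of the band event that makes $\ell$ ``look good'', so on it the posterior mean of $\bar y_{i,post}^{(\ell)}$ is weakly below that of $\bar y_{i,post}^{(\vtau[1])}$, while for interventions other than $\ell$ the recommendation is uninformative beyond the prior; a short case split then gives $\E_{\cP_{\mathbf v_i}}[\bar y_{i,post}^{(\vtau[1])}-\bar y_{i,post}^{(d')}\mid \widehat d_i=\vtau[1]]\ge 0$ for every $d'$.

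The crux is BIC for a recommendation of $\ell$. First I would decompose the posterior by the reason the recommendation was issued: either (E) unit $i$ was the randomly chosen explore index in its batch, or (X) unit $i$ was an exploit unit and the sample-average band conditions of \Cref{alg:ie-k-types} triggered $\ell$. Conditioned on (X), the realized averages of $\ell$ and of every lower-ranked intervention exceed that of $\vtau[1]$ by $C$ and lie below $\mu^{(\ell,\vtau,l)}-C$; after converting sample averages to expectations (below) this forces $\E[\bar y_{i,post}^{(\ell)}-\bar y_{i,post}^{(d')}\mid \text{(X)}]$ to be positive by a margin of order $C$, simultaneously against $d'=\vtau[1]$ and against every other intervention $d'$. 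Conditioned on (E) the difference can have the wrong sign, but it is bounded below by the common-knowledge prior bounds as $\mu^{(\ell,\vtau,l)}-\mu^{(d',\vtau,h)}$, a constant. Weighting by $\Pr[\text{(X)}]$ and $\Pr[\text{(E)}]$ --- where the fighting-chance event $\cE_{C,i}$ of \Cref{ass:k-interventions-knowledge} gives $\Pr[\text{(X)}]\gtrsim\zeta_C$ up to lower-order PCR error and $\Pr[\text{(E)}]=\Theta(1/L)$ --- and requiring the exploit gain to dominate the explore loss yields a closed-form lower bound on $L$ in terms of $C$, $\zeta_C$, and the common-knowledge prior and population bounds, after taking the worst case over the $k-1$ possible deviating interventions $d'$. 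To make $L$ and $N_0$ computable by the principal, I would note that the estimates $\widehat\E[\bar y_{i,post}^{(d)}]=\tfrac1{T_1}\langle\widehat{\vtheta}_i^{(d)},y_{i,pre}\rangle$ appearing in the band conditions are built, via \Cref{prop:reward} and PCR, from the $N_0$ first-stage units (for $\vtau[1]$ and for the interventions already explored in the earlier iterations $\vtau[k],\ldots$); plugging in the finite-sample PCR guarantee gives, with probability $\ge 1-\delta$, a uniform error bound $|\widehat\E[\bar y_{i,post}^{(d)}]-\E[\bar y_{i,post}^{(d)}]|\le \mathrm{err}(N_0,T_0,T_1,\sigma,\Gamma,\delta)$, and choosing $N_0$ large enough that this error is $\ll C$ lets me replace the sample-average band conditions by their population counterparts at the cost of shrinking the effective gap (say from $C$ to $C/2$). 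The same $N_0$ also discharges item (4) of \Cref{ass:k-interventions-knowledge}, since after $N_0$ un-recommended units enough sub-type-$\vtau$ units have taken their prior-best intervention $\vtau[1]$ for the~\ref{ass:overlap} to hold there with probability $\ge 1-\delta$.

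For the overlap conclusion, condition on the BIC event so that every sub-type-$\vtau$ unit follows its recommendation. In iteration $\ell$ the explore unit of each batch takes $\ell$, and that unit is of sub-type $\vtau$ --- independently across batches --- with probability at least $p_{\vtau,L}$; hence over $B$ batches the number of sub-type-$\vtau$ units observed under $\ell$ stochastically dominates $\mathrm{Binomial}(B,p_{\vtau,L})$ (exploit units of sub-type $\vtau$ that also take $\ell$ only help). Choosing $B$ large enough that, by a Chernoff bound, this count exceeds the PCR sample-complexity threshold for overlap with failure probability at most $\delta/k$ gives the~\ref{ass:overlap} for sub-type $\vtau$ under $\ell$ with high probability. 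A union bound over the at most $k-1$ explore interventions, together with the first-stage guarantee for $\vtau[1]$ and the PCR estimation events, then yields the~\ref{ass:overlap} for sub-type $\vtau$ under all $k$ interventions simultaneously with high probability; and since by \Cref{ass:lfm} there are at most $r$ distinct latent subspaces, it suffices to run the procedure for one sub-type per subspace to cover the entire population.

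The step I expect to be the main obstacle is the BIC analysis for a recommendation of an intermediate intervention $\ell$ (one that is neither prior-best nor prior-worst): unlike the two-intervention setting of \Cref{alg:ie_noise}, the posterior now has to be defended simultaneously against $k-1$ distinct deviations, and the exploit rule couples the estimate for $\ell$ with the estimates for all lower-ranked interventions through the band conditions, so one must verify that the sandwiching event $\cE_{C,i}$ genuinely pins down the sign of every relevant posterior difference and that processing the explore interventions in the order $\vtau[k],\vtau[k-1],\ldots,\vtau[2]$ makes every estimate referenced in iteration $\ell$'s exploit rule already available from the preceding iterations.
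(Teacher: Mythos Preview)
Your overall strategy---split by the two possible recommendations $\vtau[1]$ and $\ell$, decompose the $\ell$ case into explore versus exploit, invoke the fighting-chance event $\cE_{C,i}$ to lower-bound the exploit probability, and solve for $L$---matches the paper's proof closely, and your overlap argument via a Binomial/Chernoff count over batches is exactly what is needed (the paper leaves that step implicit in the statement).

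There is, however, a genuine gap in your treatment of the recommendation $\widehat d_i = \vtau[1]$. You assert that on the complement of the band event $\widehat{\cE}_{C,i}^{(\ell)}$ the posterior mean of $\bar y_{i,post}^{(\ell)}$ is weakly below that of $\bar y_{i,post}^{(\vtau[1])}$, and that for deviations to interventions $d'\notin\{\vtau[1],\ell\}$ the recommendation is ``uninformative beyond the prior''. Neither claim holds as stated: the band condition in \Cref{alg:ie-k-types} is a conjunction over \emph{all} previously explored interventions, so its failure is a disjunction that can be triggered by any single one of them (for instance by $\widehat{\E}[\bar y_{i,post}^{(d)}]\ge \mu^{(\ell,\vtau,l)}-C$ for some $d\neq\ell$, which says nothing about $\ell$ versus $\vtau[1]$), and that same disjunction carries information about every $d'$ in the range, not only about $\ell$. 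The paper sidesteps conditioning on the complement entirely by writing
\[
\E\bigl[\bar y_{i,post}^{(\vtau[1])}-\bar y_{i,post}^{(\vtau[j])}\,\big|\,\neg\widehat{\cE}_{C,i}^{(\ell)}\bigr]\Pr\bigl[\neg\widehat{\cE}_{C,i}^{(\ell)}\bigr]
=\bigl(\mu_{v_i}^{(\vtau[1])}-\mu_{v_i}^{(\vtau[j])}\bigr)+\E\bigl[\bar y_{i,post}^{(\vtau[j])}-\bar y_{i,post}^{(\vtau[1])}\,\big|\,\widehat{\cE}_{C,i}^{(\ell)}\bigr]\Pr\bigl[\widehat{\cE}_{C,i}^{(\ell)}\bigr],
\]
and then shows the second term is nonnegative by \emph{re-using} the PCR concentration from the $\ell$-case: on $\widehat{\cE}_{C,i}^{(\ell)}\cap\cE$ one has $\widehat{\E}[\bar y_{i,post}^{(\vtau[j])}]\ge \widehat{\E}[\bar y_{i,post}^{(\vtau[1])}]+C$, hence $\bar y_{i,post}^{(\vtau[j])}\ge \bar y_{i,post}^{(\vtau[1])}+C-2\alpha(\delta_{PCR})-2\sigma\sqrt{2\log(1/\delta_{\epsilon_i})/T_1}$, and choosing $C$ (and the failure probabilities) so that this margin stays positive finishes the argument. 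So the $\vtau[1]$ case is not the free step you take it to be; it requires the same quantitative machinery as the $\ell$ case rather than a soft monotonicity argument.
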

See \Cref{appendix:k-types-bic-proof} for the proof. 
The main difference compared to the analysis of~\Cref{alg:ie_noise} is in the definition of the ``exploit'' intervention: 
First, when there are $k$ interventions, an intervention $\ell$ is chosen as the exploit intervention only when the sample average of each intervention $d \in \{\ell - 1, \ldots, \vtau[k]\}$
is (i) larger than the sample average of intervention $\vtau[1]$, and (ii) smaller than the prior-mean average outcome of intervention $\ell$ (with some margin $C$). 
Second, instead of choosing between a ``previously best'' intervention (\ie an intervention $\vtau[j] < \ell$ with the highest sample average) and intervention $\ell$ to be the 'exploit' intervention, \Cref{alg:ie-k-types} always choose between $\vtau[1]$ and $\ell$. 
This is because in the former case, conditional on any given intervention being chosen as the exploit intervention, it is an open technical challenge to show that the exploit intervention will have a higher average expected outcome compared to all other interventions.\footnote{This observation is in line with work on frequentist-based BIC algorithms for incentivizing exploration in bandits. See Section 6.2 of~\citet{mansour2019bayesian} for more details.}\looseness-1
\section{Appendix for Section~\ref{sec:extension}: Extension to synthetic interventions}\label{appendix:k-types-bic-proof}
\begin{theorem}\label{thm:k-types}
Suppose that~\Cref{ass:k-interventions-knowledge} holds for some constant gap $C \in (0,1)$. 
%
%
If the number of initial units $N_0$ is chosen to be large enough such that~\ref{ass:overlap} is satisfied for all units of subtype $\vtau$ under intervention $\vtau[1]$ with probability $1 - \delta$, then \Cref{alg:ie-k-types} with parameters $\delta, L, B, C$ is BIC for all units of subtype $\vtau$ if:
\begin{equation*}
    L \geq 1 + \max_{i \in \cT^{(\vtau)}} \left\{ \frac{\mu_{i}^{(\vtau[1])} - \mu_{i}^{(\vtau[k])}}{\left( C - 2\alpha(\delta_{PCR}) - 2\sigma \sqrt{\frac{2\log(1/\delta_{\epsilon_i})}{T_1}} \right) (1 - \delta)\Pr \left[ \cE_{C,i} \right] - 2 \delta }\right\}
\end{equation*}
Moreover, if the number of batches $B$ is chosen to be large enough such that with probability at least $1-\delta$, we have $\rnk([\E[\vy_{i,pre}]^\top : \widehat d_i = \ell \neq \vtau[1], i \in \cT^{(\tau)}]_{i=1}^{N_0 + B \cdot L}) = \rnk([\E[\vy_{j,pre}]^\top]_{j \in \cT^{(\tau)}})$, then the~\ref{ass:overlap} will be satisfied for all units of subtype $\vtau$ under all interventions with probability at least $1 - \mathcal{O}(k \delta)$.
\end{theorem}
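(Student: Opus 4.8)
The plan is to mirror the two-part structure of the proof of~\Cref{lem:bic-two-types}: first verify that the recommendation policy of~\Cref{alg:ie-k-types} is BIC for every unit of sub-type $\vtau$, and then show that a large enough number of batches $B$ forces the stated rank equality, which---by~\Cref{ass:lfm}, the $k$-intervention extension of the Linear Span Inclusion assumption (\Cref{ass:span}), and the discussion in~\Cref{app:pcr}---is equivalent to the Unit Overlap Assumption holding for all sub-type-$\vtau$ units under the relevant intervention. Throughout I would fix $\vtau$ and analyze one iteration of the outer loop at a time, indexed by an explore intervention $\ell \in \{\vtau[k],\vtau[k-1],\dots,\vtau[2]\}$; since the iterations are run sequentially, when analyzing iteration $\ell$ I may condition on the first-phase data and on the data gathered while exploring the less-preferred interventions $\vtau[k],\dots$ processed earlier in the loop, which are already enough for Unit Overlap to hold under those interventions and under $\vtau[1]$.

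For the BIC step, fix $i \in \cT^{(\vtau)}$ and suppose the principal recommends $\widehat d_i = \ell$. The event $\{\widehat d_i = \ell\}$ splits into two disjoint sub-events: (i) $i$ is the randomly chosen explore index, which has probability exactly $1/L$ and is independent of the outcomes; and (ii) $i$ is an exploit unit whose exploit test for $\ell$ succeeds, i.e.\ $\widehat{\E}[\bar{y}_{i,post}^{(\vtau[1])}] + C < \widehat{\E}[\bar{y}_{i,post}^{(d)}] < \mu^{(\ell,\vtau,l)} - C$ for every $d$ in the relevant (already-explored together with $\ell$) set of interventions. On (i) the worst-case loss relative to the a-priori-optimal intervention is $\mu_i^{(\vtau[1])} - \mu_i^{(\ell)} \le \mu_i^{(\vtau[1])} - \mu_i^{(\vtau[k])}$, which is the numerator of the $L$-bound. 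On (ii) I would intersect with the clean event on which the PCR bound of~\Cref{thm:pcr-concentration} holds for every intervention, the Hoeffding bound on $\frac{1}{T_1}\sum_{t>T_0}\epsilon_{i,t}^{(d)}$ holds, and Unit Overlap already holds under $\vtau[1]$ and the earlier-explored interventions; on this event the empirical gap in the exploit test converts into a true-outcome gap $\bar{y}_{i,post}^{(\ell)} - \bar{y}_{i,post}^{(\vtau[1])} \ge C - 2\alpha(\delta_{PCR}) - 2\sigma\sqrt{2\log(1/\delta_{\epsilon_i})/T_1}$ (two copies of $\alpha$ and two noise terms, because---unlike the two-intervention case---\emph{both} sides of the test are estimates), and the upper sandwiching bound $\mu^{(\ell,\vtau,l)} - C$ together with the a priori preference ordering and the fighting-chance event $\cE_{C,i}$ from~\Cref{ass:k-interventions-knowledge} ensures that $\ell$ also dominates each remaining intervention in posterior mean. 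Expanding $\E_{\cP_{v_i}}[\bar{y}_{i,post}^{(\ell)} - \bar{y}_{i,post}^{(d')}\mid \widehat d_i = \ell]\,\Pr[\widehat d_i = \ell]$ along (i) and (ii), charging at most $2\delta$ to the complement of the clean event (outcomes are bounded, so the per-unit loss there is $O(1)$), requiring the result to be nonnegative and solving for $L$ gives exactly the claimed bound after maximizing over $i \in \cT^{(\vtau)}$; the factor $(1-\delta)\Pr[\cE_{C,i}]$ in the denominator is the probability that the fighting-chance event and the clean event both occur. The check for $\widehat d_i = \vtau[1]$ is easier, since $\vtau[1]$ is a priori optimal and conditioning on the event that every exploit test fails only reinforces the prior ordering.

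For the overlap step, BIC implies units follow their recommendations, so during iteration $\ell$ the sub-type-$\vtau$ units who actually take $\ell$ are exactly the sub-type-$\vtau$ units recommended $\ell$; each of the $B$ batches contributes one explore unit, a $\Theta(p_{\vtau})$ fraction of which are of sub-type $\vtau$, so the number of sub-type-$\vtau$ units taking $\ell$ grows linearly in $B$. Their latent vectors are i.i.d.\ sub-Gaussian inside the $r_{\vtau}$-dimensional latent subspace spanned by sub-type-$\vtau$ units, so---by the same sub-Gaussian-matrix argument (\Cref{thm:two-sided-bound-subgaussian-matrix}) used to justify~\Cref{ass:noise-knowledge}.3 for the first phase---a large enough $B$ makes these vectors span that subspace with probability at least $1-\delta$; by the extended Linear Span Inclusion assumption this is exactly the stated rank equality, and hence (\Cref{app:pcr}) the Unit Overlap Assumption for all sub-type-$\vtau$ units under $\ell$. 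A union bound over the $k-1$ explore interventions $\ell \in \{\vtau[2],\dots,\vtau[k]\}$, the first-phase event giving Unit Overlap under $\vtau[1]$ (which holds with probability $1-\delta$ by the hypothesis on $N_0$), and the clean events invoked in the BIC step yields Unit Overlap under every intervention with probability $1-\mathcal{O}(k\delta)$.

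The main obstacle is the BIC step, and within it the requirement that, conditioned on an exploit recommendation of $\ell$, intervention $\ell$ beats \emph{every} other intervention in posterior mean rather than just $\vtau[1]$. This is exactly why~\Cref{alg:ie-k-types} uses the conservative exploit rule---comparing only against the a-priori-best $\vtau[1]$ and insisting that every already-explored intervention be simultaneously sandwiched between $\widehat{\E}[\bar{y}^{(\vtau[1])}_{i,post}] + C$ and $\mu^{(\ell,\vtau,l)} - C$---rather than the natural ``previously-best'' rule, whose correctness is an open technical challenge as discussed in~\Cref{sec:extension}. Arranging the interplay of $\cE_{C,i}$, the PCR concentration bound, and the post-intervention noise bound so that the effective gap $C - 2\alpha(\delta_{PCR}) - 2\sigma\sqrt{2\log(1/\delta_{\epsilon_i})/T_1}$ stays strictly positive, and making the resulting $L$-bound computable by the principal after the first phase (by plugging the~\Cref{cor:high-probability-inverse} bound with $\epsilon = C/2$ and the first-phase sample size into $\delta_{PCR}$), is the delicate part; the remaining bookkeeping is a direct adaptation of the proof of~\Cref{lem:bic-two-types}.
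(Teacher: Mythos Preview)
Your proposal follows the paper's proof closely: the explore/exploit split for the recommendation $\widehat d_i=\ell$, the clean event combining PCR concentration, the Hoeffding bound on post-intervention noise, and first-phase overlap, the conversion of the exploit-test sandwich into a true-outcome gap $C-2\alpha(\delta_{PCR})-2\sigma\sqrt{2\log(1/\delta_{\epsilon_i})/T_1}$ (with two copies of each error because both sides of the test are estimated), and the rank/sub-Gaussian-matrix argument for the overlap step are all exactly what the paper does.

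The one place where your sketch diverges is the case $\widehat d_i=\vtau[1]$. You write that this ``is easier, since $\vtau[1]$ is a priori optimal and conditioning on the event that every exploit test fails only reinforces the prior ordering.'' The paper does \emph{not} dispatch this case by that intuition; it carries out a genuine computation. Writing $X=\bar y_{i,post}^{(\vtau[1])}-\bar y_{i,post}^{(\vtau[j])}$ and $\widehat{\cE}=\widehat{\cE}_{C,i}^{(\ell)}$, the paper uses
\[
\E[X\mid \neg\widehat{\cE}]\Pr[\neg\widehat{\cE}]
=\bigl(\mu_{v_i}^{(\vtau[1])}-\mu_{v_i}^{(\vtau[j])}\bigr)
+\E\bigl[\bar y_{i,post}^{(\vtau[j])}-\bar y_{i,post}^{(\vtau[1])}\mid \widehat{\cE}\bigr]\Pr[\widehat{\cE}],
\]
and then lower-bounds the second summand by the \emph{same} quantity $\bigl(C-2\alpha(\delta_{PCR})-2\sigma\sqrt{2\log(1/\delta_{\epsilon_i})/T_1}\bigr)(1-\delta)\Pr[\cE_i^{(\ell)}]-2\delta$ that appears in the denominator of the $L$-bound. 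So the $\vtau[1]$ case is not free: it reuses the exploit-event analysis from the $\ell$ case and requires the denominator of the $L$-bound to be nonnegative (equivalently, $C$ large enough relative to $\alpha(\delta_{PCR})$, the noise term, and $\delta$). Your informal justification points in the right direction---conditioning on $\neg\widehat{\cE}$ \emph{removes} the event on which $\vtau[j]$ empirically dominates $\vtau[1]$---but it does not by itself rule out that $\neg\widehat{\cE}$ carries positive evidence for some already-explored $\vtau[j]$ (e.g.\ via the upper-sandwich failure $\widehat{\E}[\bar y^{(\vtau[j])}_{i,post}]\ge \mu^{(\ell,\vtau,l)}-C$). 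You should plan to run the same decomposition and the same lower bound here rather than treat the case as trivial.
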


\begin{proof}
    According to our recommendation policy, unit $i$ is either recommended intervention $\vtau[1]$ or intervention $\ell$. We will prove that in either case, unit $i$ will comply with the principal's recommendation.

    Let $\widehat{\cE}_{C,i}^{(\ell)}$ denote the event that the exploit intervention for unit $i$ is intervention $\ell$. Formally, we have
    \begin{equation*}
        \widehat{\cE}_{C,i}^{(\ell)} = \left \{ \widehat{\E}[\bar{y}_{i,post}^{(\vtau[1])}] \leq \min_{1 < j < \ell} \widehat{\E}[\bar{y}_{i,post}^{(\vtau[j])}] - C \quad \text{and} \quad \max_{1 \leq j < \ell} \widehat{\E}[\bar{y}_{i,post}^{(\vtau[j])}] \leq \mu_{v_i}^{(\ell)} - C \right \}
    \end{equation*}
    \paragraph{When unit $i$ is recommended intervention $\ell$:} 
    When a unit $i \in \cT^{(\vtau)}$ is recommended intervention $\ell$, we argue that this unit will not switch to any other intervention $j \neq \ell$. Because of our ordering of the prior mean reward, any intervention $j > \ell$ has had no sample collected by a type $\vtau$ unit and $\mu_{v_i}^{(\vtau[j])} \leq \mu_{v_i}^{(\vtau[\ell])}$. Hence, we only need to focus on the cases where $j < \ell$. For the recommendation policy to be BIC, we need to show that 
    \begin{equation*}
        \E[\mu_{v_i}^{(\ell)} - \Bar{y}_{i,post}^{(\vtau[j])} | \widehat{d} = \ell] \Pr[\widehat{d} = \ell] \geq 0
    \end{equation*}
    There are two possible disjoint events under which unit $i$ is recommended intervention $\ell$: either $\ell$ is determined to be the 'exploit' intervention or unit $i$ is chosen as an 'explore' unit. Since being chosen as an explore unit does not imply any information about the rewards, we can derive that conditional on being in an 'explore' unit, the expected gain for unit $i$ to switch to intervention $j$ is simply $\mu_{v_i}^{(\ell)} - \mu_{v_i}^{(\vtau[j])}$. On the other hand, if intervention $\ell$ is the 'exploit' intervention, then event $\widehat{\cE}_{C,i}^{(\ell)}$ has happened. Hence, we can rewrite the left-hand side of the BIC condition above as:
    \begin{align*}
         \E[\mu_{v_i}^{(\ell)} - \Bar{y}_{i,post}^{(\vtau[j])} | \widehat{d} = \ell] \Pr[\widehat{d} = \ell] = \E[\mu_{v_i}^{(\ell)} - \Bar{y}_{i,post}^{(\vtau[j])} | \widehat{\cE}_{C,i}^{(\ell)}] \Pr[\widehat{\cE}_{C,i}^{(\ell)}] \left( 1- \frac{1}{L} \right) + \frac{1}{L} (\mu_{v_i}^{(\ell)} - \mu_{v_i}^{(j)})
    \end{align*}
    Rearranging the terms, we have the following lower bound on the phase length $L$ for the algorithm to be BIC for type $1$ units:
    \begin{equation*}
        L \geq 1 + \frac{\mu_{v_i}^{(\vtau[j])} - \mu_{v_i}^{(\ell)}}{\E[\mu_{v_i}^{(\ell)} - \Bar{y}_{i,post}^{(\vtau[j])} | \widehat{\cE}_{C,i}^{(\ell)}]\Pr[\widehat{\cE}_{C,i}^{(\ell)}]}
    \end{equation*}
    To complete the analysis, we need to lower bound the denominator of the expression above. Consider the following event $\cC$:
    \begin{equation*}
        \cC = \{ \forall j \in [k]: \abs{\widehat{\E}[\bar{y}_i^{(\vtau[j])}] - \E[\Bar{y}_{i,post}^{(\vtau[j])}]} \leq \alpha(\delta_{PCR}) \}
    \end{equation*}

    Let $\cE_0$ denote the event that ~\ref{ass:overlap} is satisfied for units of type $\vtau$ under intervention $\vtau[1]$. From \Cref{ass:k-interventions-knowledge}, we know that event $\cE_0$ occurs with probability at least $1 - \delta_0$ for some $\delta_0 \in (0,1)$. 
    Furthermore, let $\cE_{\epsilon_i}$ denote the event where the Chernoff-Hoeffding bound on the noise sequences $\{\epsilon_{i,t}^{(\vtau[j])} \}_{t=T_0+1}^T$ holds, that is with probability at least $1 - \delta_{\epsilon_i}$, we have:
    \begin{equation*}
        \abs{\frac{1}{T_1} \sum_{t = T_0 + 1}^T \epsilon_{i,t}^{(\vtau[j])}} \leq \sigma \sqrt{\frac{2 \log(1 / \delta_{\epsilon_i})}{T_1}}
    \end{equation*}
    Then, we can define a clean event $\cE$ where the events $\cC$, $\cE_0$ and $\cE_{\epsilon_i}$ happens simultaneously with probability at least $1 - \delta$, where $\delta = \delta_{\epsilon}^{PCR} + \delta_{\epsilon_i} + \delta_0$. 
    %
    
    Note that since $\abs{\Bar{y}_{i,post}^{(\vtau[j])}} \leq 1$, we can rewrite the denominator as:
    \begin{align*}
        &\quad \E[\mu_{v_i}^{(\ell)} - \Bar{y}_{i,post}^{(\vtau[j])} | \widehat{\cE}_{C,i}] \Pr[\widehat{\cE}_{C,i}^{(\ell)}]\\
        &= \E[\mu_{v_i}^{(\ell)} - \Bar{y}_{i,post}^{(\vtau[j])} | \widehat{\cE}_{C,i}^{(\ell)}, \cE] \Pr[\widehat{\cE}_{C,i}^{(\ell)}, \cE] +  \E[\mu_{v_i}^{(\ell)} - \Bar{y}_{i,post}^{(\vtau[j])} | \widehat{\cE}_{C,i}^{(\ell)}, \neg \cE] \Pr[\widehat{\cE}_{C,i}^{(\ell)}, \neg \cE]\\
        &\geq  \E[\mu_{v_i}^{(\ell)} - \Bar{y}_{i,post}^{(\vtau[j])} | \widehat{\cE}_{C,i}^{(\ell)}, \cE] \Pr[\widehat{\cE}_{C,i}^{(\ell)}, \cE] - 2\delta
    \end{align*}

    Define an event $\cE_i^{(\ell)}$ on the true average expected post-intervention outcome of each intervention as follows:
    \begin{align*}
        \cE_i^{(\ell)} \coloneqq \Bigg\{ &\Bar{y}_{i,post}^{(\vtau[1])} \leq \min_{1 < j < \ell} \Bar{y}_{i,post}^{(\vtau[j])} - C - 2\alpha(\delta_{PCR}) - 2\sigma \sqrt{\frac{2\log(1/\delta_{\epsilon_i})}{T_1}}\quad\\
        &\text{and} \quad \max_{1 \leq j < \ell} \Bar{y}_{i,post}^{(\vtau[j])} \leq \mu_{v_i}^{(\ell)} - C - 2\alpha(\delta_{PCR}) - 2\sigma \sqrt{\frac{2\log(1/\delta_{\epsilon_i})}{T_1}} \Bigg\}
    \end{align*}
    We can observe that under event $\cE$, event $\widehat{\cE}_{C,i}^{(\ell)}$ is implied by event $\cE_i^{(\ell)}$. Hence, we have: 
    \begin{align*}
        \Pr[\cE, \cE_i^{(\ell)}] \leq \Pr[\cE, \widehat{\cE}_{C,i}^{(\ell)}]
    \end{align*}
    We can rewrite the left-hand side as:
    \begin{align*}
        \Pr[\cC, \cE_i] = \Pr[\cC | \cE_i^{(\ell)}] \Pr[\cE_i^{(\ell)}] \geq (1 - \delta) \Pr[\cE_i^{(\ell)}]
    \end{align*}
    Substituting these expressions using $\cE_i^{(\ell)}$ for the ones using $\widehat{\cE}_{C,i}$ in the denominator gives:
    \begin{align*}
        \E[\mu_{v_i}^{(\ell)} - \Bar{y}_{i,post}^{(\vtau[j])} | \widehat{\cE}_{C,i}^{(\ell)}] \Pr[\widehat{\cE}_{C,i}^{(\ell)}] \geq \left(C- 2\alpha(\delta_{PCR}) - 2\sigma \sqrt{\frac{2\log(1/\delta_{\epsilon_i})}{T_1}} \right) (1 - \delta) \Pr[\cE_i^{(\ell)}] - 2\delta
    \end{align*}
    Applying this lower bound to the expression of $L$ and taking the maximum over type $1$ units, we have:
\begin{equation*}
    L \geq 1 + \max_{i \in \cI^{(1)}} \left\{ \frac{\mu_{v_i}^{(\vtau[1])} - \mu_{v_i}^{(\vtau[k])}}{\left( C - 2\alpha(\delta_{PCR}) - 2\sigma \sqrt{\frac{2\log(1/\delta_{\epsilon_i})}{T_1}} \right) (1 - \delta)\Pr \left[ \cE_{C,i}^{(\ell)} \right] - 2 \delta }\right\}
\end{equation*}
     
    \paragraph{When unit $i$ is recommended intervention $\vtau[1]$:} When unit $i$ gets recommended intervention $\vtau[1]$, they know that they are not in the explore group. Hence, the event $\widehat{\cE}_{C,i}$ did not happen, and the BIC condition, in this case, can be written as: for any intervention $\vtau[j] \neq \vtau[1]$,
    \begin{equation*}
        \E[\Bar{y}_{i,post}^{(\vtau[1])} - \Bar{y}_{i,post}^{(\vtau[j])} | \neg \widehat{\cE}_{C,i}^{(\ell)} ] \Pr[\neg \widehat{\cE}_{C,i}^{(\ell)}] \geq 0
    \end{equation*}
    Similar to the previous analysis on the recommendation of intervention $\ell$, it suffices to only consider interventions $\vtau[j] < \ell$. We have:
    \begin{align*}
        \E[\Bar{y}_{i,post}^{(\vtau[1])} - \Bar{y}_{i,post}^{(\vtau[j])} | \neg \widehat{\cE}_{C,i}^{(\ell)}]\Pr[\neg \widehat{\cE}_{C,i}^{(\ell)}] &= \E[\Bar{y}_{i,post}^{(\vtau[1])} - \Bar{y}_{i,post}^{(\vtau[j])}] - \E[\Bar{y}_{i,post}^{(\vtau[1])} - \Bar{y}_{i,post}^{(\vtau[j])} | \widehat{\cE}_{C,i}^{(\ell)}] \Pr[\widehat{\cE}_{C,i}^{(\ell)}]\\
        &= \mu_{v_i}^{(\vtau[1])} - \mu_{v_i}^{(\vtau[j])} + \E[\Bar{y}_{i,post}^{(\vtau[j])} - \Bar{y}_{i,post}^{(\vtau[1])} | \widehat{\cE}_{C,i}^{(\ell)}] \Pr[\widehat{\cE}_{C,i}^{(\ell)}]
    \end{align*}
    By definition, we have $\mu_{v_i}^{(\vtau[1])} \geq \mu_{v_i}^{(\vtau[j])}$. Hence, it suffices to show that for any intervention $1 < \vtau[j] < \ell$, we have:
    \begin{equation*}
        \E[\Bar{y}_{i,post}^{(\vtau[j])} - \Bar{y}_{i,post}^{(\vtau[1])} | \widehat{\cE}_{C,i}^{(\ell)}] \Pr[\widehat{\cE}_{C,i}^{(\ell)}] \geq 0
    \end{equation*} 
    Observe that with the event $\cE$ defined above, we can write:
    \begin{align*}
        &\quad \E[\Bar{y}_{i,post}^{(\vtau[j])} - \Bar{y}_{i,post}^{(\vtau[1])} | \widehat{\cE}_{C,i}^{(\ell)}] \Pr[\widehat{\cE}_{C,i}^{(\ell)}]\\
        &= \E[\Bar{y}_{i,post}^{(\vtau[j])} - \Bar{y}_{i,post}^{(\vtau[1])} | \widehat{\cE}_{C, i}^{(\ell)}, \cE] \Pr[\widehat{\cE}_{C,i}^{(\ell)}, \cE] + \E[\Bar{y}_{i,post}^{(\vtau[j])} - \Bar{y}_{i,post}^{(\vtau[1])} | \widehat{\cE}_{C,i}^{(\ell)}, \neg\cC] \Pr[\widehat{\cE}_{C,i}^{(\ell)}, \neg \cE] \\
        &\geq \E[\Bar{y}_{i,post}^{(\vtau[j])} - \Bar{y}_{i,post}^{(\vtau[1])} | \widehat{\cE}_{C,i}^{(\ell)}, \cE] \Pr[\widehat{\cE}_{C,i}^{(\ell)}, \cE] - 2\delta
    \end{align*}
    When event $\widehat{\cE}_{C,i}^{(\ell)}$ happens, we know that $\widehat{\E}[\bar{y}_{i,post}^{(\vtau[j])}] \geq \widehat{\E}[\bar{y}_{i,post}^{(\vtau[1])}] + C$. Furthermore, when event $\cE$ happens, we know that $\Bar{y}_{i,post}^{(\vtau[j])} \geq \widehat{\E}[\bar{y}_{i,post}^{(\vtau[j])}] - \alpha(\delta_{PCR}) - \sigma \sqrt{\frac{2\log(1/\delta_{\epsilon_i})}{T_1}}$ and $\widehat{\E}_[\bar{y}_{i,post}^{(\tau[1])}] \geq \Bar{y}_{i,post}^{(\vtau[1])} - \alpha(\delta_{PCR}) - \sigma \sqrt{\frac{2\log(1/\delta_{\epsilon_i})}{T_1}}$. Hence, when these two events $\widehat{\cE}_{C,i}^{(\ell)}$ and $\cE$ happen simultaneously, we have $\Bar{y}_{i,post}^{(\vtau[j])} \geq \Bar{y}_{i,post}^{(\vtau[1])} + C  - 2\alpha(\delta_{PCR}) - 2\sigma \sqrt{\frac{2\log(1/\delta_{\epsilon_i})}{T_1}}$. 
    Therefore, the lower bound can be written as:
    \begin{align*}
        \E[\Bar{y}_{i,post}^{(\vtau[j])} - \Bar{y}_{i,post}^{(\vtau[1])} | \widehat{\cE}_{C,i}^{(\ell)}] \Pr[\widehat{\cE}_{C,i}^{(\ell)}] &\geq \left(C  - 2\alpha(\delta_{PCR}) - 2\sigma \sqrt{\frac{2\log(1/\delta_{\epsilon_i})}{T_1}} \right) \Pr[\widehat{\cE}_{C,i}^{(\ell)}, \cE] - 2\delta
    \end{align*}
    Similar to the previous analysis when unit $i$ gets recommended intervention $\ell$, we have:
    \begin{align*}
        \E[\Bar{y}_{i,post}^{(\vtau[j])} - \Bar{y}_{i,post}^{(\vtau[1])} | \widehat{\cE}_{C,i}] \Pr[\widehat{\cE}_{C,i}] \geq \left(C  - 2\alpha(\delta_{PCR}) - 2\sigma \sqrt{\frac{2\log(1/\delta_{\epsilon_i})}{T_1}} \right) (1 - \delta) \Pr[\cE_i^{(\ell)}] - 2\delta
    \end{align*}
    Choosing a large enough $C$ such that the right-hand side is non-negative, we conclude the proof.
\end{proof}
\section{Appendix for Section~\ref{sec:hyp}: Testing whether the Unit Overlap Assumption holds}
\test*
\begin{proof}
    \begin{equation*}
    \begin{aligned}
        |\widehat{\E} [\Bar{y}_{i,pre''}] - \Bar{y}_{i,pre''}| 
        &\leq |\widehat{\E} [\Bar{y}_{i,pre''}] - \E [\Bar{y}_{i,pre''}]| + |\E [\Bar{y}_{i,pre''}]- \Bar{y}_{i,pre''}|\\
        &\leq \alpha(\delta) + \left|\frac{2}{T_0} \sum_{t=T_0/2 + 1}^{T_0} \epsilon_{i,t}^{(0)} \right|\\
        %
    \end{aligned}
    \end{equation*}
    with probability at least $1 - \mathcal{O}(\delta)$, where the second inequality follows from~\Cref{thm:adaptive-pcr}. 
    The result follows from a Hoeffding bound.
\end{proof}

\begin{theorem}\label{thm:test-formal}
    Under~\Cref{ass:lsi2}, if 
    \begin{equation}\label{eq:cond1}
        \frac{r^{3/2} \sqrt{\log(T_0 |\cI|)}}{\|\widetilde{\omega}^{(n,0)}\|_2 \cdot \min\{T_0, |\cI|, T_0^{1/4}|\cI|^{1/2}\}} = o(1)
    \end{equation}
    and 
    \begin{equation}\label{eq:cond2}
        \frac{1}{\sqrt{T_1} \|\widetilde{\omega}^{(n,0)}\|_2} \sum_{t=T_0+1}^T \sum_{i \in \cI} \E[y_{i,t}^{(0)}] \cdot (\widehat{\omega}^{(n,0)}[i] - \widetilde{\omega}^{(n,0)}[i]) = o_p(1)
    \end{equation}
    then as $|\cI|, T_0, T_1 \rightarrow \infty$ the Asymptotic Hypothesis Test falsely accepts the hypothesis with probability at most $5\%$, where $\widehat{\sigma}^2$ is defined as in Equation (4) and $\widehat{\omega}^{(n,0)}$ is defined as in Equation (5) in~\citet{agarwal2020synthetic}. 
\end{theorem}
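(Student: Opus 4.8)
Assume the \ref{ass:overlap} holds for the test unit $n$ (this is the event under which a false acceptance can occur), and let $\bar\sigma$ denote the true noise standard deviation (which $\widehat\sigma$ estimates). The plan is to show that the test statistic converges in distribution to $|Z|$ with $Z\sim\cN(0,1)$, so that the acceptance probability converges to $\bP(|Z|>z_{0.95})\le 5\%$. The first ingredient is that, under \Cref{ass:lsi2}, \Cref{ass:lfm}, and the \ref{ass:overlap}, the estimation problem underlying the Asymptotic Hypothesis Test of \Cref{sec:ass} (using the donors to form $\widehat{\E}[\bar{y}_{n,pre''}]$) satisfies all of the structural hypotheses of the synthetic interventions estimator of \citet{agarwal2020synthetic}; moreover, \eqref{eq:cond1} and \eqref{eq:cond2} are precisely the rate conditions under which the PCR truncation/bias terms and the weight-estimation-error term (the left-hand side of \eqref{eq:cond2}), respectively, become asymptotically negligible after the relevant rescaling. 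I would therefore invoke their asymptotic-normality guarantee to conclude
\[
\frac{\sqrt{T_1}}{\bar\sigma\,\|\widetilde{\omega}^{(n,0)}\|_2}\,\bigl(\widehat{\E}[\bar{y}_{n,pre''}] - \E[\bar{y}_{n,pre''}]\bigr) \;\xrightarrow{d}\; \cN(0,1).
\]

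Next I would pass from the unobservable population target $\E[\bar{y}_{n,pre''}]$ to the observed quantity $\bar{y}_{n,pre''}$ appearing in the test. Since $\bar{y}_{n,pre''} - \E[\bar{y}_{n,pre''}] = \tfrac{2}{T_0}\sum_{t=T_0/2+1}^{T_0}\epsilon_{n,t}^{(0)}$, this difference is $O_p(\bar\sigma/\sqrt{T_0})$, so after multiplying by the rescaling factor $\sqrt{T_1}/(\bar\sigma\,\|\widetilde{\omega}^{(n,0)}\|_2)$ it is $O_p\!\bigl(\sqrt{T_1/T_0}\,/\,\|\widetilde{\omega}^{(n,0)}\|_2\bigr)$, which is $o_p(1)$ because the hypothesis that $\|\widetilde{\omega}^{(n,0)}\|_2$ is sufficiently large is exactly the requirement that $T_1/(T_0\|\widetilde{\omega}^{(n,0)}\|_2^2)\to 0$ (this is also what the factor $\|\widetilde{\omega}^{(n,0)}\|_2$ in the denominator of \eqref{eq:cond1} is buying). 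By Slutsky's theorem, $\frac{\sqrt{T_1}}{\bar\sigma\,\|\widetilde{\omega}^{(n,0)}\|_2}\bigl(\widehat{\E}[\bar{y}_{n,pre''}] - \bar{y}_{n,pre''}\bigr) \xrightarrow{d} \cN(0,1)$.

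Finally I would replace the non-computable normalization $\bar\sigma\,\|\widetilde{\omega}^{(n,0)}\|_2$ by the computable $\widehat{\sigma}\,\|\widehat{\omega}^{(n,0)}\|_2$ used in the test. Here I would use the consistency statements already available from \citet{agarwal2020synthetic}: $\widehat{\sigma}\xrightarrow{p}\bar\sigma$ for the SI variance estimator, and $\|\widehat{\omega}^{(n,0)}\|_2/\|\widetilde{\omega}^{(n,0)}\|_2\xrightarrow{p}1$, the latter following from \eqref{eq:cond2} together with $\|\widetilde{\omega}^{(n,0)}\|_2\to\infty$. Another application of Slutsky's theorem, followed by the continuous mapping theorem applied to $x\mapsto|x|$, gives $\frac{\sqrt{T_1}}{\widehat{\sigma}\,\|\widehat{\omega}^{(n,0)}\|_2}\bigl|\widehat{\E}[\bar{y}_{n,pre''}] - \bar{y}_{n,pre''}\bigr| \xrightarrow{d} |Z|$ with $Z\sim\cN(0,1)$, and hence the false-acceptance probability converges to $\bP(|Z|>z_{0.95})\le 5\%$, as claimed.

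The main obstacle is the first ingredient: one must check carefully that the estimator and data layout of \Cref{sec:ass} really do satisfy the hypotheses of the asymptotic-normality theorem of \citet{agarwal2020synthetic}, and that \eqref{eq:cond1} and \eqref{eq:cond2} are exactly strong enough to annihilate every term in their error decomposition except the donor-noise term after the $\sqrt{T_1}/(\bar\sigma\,\|\widetilde{\omega}^{(n,0)}\|_2)$ rescaling. The one ingredient with no direct analogue in the off-the-shelf SI analysis is the extra test-unit noise term $\bar{y}_{n,pre''}-\E[\bar{y}_{n,pre''}]$, which arises because the test compares a prediction against a realized outcome rather than a population mean; isolating the precise growth condition on $\|\widetilde{\omega}^{(n,0)}\|_2$ that renders it negligible is the delicate part of the argument.
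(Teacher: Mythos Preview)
Your proposal is correct and follows essentially the same route as the paper's proof: invoke the asymptotic normality result from \citet{agarwal2020synthetic} (their Theorem~3) to obtain the centered-and-scaled limit, handle the extra additive noise $\bar{y}_{n,pre''}-\E[\bar{y}_{n,pre''}]=\bar{\epsilon}_{n,pre''}$ by Slutsky, and then replace the unknown normalization $\sigma'\|\widetilde{\omega}^{(n,0)}\|_2$ with the computable $\widehat{\sigma}\|\widehat{\omega}^{(n,0)}\|_2$ via the continuous mapping theorem and Slutsky. Two small remarks: (i) the paper obtains $\|\widehat{\omega}^{(n,0)}\|_2\to\|\widetilde{\omega}^{(n,0)}\|_2$ directly from the consistency $\widehat{\omega}^{(n,0)}\xrightarrow{p}\widetilde{\omega}^{(n,0)}$ supplied by \citet{agarwal2020synthetic}, rather than from \eqref{eq:cond2} as you suggest (indeed \eqref{eq:cond2} controls only a particular linear functional of $\widehat{\omega}-\widetilde{\omega}$, not its norm); (ii) you are more careful than the paper about the rescaled noise step, correctly noting that one needs $\frac{\sqrt{T_1}}{\|\widetilde{\omega}^{(n,0)}\|_2}\,\bar{\epsilon}_{n,pre''}=o_p(1)$ rather than merely $\bar{\epsilon}_{n,pre''}\xrightarrow{p}0$.
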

Condition (\ref{eq:cond1}) requires that the $\ell_2$ norm of $\widetilde{\omega}^{(n,0)}$ is sufficiently large, and condition (\ref{eq:cond2}) requires that the estimation error of $\widehat{\omega}^{(n,0)}$ decreases sufficiently fast. 
See Section 5.3 of~\citet{agarwal2020synthetic} for more details. 
\begin{proof}
    We use $x \overset{p}{\to} y$ (resp. $x \overset{d}{\to} y$) if $x$ converges in probability (resp. distribution) to $y$. 
    Let $\sigma'$ denote the true standard deviation of $\epsilon_{i,t}$. 
    By Theorem 3 of~\citet{agarwal2020synthetic} we know that as $|\cI|, T_0, T_1 \rightarrow \infty$,
    \begin{enumerate}
        \item $\frac{\sqrt{T_1}}{\sigma' \|\widetilde{\omega}^{(n,0)}\|_2} (\widehat{\E} [\Bar{y}_{n,pre''}] - \E[\Bar{y}_{n,pre''}]) \overset{d}{\to} \mathcal{N}(0,1)$
        \item $\widehat{\sigma} \overset{p}{\to} \sigma'$ and
        \item $\widehat{\omega}^{(n,0)} \overset{p}{\to} \widetilde{\omega}^{(n,0)}$.
    \end{enumerate}
    By the continuous mapping theorem, we know that $\frac{1}{\widehat{\sigma}} \overset{p}{\to} \frac{1}{\sigma'}$ and $\frac{1}{\|\widehat{\omega}^{(n,0)}\|_2} \overset{p}{\to} \frac{1}{\|\widetilde{\omega}^{(n,0)}\|_2}$. 
    We can rewrite $\widehat{\E}[\Bar{y}_{n,pre''}] - \E [\Bar{y}_{n,pre''}]$ as $\widehat{\E} [\Bar{y}_{n,pre''}] - \Bar{y}_{n,pre''} + \Bar{\epsilon}_{n,pre''}$, and we know that $\Bar{\epsilon}_{n,pre''} \overset{p}{\to} 0$. 
    Applying Slutksy's theorem several times obtains the desired result. 
\end{proof}

\section{Appendix for Section~\ref{sec:simulation}: Numerical simulations}
\label{appendix:simulations}

\paragraph{Experimental Setup.}
If unit $i$ is of type $1$ (resp. type $0$), we generate a latent factor $\mathbf{v}_i = [0 \; \cdots \; 0 \; v_i[1] \; \cdots \; v_i[r/2]]$ (resp. $\mathbf{v}_i = [v_i[1] \; \cdots \; v_i[r/2] \; 0 \; \cdots \; 0]$), where $\forall j \in [r/2]: v_i[j] \sim \unif(0, 1)$. 
We consider an online setting where $500$ units of alternating types arrive sequentially.\footnote{The alternating arrival of different unit types is done only for convenience. The order of unit arrival is unknown to the algorithms.} 
We consider a pre-intervention time period of length $T_0 = 100$ with latent factors as follows:
\begin{itemize}[labelwidth=1.5em, labelsep=0.5em, leftmargin=1.5em]
    \item If $t \mod 2 = 0$, generate latent factor\\ $\vu_t^{(0)} = [0 \cdots 0 \; u_t^{(0)}[1] \cdots u_t^{(0)}[r/2]]$, where $\forall \ell \in [r/2]: u_t^{(0)}[\ell] \sim \unif[0.25, 0.75]$. 
    \item If $t \mod 2 = 1$, generate latent factor\\ $\vu_t^{(0)} = [u_t^{(0)}[1] \cdots u_t^{(0)}[r/2] \; 0 \cdots 0] $, where $\forall \ell \in [r/2]: u_t^{(0)}[\ell] \sim \unif[0.25, 0.75]$. 
\end{itemize}
We consider a post-intervention period of length $T_1 = 100$ and generate the following post-intervention latent factors:
\begin{itemize}[labelwidth=1.5em, labelsep=0.5em, leftmargin=1.5em]
    \item If $d=0$, we set $\vu_t^{(d)} = [u_t^{(d)}[1] \; \cdots \; u_t^{(d)}[r]]: t \in [T_0 + 1, T]$, where $\forall \ell \in [r]: u_t^{(d)}[\ell] \sim \unif[0,1]$.
    \item If $d=1$, we set $\vu_t^{(d)} = [u_t^{(d)}[1] \; \cdots \; u_t^{(d)}[r]]: t \in [T_0 + 1, T]$, where $\forall \ell \in [r]: u_t^{(d)}[\ell] \sim \unif[-1,0]$.
\end{itemize}
Finally, outcomes are generated by adding independent Gaussian noise $\epsilon_{i,t}^{(d)} \sim \cN(0, 0.01)$ to each inner product of latent factors. 

\begin{figure}[ht]
    \centering
    \begin{subfigure}[b]{0.48\textwidth}
         \centering
        \includegraphics[width=\textwidth]{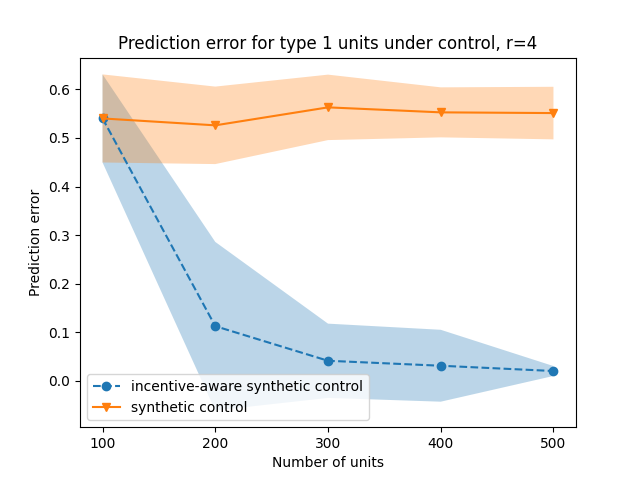}
        \caption{$4$-dimensional latent factor}
    \end{subfigure}
    \hfill
    \begin{subfigure}[b]{0.48\textwidth}
         \centering
        \includegraphics[width=\textwidth]{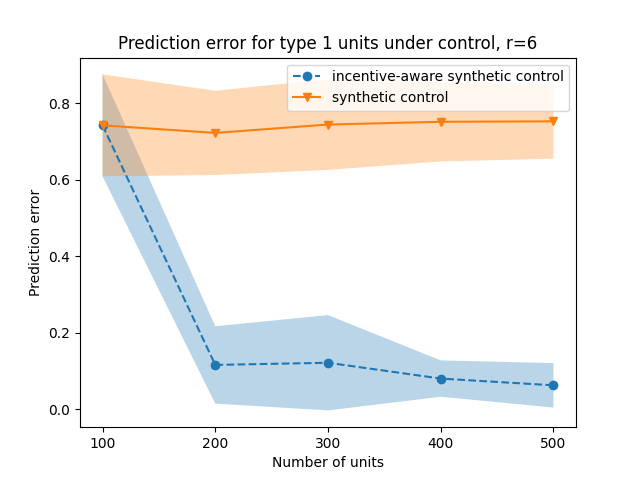}
        \caption{$6$-dimensional latent factor}
    \end{subfigure}
    \hfill
    \begin{subfigure}[b]{0.48\textwidth}
         \centering
        \includegraphics[width=\textwidth]{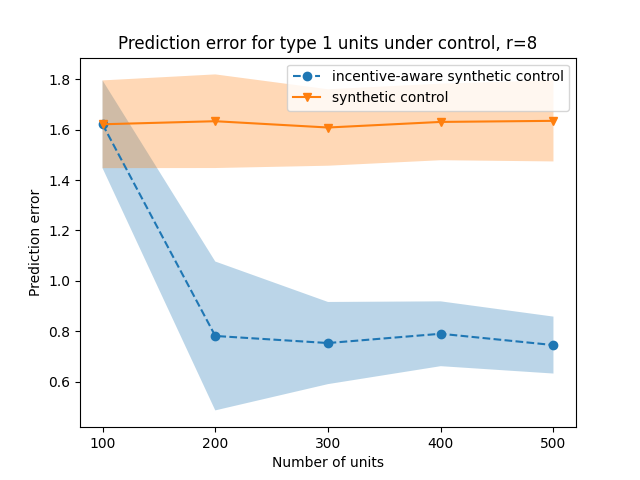}
        \caption{$8$-dimensional latent factor}
    \end{subfigure}
    \quad
    \begin{subfigure}[b]{0.48\textwidth}
         \centering
        \includegraphics[width=\textwidth]{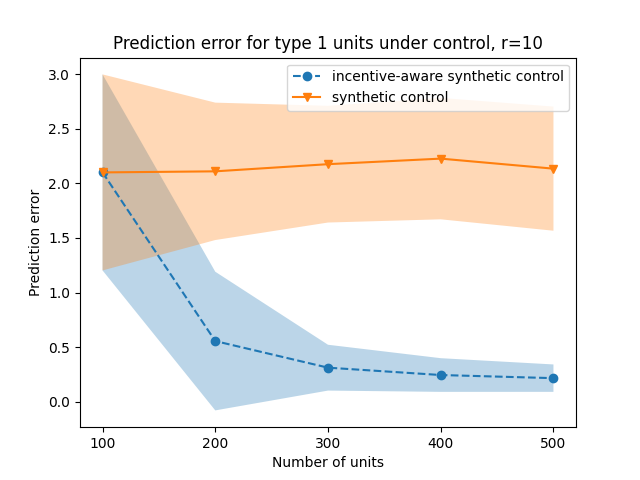}
        \caption{$10$-dimensional latent factor}
    \end{subfigure}
    
    \caption{Counterfactual estimation error for units of type $1$ under control using \Cref{alg:ie_noise} (blue) and synthetic control without incentives (orange) with increasing number of units and different lengths of the latent factors.
    Results are averaged over $50$ runs, with the shaded regions representing one standard deviation. }
    \label{fig:prediction_error_2_types_varying_dimensions}
\end{figure}

\paragraph{Non-compliance setting}
Following the standard in the literature on IE, we assume that all units are Bayesian-rational (\Cref{def:BIC}). That is, given a recommendation $\widehat{d}_i$, unit $i$ will select an intervention $d_i$ that maximizes their utility in expectation over their prior conditional on the recommendation $\widehat{d}_i$. However, in practice, units may exhibit sub-rational behaviors, where they neglect the principal's recommendation and act according to their initial prior. In \Cref{fig:prediction_error_non_compliance}, we examine the performance of \Cref{alg:ie_noise} in the presence of such non-compliance behavior. 

We use the same data-generating process as described above. When running \Cref{alg:ie_noise}, we let a random $p-$proportion of the units ignore the principal's recommendation and choose their initially preferred intervention, \ie type 1 units will select the treatment and type 0 units will select the control. In \Cref{fig:prediction_error_non_compliance}, we vary the non-compliance parameter $p \in \{0, 0.2, 0.4, 0.6, 0.8 \}$ and calculate the counterfactual estimation error for units of type $1$ under control. As the non-compliance level $p$ increases, it becomes more challenging to persuade units to explore, and the counterfactual estimation error increases.  
\begin{figure}[ht]
    \centering
    \includegraphics[width=0.5\linewidth]{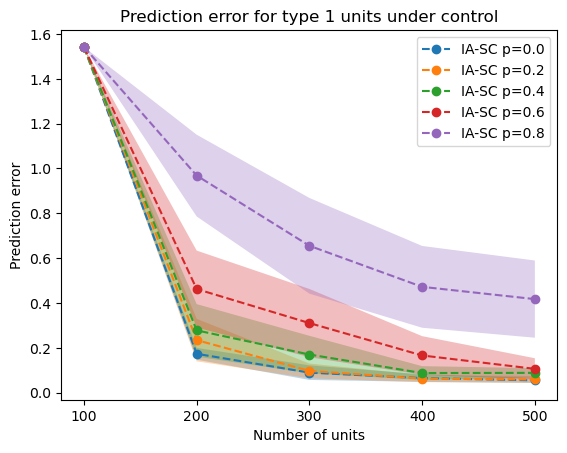}
    \caption{Counterfactual estimation error for units of type $1$ under control using \Cref{alg:ie_noise} with increasing number of units and different non-compliance level $p \in \{0.0, 0.2, 0.4, 0.6, 0.8 \}$. Results are averaged over $50$ runs, with the shaded regions representing one standard error.}
    \label{fig:prediction_error_non_compliance}
\end{figure}

\paragraph{Violation of Linear Span Inclusion assumption~\ref{ass:span}} Similar to prior work in the literature on synthetic control, our theoretical framework relies on the linear span inclusion assumption (\Cref{ass:span}), which is necessary to ensure the counterfactual estimation procedure in \Cref{app:pcr} is accurate. In \Cref{fig:prediction_error_LF_violation}, we investigate the performance of our algorithm when this assumption does not hold. 

Specifically, we make the following modifications to the data-generating process. If unit $i$ is of type $1$ (resp. type $0$), we generate a latent factor $\mathbf{v}_i = [0 \; \cdots \; 0 \; v_i[1] \; \cdots \; v_i[(r-1)/2] \; 1]$ (resp. $\mathbf{v}_i = [v_i[1] \; \cdots \; v_i[(r-1)/2] \; 0 \; \cdots \; 0 \; 1]$), where $\forall j \in [(r-1)/2]: v_i[j] \sim \unif(0, 1)$. 
In the pre-intervention time period of length $T_0 = 100$, the time latent factors are generated with a padded entry $0$ at the end as follows: 

\begin{itemize}[labelwidth=1.5em, labelsep=0.5em, leftmargin=1.5em]
    \item If $t \mod 2 = 0$, generate latent factor $\vu_t^{(0)} = [\underbrace{0 \cdots 0}_{\nicefrac{(r-1)}{2}} \; \underbrace{u_t^{(0)}[1] \cdots u_t^{(0)}[(r-1)/2]}_{\nicefrac{(r-1)}{2}} \; 0]$, where $\forall \ell \in [(r-1)/2]: u_t^{(0)}[\ell] \sim \unif[0.25, 0.75]$. 
    \item If $t \mod 2 = 1$, generate latent factor $\vu_t^{(0)} = [\underbrace{u_t^{(0)}[1] \cdots u_t^{(0)}[(r-1)/2]}_{\nicefrac{(r-1)}{2}} \; \underbrace{0 \cdots 0}_{\nicefrac{(r-1)}{2}} \; 0] $, where $\forall \ell \in [(r-1)/2]: u_t^{(0)}[\ell] \sim \unif[0.25, 0.75]$. 
\end{itemize}

In the post-intervention period of length $T_1 = 100$, the time-and-intervention latent factors are generated as follows.
\begin{itemize}[labelwidth=1.5em, labelsep=0.5em, leftmargin=1.5em]
    \item If $d=0$, with probability $p$, we set $\vu_t^{(d)} = [u_t^{(d)}[1] \; \cdots \; u_t^{(d)}[r]]: t \in [T_0 + 1, T]$, where $\forall \ell \in [r]: u_t^{(d)}[\ell] \sim \unif[0,1]$; and with probability $1 - p$, we set $\vu_t^{(d)} = [u_t^{(d)}[1] \; \cdots \; u_t^{(d)}[r-1] \; 0]: t \in [T_0 + 1, T]$, where $\forall \ell \in [r-1]: u_t^{(d)}[\ell] \sim \unif[0,1]$.
    \item If $d=1$, with probability $p$, we set $\vu_t^{(d)} = [u_t^{(d)}[1] \; \cdots \; u_t^{(d)}[r]]: t \in [T_0 + 1, T]$, where $\forall \ell \in [r]: u_t^{(d)}[\ell] \sim \unif[-1,0]$; with probability $1-p$, we set $\vu_t^{(d)} = [u_t^{(d)}[1] \; \cdots \; u_t^{(d)}[r-1] \; 0]: t \in [T_0 + 1, T]$, where $\forall \ell \in [r-1]: u_t^{(d)}[\ell] \sim \unif[-1,0]$.
\end{itemize}

That is, with probability $p$, the post-intervention latent factor $\vu_t^{(d)}$ does not lie in the span of the pre-intervention latent factors $\{\vu_1^{(0)}, \cdots \vu_{T_0}^{(0)} \}$. In \Cref{fig:prediction_error_LF_violation}, as the probability $p$ of post-intervention latent factors not lying in the pre-intervention span increases, \ie more severe violation of \Cref{ass:span}, the counterfactual estimation error does not decrease to zero and instead may increase as more units arrive. 
\begin{figure}[ht]
    \centering
    \includegraphics[width=0.5\linewidth]{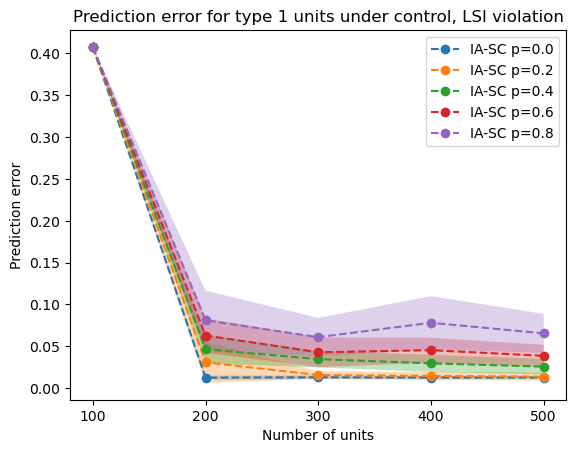}
    \caption{Counterfactual estimation error for units of type $1$ under control using \Cref{alg:ie_noise} with increasing number of units and probability of linear span inclusion violation $p \in \{0, 0.2, 0.4, 0.6, 0.8 \}$.
    Results are averaged over $50$ runs, with the shaded regions representing one standard error. }
    \label{fig:prediction_error_LF_violation}
\end{figure}

\end{document}